\newcommand{\tuple}[1]{\mbox{$\langle #1 \rangle$}}
\newtheorem{thm}{Theorem}
\newtheorem{defn}[thm]{Definition}
\newtheorem{cor}[thm]{Corollary}
\newtheorem{exmp}[thm]{Example}
\newtheorem{rem}[thm]{Remark}
\newtheorem{lem}[thm]{Lemma}
\newtheorem{notation}[thm]{Notation}
\newtheorem{proposition}[thm]{Proposition}
\newcommand{\sO}{{\bf 0}}
\newcommand{\sI}{{\bf 1}}
\begin{document}
\makeRR   
\tableofcontents

\section{Introduction}
In the System-on-Chip design domain, the trend is component based design. A new design is assembled from IP components which are interconnected through a network of point-to-point communication channels. In this area, the problem of long wire communication latency has emerged as a limitation \cite{Matzke97}. A channel is not able to forward a datum in a single step but requires many.

To solve this problem, a component based design has to be provided with its scheduling to take care of the latency issues. Luca Carloni {\em et al.} have proposed the theory of Latency Insensitive Design (LID) \cite{LCKMMASV} as a dynamic scheduling solution but LID is greedy in buffering element. From our initial tentative to improve the LID \cite{EURASIP06}, we have established that a component based design along with its latency issues can be modeled using Marked/Event graph (MG) \cite{CHEP_1971}. Consequently, from the challenge of scheduling a System-on-Chip design, we arrive to the more general and abstract challenge of scheduling an MG with respect to communication and computation latencies. 

To enter this challenge, we have developed the proposed algorithm which provide a statically computed execution to any live and strongly connected MG. The proposed algorithm can eventually be applied to any system (software, hardware, production chain) which can be abstracted as an MG with fixed communication and computation latencies.

It is clear from historical results \cite{QuadratCohenBaccelli92,CarlierChretienne,Ramchandani74} that a live MG always admits an execution irrespectively of the communication or computation latencies. The proposed algorithm consists in computing the {\em best} ASAP execution where execution rate is maximal and place sizes are minimal. These properties match with the requirements encountered in the domain of System-on-Chip design \cite{LCKMMASV}. Lastly, the proposed algorithm is extended to simply connected MGs. However, the validity of the computed execution relies on the on-demand availability of tokens on global inputs.


Except the proposed algorithm itself, the main contribution of the paper is the characterization of this {\em best} ASAP execution. From the initial marking, an guided execution shall lead to different markings. From each of these markings, the ASAP execution will be different and token accumulation in the places may vary. For example, in a given ASAP execution, a transition may fire all its tokens in sequence and then stall for the rest of the period, while in another ASAP execution, the same transition is fired every two instants. The first example promotes tokens accumulation. Within this set of ASAP executions, the one with the smallest tokens accumulation is called the {\em balanced ASAP execution}. This execution always exists and can be analytically computed for any MG. In a balanced ASAP execution, the binary words that represent the activities of the transitions through time ($\sI$ for activity, $\sO$ for inactivity) are all balanced. 


\paragraph{Related works}
Marked graphs is a well studied domain for more than forty years and many works are closely related to ours. \cite{deselEsparza} state the notions but also some results used in this paper. \cite{CarlierChretienne} and \cite{QuadratCohenBaccelli92} are the bases of our scheduling theory.

Historically, some works related to the notion of balancedness can be found in a publication of Jean Bernoulli in the 18$^{\rm th}$ century \cite{bernouilli}. Then they appeared as Christoffel words in the 19$^{\rm th}$ century \cite{christoffel}. More recently Christoffel words appear again in \cite{christoffelbylaurier}, and as Sturm words in \cite{sturm,sturmbyallauzen}, or as mechanical words in \cite{mechanicalword}. 
\cite{ALCOW} records the history of balanced binary words.

In \cite{Alanyali95analysisof,hAJEK1}, balanced binary words are used to balance load of Erlang network. In \cite{347482}, the authors try to minimize the data lose in a graph with fixed storage capacities by optimally routing data trough communication channels using balanced binary words.

\paragraph{Outline}
Section \ref{sec_overview} runs the proposed algorithm on an example.
Section \ref{section_MG} presents the MG definition followed by all the required results about static analysis of MG.
Balanced binary words are presented and studied in Section \ref{section_balanced_binary_word}. 
The proposed algorithm is presented in Section \ref{section_bal_stat_sched}, followed by the proofs of correctness and then
Section \ref{section_conclusion} discusses our results.

\section{Algorithm overview}
\label{sec_overview}
This section gives an informal overview of the major steps of the proposed algorithm. The vocabulary used is formally defined below. However it mostly refers to the usual and accepted definitions of the same in literature.

\paragraph{Algorithm inputs and outputs}
The proposed algorithm inputs are the live and strongly connected MG and its initial marking (initial token positions).
The proposed algorithm outputs are the computed execution and the size of the places for this execution.

\paragraph{Latency expansion and $\mathds{N}$-equalization}
In the MG presented in Figure \ref{figure_MGwithLatency}-a, the transition (rectangle) on the top has a computation latency of $1$. The right-most place (oval) has a communication latency of $3$. Usually, a token goes through a transition instantaneously and through a place in one step. When the computation latency is different from $0$, the tokens are kept for some time in the transition. Similarly, the tokens are kept longer in a place when its communication latency is more than $1$.

\begin{figure}[hbpt]
 \begin{center}
   {\includegraphics[scale=0.6]{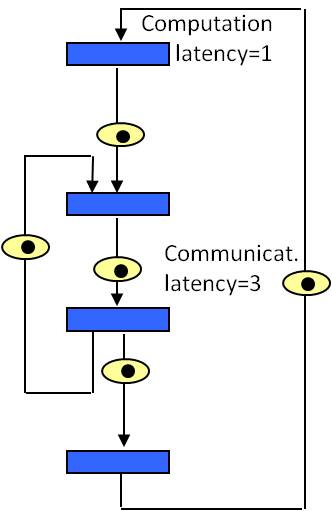}\,\,\,\,\,\,\,\,\,\,\,\,\,\,\,\,
    \includegraphics[scale=0.6]{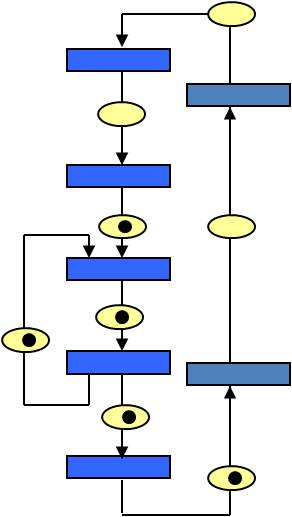}             \,\,\,\,\,\,\,\,\,\,\,\,\,\,\,\,
    \includegraphics[scale=0.6]{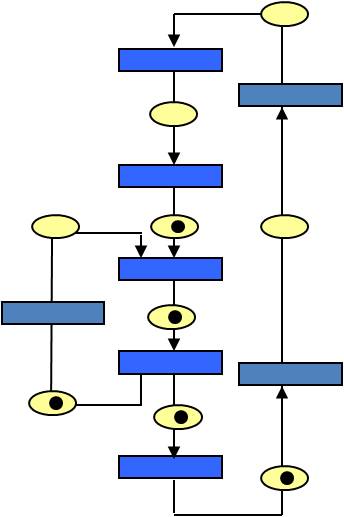}   \\ 
\begin{center}
a)\,\,\,\,\,\,\,\,\,\,\,\,\,\,\,\,\,\,\,\,\,\,\,\,\,\,\,\,\,\,\,\,\,\,\,\,\,\,\,\,\,\,\,\,\,\,\,\,\,\,\,\,\,\,\,\,\,\,\,\,\,\,\,
b)\,\,\,\,\,\,\,\,\,\,\,\,\,\,\,\,\,\,\,\,\,\,\,\,\,\,\,\,\,\,\,\,\,\,\,\,\,\,\,\,\,\,\,\,\,\,\,\,\,\,\,\,\,\,\,\,\,\,\,\,\,\,\,
c)
\end{center}
}
  \caption{a) an MG with a computation latency on the top-most transition and a communication latency on the right-most place.
  Its expansion gives the plain MG in b). The MG in c) is the $\mathds{N}$-equalized version of b).} 
  \label{figure_MGwithLatency}
  \end{center}
\end{figure}

In this representation, tokens evolution during the MG execution is not obvious. For example, in a place with a communication latency of $3$, some tokens could have been there for $1$ instant while others have been there for more than $3$ instants. The duration of their stay is not explicit.

To avoid this problem, the vertices with latencies are expanded in sequences of plain vertices such that the ``semantics of the latency" remains. A place with a communication latency $n$ is replaced by $n$ successive places while a transition with a computation latency $m$ is replaced by $m+1$ transitions interleaved with $m$ places. Thanks to this transformation, the exact location of tokens is known. Figure \ref{figure_MGwithLatency}-b is the expansion of Figure \ref{figure_MGwithLatency}-a.

In every ASAP executions reachable from the initial marking (after guided initialization), token accumulation mostly occurs in the same places. In the MG in Figure \ref{figure_MGwithLatency}-b, token accumulation occurs in the left-most place. When the accumulation is such that every token is kept at least $2$ instants in the place, the behavior of the place is similar to one with a communication latency of $2$. Thus it can be expanded. The {\em $\mathds{N}$-equalization} \cite{EURASIP06} detects these places analytically and increases their latencies accordingly. In Figure \ref{figure_MGwithLatency}-c, the MG is the $\mathds{N}$-equalized version of the one presented in Figure \ref{figure_MGwithLatency}-b. 

\paragraph{Running the proposed algorithm on an example}
The proposed algorithm is defined for an $\mathds{N}$-equalized MG where the latencies has been expanded. These steps are considered to be the preliminary steps of the proposed algorithm.

\begin{figure}[hbpt]
 \begin{center}
   {\includegraphics[scale=0.6]{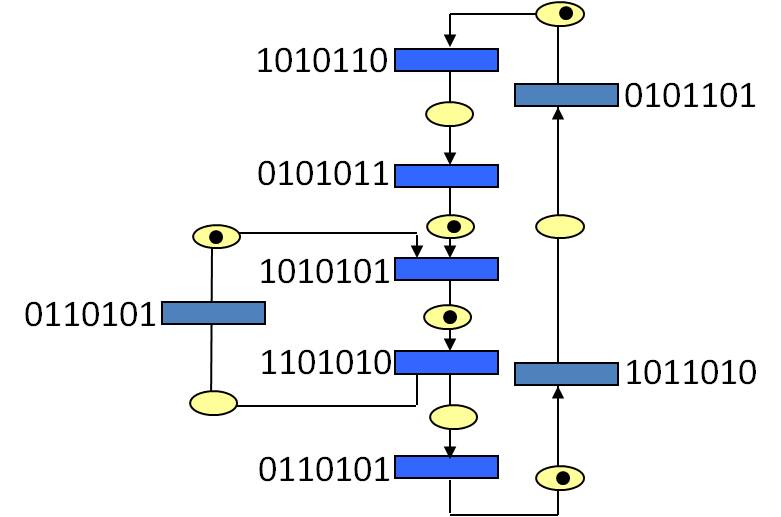}}
  \caption{The binary words associated to the transitions express the balanced ASAP execution of the MG introduced in Figure \ref{figure_MGwithLatency}-c.}
  \label{figure_balanced_sched}
  \end{center}
\end{figure}

Even in a $\mathds{N}$-equalized MG, token accumulation occurs. In some of the ASAP executions (reached after guided initialization), the accumulation is very limited while in others, many tokens can be regrouped in the same place. The balanced ASAP execution ($Exec_{periodic}$) has the lowest accumulation. Figure \ref{figure_balanced_sched} presents the schedule of every transitions according to $Exec_{periodic}$ ($\sO$ means inactivity, $\sI$ means activity). The first main step of the proposed algorithm computes $Exec_{periodic}$ analytically. 

In Figure \ref{figure_balanced_sched}, the marking from which $Exec_{periodic}$ occurs is called $M_{periodic}$. It is different from the initial marking ($M_0$) (Figure \ref{figure_MGwithLatency}-c). The second main step of the proposed algorithm computes $M_{periodic}$.

\begin{figure}[thbp]
 \begin{center}
   {\includegraphics[scale=0.6]{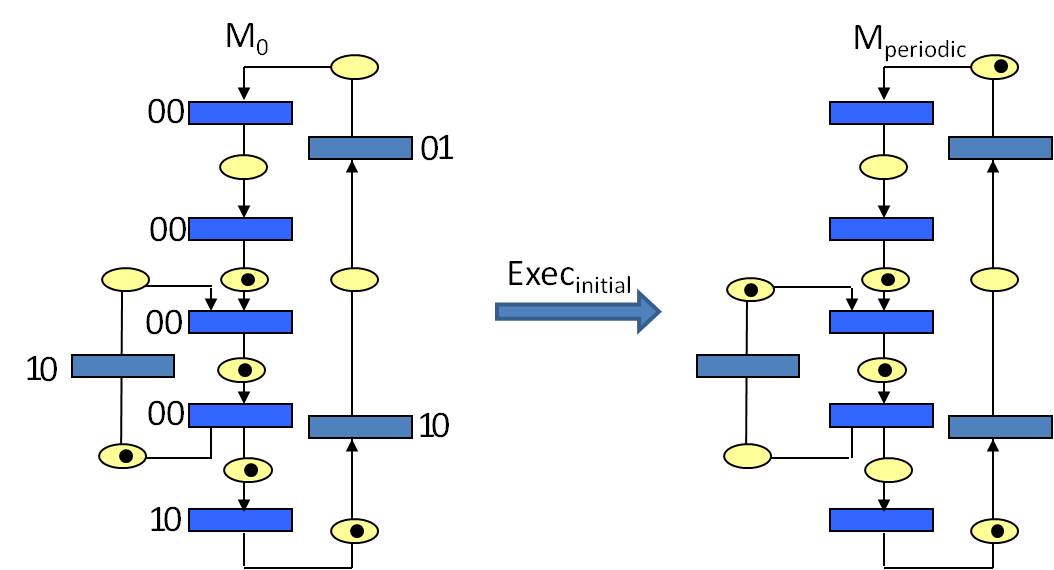}}
  \caption{From $M_0$ on the left to $M_{periodic}$ on the right following the initial guided execution.} 
  \label{figure_initialization2}
  \end{center}
\end{figure}

The third main step of the proposed algorithm consists in finding the guided initialization ($Exec_{initial}$) leading to $M_{periodic}$ from $M_0$. In Figure \ref{figure_initialization2}, The $2$-bits-length schedules attached to each transition is $Exec_{initial}$.

As one can see in Figure \ref{figure_final}, the computed execution is $Exec_{initial}$ followed by the infinite repetition ($\omega$) of $Exec_{periodic}$. It guarantees a maximal execution rate and a minimal accumulation of tokens. The proposed algorithm guarantees that place sizes are either $1$ or $2$. In the running example, every place size is $1$.

\begin{figure}[htbp]
 \begin{center}
   {\includegraphics[scale=0.6]{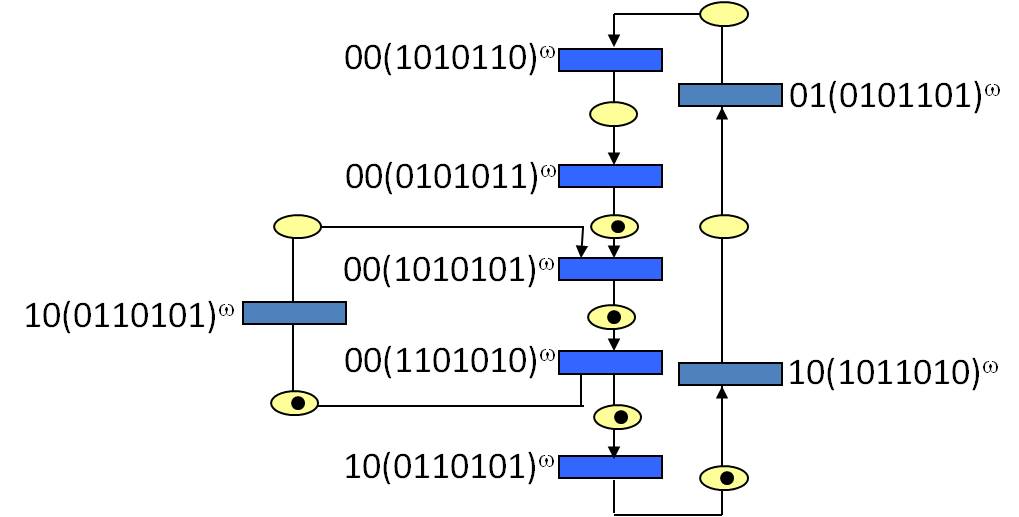}}
  \caption{The MG in $M_0$ and the execution computed by the proposed algorithm.} 
  \label{figure_final}
  \end{center}
\end{figure}

\section{Marked graph}
\label{section_MG}

This section presents the {\em Marked Graph} (MG) model also known as {\em Event Graph} along with classical definitions and results that will be used in the sequel. Our contributions in this section are the notion of delay presented in Section \ref{ssec_delay} and Theorem \ref{delayASAP}.

An MG is a graph where vertices can have two types: transitions and places. A place can stock tokens. The edges of a MG are called arcs. They cannot connect two vertices of the same type. 
A source is a transition without incoming arc. 
A sink   is a transition without outgoing arc. 

\begin{defn}[Marked Graph]
\label{def_mg}

A marked graph is a structure $G=\tuple{T,P,F,M_0}$ where
\begin{itemize}
\item{$T$ is a set of transitions.} 
\item{$P$ is a set of places.}
\item{$F\subseteq (T\times P)\cup (P\times T)$ is a set of arcs. If $t\in T$ and $p\in P$, $(t,p)$ and $(p,t)$ are two arcs resp. 
from $t$ to $p$ and from $p$ to $t$.}
\item{$M:P\to \mathds{N}$ is a marking. $M_0$ is its initial marking.}
\item{Each place has exactly one incoming and one outgoing arcs:
$\forall p\in P$, $|\{(t,p) \mid \forall t\in T\}|=|\{(p,t)\mid \forall t\in T\}|=1$.}
\end{itemize}
\end{defn}

The constraint on the number of place inputs and outputs guarantees that a token can be used by only one transition.
Consequently, the MG is said conflict free or deterministic. Figure \ref{figure_MGwithLatency}-b presents an MG with $7$ transitions (rectangles) and $8$ places (ovals). $5$ of these places contain one token (black dots).

\begin{notation}[Predecessor, successor]
Let $G$ be an MG, $t\in T$ and $p\in P$.
We note :
\begin{itemize}
\item ${}^\bullet t$ is the preset of $t$, ${}^\bullet t=\{p \mid (p,t) \in F\}$.
\item $t^\bullet$ is the postset $t$,  $t^\bullet=\{p \mid (t,p) \in F\}$.
\item ${}^\bullet p$ is the transition which precedes $p$, ${}^\bullet p=t$ such that $(t,p)\in F$.
\item $p^\bullet$ is the transition which succeeds $p$, $p^\bullet=t$ such that $(p,t)\in F$.
\end{itemize}
\end{notation}

\begin{defn}[Throughput of an MG, critical element]
\label{def_throughtput}
Let $G$ be an MG and $p$ be a place of $G$. A cycle $c$ is a path from $p$ to $p$. It is called elementary if all the transitions of the cycle are different. The marking of $c$ is $M(c)=\Sigma_{p\in c} M(p)$ and the latency of $c$, denoted $L(c)$, is the number of place on $c$. The value $M(c)/L(c)$ is the throughput of $c$. The cycle(s) with the lowest throughput is (are) said critical and the throughput of the MG is the one of the critical cycle(s). The transitions, arcs and places are said critical if they belong to a critical cycle.
\end{defn}

An MG is closed if it has neither source nor sink and 
it is connected if there exists a path, in the underlying undirected graph, relating any pair of vertices. 
It is strongly connected if there exists a path, in the MG itself, relating any pair of vertices. 
A strongly connected component (SCC) of an MG is a subgraph that is strongly connected (a subgraph of an MG is an MG composed of a subset of T, a subset of P, and a subset of F); it is said critical (CSCC) if all its elements are critical. 
A direct acyclic component (DAC) is a subgraph that does not contain any cycle.
In general, a connected MG is composed of DACs relating SCCs together.
A strongly connected MG is ever closed.

\subsection{Semantics of execution of an MG}
We define an execution semantics of an MG based on a logical time with a synchronous semantics. 
At the instant $0$, the MG is in its initial marking.
Then, an execution step leads to another marking at instant $1$ and so on. 
During a single execution step, many firable transitions can be fired simultaneously (synchronously) but a single transition can be fired only once.

\begin{defn}[Firable transition at a marking $M$ in an MG]
In an MG $G$, a transition $t\in T$ is firable at a marking $M$ if $\forall p\in {}^\bullet t$, $M(p)>0$. A source is always firable.
$F_M$ is the set of firable transitions at a marking $M$.
\end{defn}

\begin{defn}[MG execution model]
\label{def_execmodel}
Let $G$ be an MG and $M$ its current marking. 
An execution step is a transition relation from $M$ to $M'$ denoted $M \buildrel FT \over \longrightarrow M'$ with $FT\subseteq F_M$,
$\forall p\in P$, $M'(p)=M(p)+FT({}^\bullet p)-FT(p^\bullet)$. ($FT(t)=1$ if and only if $t\in FT$. $FT(t)=0$ otherwise).

An execution ($Exec$) of an MG is a finite or infinite sequence of execution steps:
$Exec = M_0 \buildrel FT_1 \over \longrightarrow M_1 \buildrel FT_2 \over \longrightarrow M_2 \buildrel FT_3 \over \longrightarrow... 
\buildrel FT_i \over \longrightarrow M_i \buildrel FT_{i+1} \over \longrightarrow ...$  where $FT_i \subseteq F_{M_{i-1}}$. 
\end{defn}

\begin{notation}[Concatenation of execution]
\label{def_compo_exec}
Let $G$ be an MG.
Let $Exec_0$ be a finite execution of $G$ from the marking $M_0$ to the marking $M_1$ and $Exec_1$ be a finite or infinite execution of $G$ from the marking $M_1$.

$Exec_0.Exec_1$ is the execution of $G$ formed by $Exec_0$ followed by $Exec_1$.
\end{notation}

\begin{notation}[ASAP and guided executions]
\label{notation_ASAP}
Let $G$ be an MG.
An execution of $G$ is said As Soon As Possible (ASAP) if and only if $\forall i$, $FT_i=F_{M_{i-1}}$ (all firable transitions are ever fired). 
An execution of $G$ is said guided if and only if $FT_i\subseteq F_{M_{i-1}}$.
In a guided execution, one has to decide which firable transitions are fired at every step.
\end{notation}

\begin{defn}[Scheduling and schedule]
\label{def_schedule}
Let $G$ be an MG with an execution $Exec$. Let $t\in T$ be a transition of $G$. The schedule of $t$ is the binary word relating the activity of $t$:
$Sched(t)=FT_1(t).FT_2(t)\cdots FT_i(t)\cdots$. The scheduling of $G$ for an execution $Exec$ is the mapping $t \to Sched(t)\mid \forall t\in T$.
\end{defn}

\begin{rem}[Scheduling and execution]
The successive steps of an execution can be deducted from its scheduling. Consequently, a scheduling defines an execution and vice versa.
\end{rem}

As we have seen in Section \ref{sec_overview}, the proposed algorithm computes an ASAP execution by computing the schedule of every transition.

\subsection{Classical results}
\label{ssection_clasresult}

\begin{defn}[Liveness]
\label{def_liveness}
An MG is live if there exists an execution where every transition is fired infinitely often.
\end{defn}

In \cite{CHEP_1971}, the authors show that the number of tokens on a cycle remains constant through execution. They deduce an MG is live
iff all its cycles contain at least one token.

\begin{defn}[Mutually reachable marking]
\label{def_reachmark}
Let $G$ be a strongly connected MG, $M$ and $M'$ two markings of $G$. $M$ and $M'$ are mutually reachable if there exists an execution sequence from $M$ to $M'$ and another from $M'$ to $M$.
\end{defn}

In \cite{deselEsparza}, the authors prove that two live markings, $M$ and $M'$, of the same strongly connected MG, $G$, are mutually reachable (through a guided execution) if for every cycle $c$ of $G$, $M(c)=M'(c)$. 

As we have seen in Section \ref{sec_overview}, the proposed algorithm computes an execution formed by an initial part followed by a steady part.
The steady part is not reachable from the initial marking through an ASAP execution.
Thus the initial part is a finite guided execution from the initial marking to
the first marking of the steady part. 
This operation is possible because the two markings are mutually reachable.

\subsection{Execution rate}

In \cite{CarlierChretienne}, the authors prove that the ASAP execution of a live and strongly connected MG is ultimately repetitive following an execution pattern. Equation \ref{figure_poele} shows the evolution of the marking of a live and strongly connected MG. $M_0$ is the initial marking and the arrows are ASAP execution steps.

\begin{equation}
   {\includegraphics[scale=0.15]{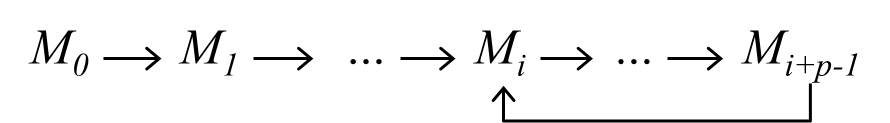}}
  \label{figure_poele}
\end{equation}

The period of the pattern is $\mathsf{p}$ and the number of firings of every transition within a period is $\mathsf{k}$ (the periodicity). We say that, the execution of the MG is $\mathsf{k}$-periodic with a period $\mathsf{p}$. In other words, the {\em execution rate} is $\mathsf{k}/\mathsf{p}$. In \cite{QuadratCohenBaccelli92}, the authors give a formula to calculate the exact value of the periodicity ($\mathsf{k}$) and the period ($\mathsf{p}$) of the ASAP execution of a closed MG. According to this formula, the execution rate ($\mathsf{k}/\mathsf{p}$) equals the value of the throughput given in Definition \ref{def_throughtput}. Thus the throughput is the maximal execution rate of the MG since it is the one of the ASAP executions. This is the one guarantied by the proposed algorithm.

\begin{proposition}[Maximal execution rate of an MG]
\label{prop_maxrate}
The maximal reachable execution rate is achieved by the ASAP execution of the MG.
\end{proposition}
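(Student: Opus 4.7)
The plan is to establish two inequalities: (i) any execution of the MG has rate at most the throughput (Definition~\ref{def_throughtput}), and (ii) the ASAP execution attains the throughput. Together these yield that ASAP is maximal.

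For (i), fix any execution and an elementary cycle $c$ with transitions $t_1,\dots,t_L$ and places $p_1,\dots,p_L$ (indices modulo $L=L(c)$), where $p_i$ is the output place of $t_{i-1}$ and the input place of $t_i$. Let $\sigma_t(T) = \sum_{i=1}^{T} FT_i(t)$ count the firings of $t$ during the first $T$ steps. The synchronous semantics of Definition~\ref{def_execmodel} specifies that a token produced by $t_{i-1}$ during step $j$ enters $M_j(p_i)$ and is therefore only available to $t_i$ at step $j+1$. Hence
\begin{equation*}
\sigma_{t_i}(T) \;\leq\; M_0(p_i) + \sigma_{t_{i-1}}(T-1).
\end{equation*}
Chaining this inequality around the cycle gives $\sigma_{t_1}(T) \leq M(c) + \sigma_{t_1}(T-L)$, so any transition on $c$ fires at most $M(c)$ times in any window of $L$ consecutive steps. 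Dividing by $T$ and letting $T \to \infty$ bounds the asymptotic firing rate on $c$ by $M(c)/L(c)$, and minimizing over elementary cycles bounds the execution rate by the throughput.

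For (ii), the claim is exactly the cited consequence of~\cite{QuadratCohenBaccelli92,CarlierChretienne} recalled in the paragraph preceding the proposition: the ASAP execution of a live and strongly connected MG is ultimately $\mathsf{k}$-periodic with $\mathsf{k}/\mathsf{p}$ equal to $\min_c M(c)/L(c)$, i.e., to the throughput. Therefore the ASAP rate meets the upper bound established in (i), proving the proposition.

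The delicate point is the off-by-one factor in the bound $\sigma_{t_i}(T) \leq M_0(p_i) + \sigma_{t_{i-1}}(T-1)$. A naive argument using only non-negativity of the markings yields $\sigma_{t_i}(T) \leq M_0(p_i) + \sigma_{t_{i-1}}(T)$, which telescopes around the cycle to the vacuous statement $0 \leq M(c)$ and gives no rate bound at all. The substantive ingredient is thus the one-instant delay per place built into the synchronous semantics, which must be tracked carefully; once it is correctly accounted for, the remainder is routine.
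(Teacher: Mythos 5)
Your proof is correct, and it is more complete than what the paper actually provides: the paper states Proposition~\ref{prop_maxrate} with no proof environment at all, justifying it only by the preceding paragraph, which cites \cite{CarlierChretienne} for ultimate periodicity of the ASAP execution and \cite{QuadratCohenBaccelli92} for the formula $\mathsf{k}/\mathsf{p}=\min_c M(c)/L(c)$, and then asserts ``thus the throughput is the maximal execution rate \ldots since it is the one of the ASAP executions.'' Your part~(ii) coincides with that citation-based step. The genuine addition is your part~(i): the paper's upper bound is left implicit (the unstated intuition being a monotonicity argument --- ASAP firing times dominate those of any other execution, so nothing can be faster), whereas you prove it directly for an arbitrary execution by token conservation around each elementary cycle. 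Your chained inequality $\sigma_{t_i}(T)\leq M_0(p_i)+\sigma_{t_{i-1}}(T-1)$ is the right one: firability at step $j$ is decided on $M_{j-1}$, so a token produced at step $j$ is consumable only from step $j+1$, which is exactly the one-instant-per-place delay you flag; and you are right that dropping the $-1$ telescopes to the vacuous $0\leq M(c)$. What your route buys is a self-contained, execution-independent bound (it covers guided executions, not just ASAP ones, which is what ``maximal reachable'' requires); what it costs is nothing --- it still relies on the cited literature for the attainment half, exactly as the paper does. One cosmetic remark: for non-periodic executions the ``rate'' should be read as $\limsup_T \sigma_t(T)/T$, which your window argument does bound correctly.
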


\begin{rem}[Execution rate in an open MG]
\label{rem_thpopen}
The result on the maximal execution rate is valid for strongly connected MGs (and thus closed). In a simply connected open MG, the execution rate depends upon the execution rate of the source(s) but
the maximal execution rate is bounded by the worst throughput of its SCCs and can be calculated using the same formula given in \cite{QuadratCohenBaccelli92}. Consequently, if the source(s) fire(s) on demand, the MG can be considered closed.
\end{rem}

In Section \ref{ssec_simpyconnected}, the proposed algorithm is extended to simply connected MG.
In such a case, the proposed algorithm returns the schedules of the sources and sinks. 
The schedule of a source says when a token has to be generated by the source in order to feed the next transition and ensure the overall consistency of the execution.

\subsection{Size of places and boundedness}
As we have seen in Section \ref{sec_overview}, the proposed algorithm computes an execution which implies a minimal size of places. Let us now define this notion.

\begin{defn}[Size of places]
\label{def_capa}
Let $G$ be an MG, $Exec$ an execution and $p$ a place of $G$.
The size of $p$ on $Exec$ denoted $C_{Exec}(p)$ is the highest marking of $p$ during the entire execution.
\end{defn}

From the initial marking of a strongly connected MG, a guided execution can lead to any of the reachable markings. From each of these markings, there exists a bounded ASAP execution. The size of the places for these executions may vary. An execution has a minimal size of places if every places has a minimal size compared to the other ASAP executions.

\begin{defn}[Minimal size of places]
Let $G$ be an MG and $M_0$ its initial marking.
Let $\mathds{M}$ be the set of markings reachable from $M_0$.
Let $\mathds{E}$ be the set of ASAP executions from the markings of $\mathds{M}$.

$Exec\in \mathds{E}$ has a minimal size of places iff $\forall Exec' \in \mathds{E}$, $\forall p\in G$,
we have $C_{Exec}(p)\le C_{Exec'}(p)$.
\end{defn}

As we have seen in Section \ref{sec_overview}, the proposed algorithm computes an ASAP execution where place sizes are minimal. The extension of the proposed algorithm to simply connected MGs is discussed in Section \ref{ssec_simpyconnected} but it suffers from limitations since the execution of a simply connected MG may not be bounded.

\begin{defn}[Boundedness]
\label{def_safety}
An execution is bounded if the size of every place is bounded.
An MG is bounded if every execution is bounded.
\end{defn}

Whereas any SCC is bounded, the sizes of the places of a DAC are not. Let us assume an MG composed of two SCCs connected through a DAC. If the throughput of the upper SCC is superior to the throughput of the underneath SCC, the ASAP execution of the MG will lead to an infinite accumulation of tokens in the DAC. On contrary, If the throughput of the upper SCC is inferior to the throughput of the underneath SCC, the upper SCC will limit the execution rate of the underneath SCC and the behavior of the MG during an ASAP execution will be ultimately repetitive.

\begin{proposition}[ASAP execution of a connected MG]
\label{pro_asapnotbounded}
The ASAP execution of a connected MG may be unbounded.
\end{proposition}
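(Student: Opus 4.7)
The statement is an existence claim (``may be unbounded''), so my plan is to exhibit a single connected MG whose ASAP execution has an unbounded place. The construction is exactly the one suggested by the paragraph preceding the proposition: two SCCs bridged by a DAC, with the upstream SCC strictly faster than the downstream one.

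Concretely, I would take the upstream SCC to be a single transition $t_1$ with a self-loop place $p_1$ carrying one initial token, so its elementary cycle has throughput $M(c)/L(c)=1/1=1$. For the downstream SCC I would take two transitions $t_2,t_3$ forming a two-place cycle $p_2,p_3$ with exactly one token, giving throughput $1/2$. The DAC is then a single place $p_{12}$ with ${}^\bullet p_{12}=t_1$ and $p_{12}^\bullet=t_2$, initially empty. The resulting MG is connected but not strongly connected, and each SCC is individually live.

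Next I would analyse the ASAP execution from this initial marking. Transition $t_1$ is always firable because of its token-carrying self-loop, so by Proposition \ref{prop_maxrate} (and the throughput formula recalled in Section \ref{ssection_clasresult}) it fires at rate $1$, i.e.\ at every instant. In the lower SCC, $t_2$ and $t_3$ alternate, each firing at rate $1/2$, because their critical cycle has throughput $1/2$ and the ASAP execution reaches that rate. Hence at every step $p_{12}$ receives one token from $t_1$, but only at every other step does $t_2$ consume one. The marking of $p_{12}$ therefore grows without bound: after $2n$ steps it contains roughly $n$ tokens.

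Finally I would conclude: the place $p_{12}$ has unbounded marking along the (unique) ASAP execution, so by Definition \ref{def_safety} this ASAP execution is unbounded, proving the proposition. There is no real obstacle here beyond justifying that the ASAP execution of each SCC really does run at its throughput rate in the presence of the DAC; this follows from Proposition \ref{prop_maxrate} applied separately to the upstream SCC (whose firability is independent of the rest of the graph) and from the observation that the downstream SCC is bottlenecked by its own critical cycle, not starved by the DAC, since $p_{12}$ keeps filling.
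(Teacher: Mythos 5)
Your proposal is correct and follows exactly the paper's own justification: the paper establishes this proposition via the preceding paragraph's generic scenario of a fast upstream SCC feeding a slower downstream SCC through a DAC, and you simply instantiate that scenario with a concrete witness (throughputs $1$ and $1/2$, accumulation in $p_{12}$). The added care in checking that the downstream SCC is cycle-limited rather than starved is a worthwhile detail the paper leaves implicit, but the argument is essentially the same.
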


In every bounded execution of a connected MG, all SCCs have the same execution rate. 
Consequently, the highest reachable execution rate is equals to the worst throughput among the SCCs. An execution at this rate will be ASAP for the SCCs with the worst throughput. The execution will also be ASAP for the underneath SCCs. However, the upper SCCs will be slow down to avoid accumulation and thus will not have an ASAP behavior.

\begin{proposition}[Bounded execution of a connected MG]
\label{pro_boundednotasap}
The bounded execution of a connected MG may not be ASAP.
\end{proposition}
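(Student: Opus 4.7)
The plan is to prove this existence statement by exhibiting a concrete counterexample and arguing that no ASAP execution of it can be bounded, so that any bounded execution is necessarily not ASAP. I will follow the shape of the intuitive argument given in the paragraph just above the proposition, and rely on Proposition \ref{prop_maxrate} and Definition \ref{def_throughtput} together with the discussion of Proposition \ref{pro_asapnotbounded}.

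First I would construct a connected MG $G$ built from two strongly connected components $G_1$ and $G_2$ of distinct throughputs, with $G_1$ strictly faster than $G_2$, linked by a single DAC arc going from a transition $t_1$ of $G_1$ to a place $p$ whose output transition $t_2$ lies in $G_2$. Concretely I can take $G_1$ to be a one-place self-loop around a single transition $t_1$ (throughput $1$) and $G_2$ to be a two-place self-loop around a single transition $t_2$ with one token (throughput $1/2$), connected by a single empty place $p$ from $t_1$ to $t_2$. The MG is then clearly connected but not strongly connected, $p$ belongs neither to $G_1$ nor to $G_2$, and no cycle passes through $p$.

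Next I would analyse the ASAP execution from the initial marking. By Proposition \ref{prop_maxrate} together with Definition \ref{def_throughtput}, in any ASAP execution each $G_i$ fires its transitions at rate equal to its own throughput, i.e. $t_1$ fires at rate $1$ and $t_2$ at rate $1/2$. Counting tokens in the place $p$, its marking after $n$ steps grows like $n - n/2 = n/2$, hence is unbounded; this recovers Proposition \ref{pro_asapnotbounded} in this concrete case. The main technical point here is to justify that $t_2$ cannot consume tokens from $p$ any faster than its own throughput $1/2$ allows, because the self-loop place of $G_2$ constrains its firing frequency — this is immediate from $M(c)/L(c)$ being an invariant-based upper bound on firing rate for the cycle $c$ of $G_2$.

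Finally I would argue that any bounded execution $Exec$ of $G$ must fire $t_1$ at a rate no greater than $1/2$: indeed, if the long-run firing rate of $t_1$ exceeded the long-run firing rate of $t_2$, the marking of $p$ would grow without bound, contradicting boundedness. But $t_1$ is always firable from its initial self-loop marking (its only preset place always contains a token immediately after $t_1$ fires), so in the ASAP semantics $t_1$ fires at every step, giving rate $1$. Hence $Exec$ must omit $t_1$ from some $FT_i$ even though $t_1 \in F_{M_{i-1}}$, which by Notation \ref{notation_ASAP} means $Exec$ is not ASAP. This yields the proposition. I expect the only subtle step to be the asymptotic counting argument on the place $p$, which I would make rigorous by observing that the cumulative number of firings of $t_2$ up to step $n$ is bounded above by $\lceil n/2 \rceil$ because of its self-loop with one token and latency $2$, so the marking of $p$ at step $n$ is at least $n - \lceil n/2 \rceil$ under ASAP, which tends to infinity.
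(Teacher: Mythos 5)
Your proposal is correct and follows essentially the same route as the paper, which in fact states this proposition without a formal proof and only sketches the idea in the preceding paragraph (two SCCs of different throughputs joined by a DAC: the faster one must be slowed down to keep the connecting place bounded, hence cannot behave ASAP); your contribution is to instantiate that sketch with a concrete example and make the counting rigorous. One small repair: under Definition \ref{def_mg} a cycle alternates places and transitions, so a ``two-place self-loop around a single transition'' is not a valid MG --- realize $G_2$ instead as two transitions $t_2,t_3$ joined by two places with one token (equivalently, expand a latency-$2$ self-loop place as in Section \ref{ssec_latexp}); the throughput-$1/2$ bound on the firings of $t_2$, and hence the whole argument, goes through unchanged.
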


The propositions \ref{pro_asapnotbounded} and \ref{pro_boundednotasap} explain why the proposed algorithm is restricted to strongly connected MG. However, this restriction can be abolished as discussed in section \ref{ssec_simpyconnected}. A simply connected MG can be transformed in a strongly connected one (by relating SCCs together) so that the proposed algorithm is applicable.

\subsection{Delay}
\label{ssec_delay}
During the execution of an MG, at a given instant, if a token reaches a place $p$ and is not consumed by $p^\bullet$ at the next instant, then the token 
is said delayed. This can happen in two cases: 1) when $p^\bullet$ is not fired; all the tokens in $p$ are delayed.
2) When $p^\bullet$ is fired and $p$ contains many tokens; all tokens in $p$ excepted the used one are delayed.
Globally, delays can be seen as a way for the MG to synchronize its branches together. A non critical cycle leans to take advance over critical cycles (it is faster) but eventually, the execution rate is the same for every one. So the delays reduce the execution rate of fast cycles to the execution rate of the slowest dynamically. The value $M{c_2}*L{c_1}-M{c_1}*L{c_2}$ represents the number of delay required during a period of execution to synchronize the cycle $c_1$ with the cycle $c_2$.

\begin{defn}[Delay]
\label{def_delay}
Let $G$ be an MG and $p\in P$ be a place of $G$. Let $Exec$ be an execution of $G$. Following the notation of Definition \ref{def_execmodel},
$Delay(p,i)$ is the number of delays occurring in $p$ at the $i^{th}$ step of $Exec$. 
$Delay(p,i)=M_{i-1}(p)-FT_i(p^\bullet)$. 
\end{defn}

After the initial part, the sum of delays in the places of a cycle during a period of execution reflects the difference of rate between a critical cycle and the current cycle.

\begin{thm}[Delay in a cycle during a period of execution]
\label{th_delay_cycle}
Let $G$ be an MG and $c$ a cycle of $G$. Let $Exec$ be an [ultimately] $\mathsf{k}$-periodic execution of $G$ with a period $\mathsf{p}$. 
Let $j_0$ be an upper bound of the length of the initialization.
\begin{equation*}
\Sigma_{p \in c}\Sigma_{i=1}^{\mathsf{p}}Delay(p,j_0+i)=M(c)*\mathsf{p}-L(c)*\mathsf{k}
\end{equation*}
\end{thm}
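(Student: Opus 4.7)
The plan is to expand the definition of $Delay$ and then swap the order of summation, using two invariants: token-count conservation along a cycle, and the $\mathsf{k}$-periodicity of transition firings after the initialization.

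First I would unfold $Delay(p,j_0+i)=M_{j_0+i-1}(p)-FT_{j_0+i}(p^\bullet)$ inside the double sum, and split it into two pieces:
\begin{equation*}
\Sigma_{p\in c}\Sigma_{i=1}^{\mathsf{p}}Delay(p,j_0+i)
 \;=\; \Sigma_{i=1}^{\mathsf{p}}\!\Bigl(\Sigma_{p\in c}M_{j_0+i-1}(p)\Bigr)
 \;-\; \Sigma_{i=1}^{\mathsf{p}}\!\Bigl(\Sigma_{p\in c}FT_{j_0+i}(p^\bullet)\Bigr).
\end{equation*}

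Next I would handle each piece separately. For the first piece, the classical invariant recalled in Section~\ref{ssection_clasresult} (number of tokens on a cycle is preserved by any execution step, and in particular $M_j(c)=M(c)$ for every $j$) gives $\Sigma_{p\in c}M_{j_0+i-1}(p)=M(c)$, so the first sum contributes exactly $M(c)\cdot\mathsf{p}$. For the second piece, I would observe that the map $p\mapsto p^\bullet$ is a bijection between the places of $c$ and the transitions of $c$ (in a cycle of an MG, places and transitions alternate, and each transition $t\in c$ has exactly one in-cycle predecessor place). Hence $\Sigma_{p\in c}FT_{j_0+i}(p^\bullet)=\Sigma_{t\in c}FT_{j_0+i}(t)$, and swapping the order of summation gives $\Sigma_{t\in c}\bigl(\Sigma_{i=1}^{\mathsf{p}}FT_{j_0+i}(t)\bigr)$. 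Here I would invoke the hypothesis that $Exec$ is ultimately $\mathsf{k}$-periodic with period $\mathsf{p}$ and that $j_0$ bounds the initialization, so each transition fires exactly $\mathsf{k}$ times over any window of $\mathsf{p}$ consecutive steps taken after step $j_0$; summing over the $L(c)$ transitions of $c$ yields $L(c)\cdot\mathsf{k}$.

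Combining the two evaluations delivers $M(c)\cdot\mathsf{p}-L(c)\cdot\mathsf{k}$, which is the claimed identity. The only subtle point, and the place to be careful in the write-up, is the re-indexing needed to ensure that the whole window $\{j_0+1,\ldots,j_0+\mathsf{p}\}$ used in the second sum lies within the periodic regime; this is exactly what the hypothesis ``$j_0$ is an upper bound of the length of the initialization'' guarantees, and it is also what makes the bijection $p\mapsto p^\bullet$ (a purely structural fact about cycles) sufficient to reduce the delays to a counting argument over transitions. No part of the argument is genuinely hard, but stating cleanly why $\Sigma_{i=1}^{\mathsf p}FT_{j_0+i}(t)=\mathsf{k}$ for every $t\in c$ is the step I would take the most care over.
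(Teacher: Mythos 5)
Your proposal is correct and is essentially the paper's own argument made formal: the paper counts $M(c)\cdot\mathsf{p}$ ``possible firings'' (your first sum, via the cycle-token invariant) minus $L(c)\cdot\mathsf{k}$ ``effective firings'' (your second sum, via $\mathsf{k}$-periodicity after $j_0$) and identifies the difference with the total delay, exactly as you do by unfolding $Delay(p,i)=M_{i-1}(p)-FT_i(p^\bullet)$ and swapping summations. Your version is a welcome tightening — in particular the explicit bijection $p\mapsto p^\bullet$ between places and transitions of the (elementary) cycle, and the remark about keeping the window $\{j_0+1,\dots,j_0+\mathsf{p}\}$ inside the periodic regime, are left implicit in the paper — but it is not a different route.
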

\begin{proof}
In $c$, at each instant, $M(c)$ tokens are present.
This means $M(c)*\mathsf{p}$ transitions could be fired over a period of execution. 
However, in a period of execution, every transition of $G$ are fired $\mathsf{k}$ times. 
This means $L(c)*\mathsf{k}$ transitions are effectively fired on $c$ during a period of execution. 
The difference between the amount of possible fired transition and the amount of effective fired transition 
is the number of delays in $c$ over a period.
\end{proof}

The spatial distribution of delays is the exact location where the delays occur during a period of execution.

\begin{defn}[Spatial distribution of delays]
\label{def_sdd}
Let $G$ be an MG. 
Let $Exec$ be an [ultimately] $\mathsf{k}$-periodic execution of $G$ with a period $\mathsf{p}$.
$D: P \to \mathds{N}$ is called a spatial distribution of delays 
if $\forall c$, the cycles of $G$, $\Sigma_{p\in c} D(p)=M(c)*\mathsf{p}-L(c)*\mathsf{k}$.

$Exec$ is said to be {\em based on $D$} if after the initial part, the delays in $exec$ during a period of execution occur as expressed in $D$.
\end{defn}

In the specific case of an ASAP execution, the delays occur as late as possible in the MG. This makes the corresponding spatial distribution of the delays unique for a given strongly connected MG. This spatial distribution is called the ``latest delays position". The theorems \ref{thm_ldp_ex} and \ref{delayASAP} prove these claims.

\begin{defn}[Latest delays position]
\label{def_latestdelaypos}
Let $G$ be a strongly connected MG. 
Let $D$ be a spatial distribution of the delays in $G$.
$D$ is the latest delays position if for all transition $t$ of $G$, there exists at least one place $p$ in ${}^\bullet t$ such that $D(p)=0$.
\end{defn}

\begin{thm}[Existence of the latest delay position]
\label{thm_ldp_ex}
Let $G$ be a strongly connected MG with a throughput inferior or equal to $1$.
The latest delay position ever exists for $G$.
\end{thm}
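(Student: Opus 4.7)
The plan is to construct the distribution as a natural base assignment corrected by a potential defined on transitions. Take $D_0(p) = M(p)\mathsf{p} - \mathsf{k}$. Summing over any cycle $c$ gives $\sum_{p \in c} D_0(p) = M(c)\mathsf{p} - L(c)\mathsf{k}$, so $D_0$ formally satisfies the cycle constraints of Definition~\ref{def_sdd}, although individual values may be negative (namely $-\mathsf{k}$ whenever $M(p) = 0$). For any $\phi : T \to \mathds{Z}$, the shifted assignment $D(p) = D_0(p) + \phi({}^\bullet p) - \phi(p^\bullet)$ still satisfies those constraints, because the $\phi$-contribution telescopes around every cycle. Thus the problem reduces to exhibiting a $\phi$ for which $D(p) \geq 0$ everywhere and, for every transition $t$, at least one $p \in {}^\bullet t$ satisfies $D(p) = 0$.

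I would then interpret the MG as a weighted digraph on $T$ in which each place $p$ is an edge from ${}^\bullet p$ to $p^\bullet$ with weight $D_0(p)$. Because the throughput $\mathsf{k}/\mathsf{p}$ is the minimum of $M(c)/L(c)$ over cycles and is $\leq 1$, we have $M(c)\mathsf{p} - L(c)\mathsf{k} \geq 0$ for every cycle, vanishing precisely on critical cycles. In these terms, $D(p) \geq 0$ is the potential inequality $\phi(p^\bullet) - \phi({}^\bullet p) \leq D_0(p)$, and $D(p) = 0$ is its tightness. Fix a transition $t_0$ lying on some critical cycle (it exists since $G$ is finite and strongly connected) and define $\phi(t)$ to be the shortest-path weight from $t_0$ to $t$ in this weighted digraph. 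The absence of negative cycles and strong connectivity make $\phi$ well-defined and integer-valued; by Bellman optimality, $\phi(p^\bullet) \leq \phi({}^\bullet p) + D_0(p)$ for every place, with equality attained by at least one predecessor place of each $t \neq t_0$ along a shortest path.

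The main obstacle is the distinguished vertex $t_0$, for which the shortest-path argument supplies no automatic tight incoming edge. I would resolve it via the critical cycle $C$ containing $t_0$. Summing the potential inequalities around $C$ makes the $\phi$-terms telescope to zero, leaving $0 \leq \sum_{p \in C} D_0(p) = M(C)\mathsf{p} - L(C)\mathsf{k} = 0$. Equality propagates to every single inequality along $C$, so each in-edge of every vertex on $C$ is tight; in particular the incoming place of $t_0$ along $C$ satisfies $D(p) = 0$. Combined with tightness at every transition outside $t_0$, this exhibits a spatial distribution of delays that is zero on some predecessor place of every transition, i.e.\ a latest delays position as required by Definition~\ref{def_latestdelaypos}. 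The only subtle point throughout is the possible negativity of individual weights $D_0(p)$, which is harmless precisely because the throughput hypothesis forbids negative cycles and so guarantees the shortest paths that underlie the construction.
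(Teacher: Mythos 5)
Your proof is correct, but it takes a genuinely different route from the paper's. The paper starts from a concrete spatial distribution extracted from one period of the ASAP execution ($D(p)=\Sigma_{i=1}^{\mathsf{p}}Delay(p,j_0+i)$) and then repeatedly normalizes it: whenever every place in some ${}^\bullet t$ carries at least $n\ge 1$ delays, it subtracts $n$ from the preset and adds $n$ to the postset, arguing (rather briefly) that this iteration terminates because critical cycles carry no delay; the fixed point is the latest delays position. You instead build the distribution directly and statically, as the base assignment $D_0(p)=M(p)\mathsf{p}-\mathsf{k}$ corrected by a feasible potential obtained from shortest paths in the digraph on $T$ weighted by $D_0$ --- essentially the dual-of-shortest-paths (retiming-style) argument, where conservativeness follows from $\mathsf{k}/\mathsf{p}=\min_c M(c)/L(c)$ and tightness of the last edge of each shortest path, plus the zero-weight critical cycle through $t_0$, supplies the required zero-delay predecessor of every transition. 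Your approach buys several things: it does not presuppose that the ASAP execution is ultimately periodic or that it induces a valid nonnegative distribution, it replaces the paper's loosely justified termination claim by the standard existence of shortest paths in a graph without negative cycles, and it yields an explicit polynomial-time construction (Bellman--Ford) rather than the iterative normalization of the paper's Algorithm~2. The paper's version, in exchange, produces the distribution actually realized by the ASAP execution, which is what the subsequent steps of the algorithm reuse. Both arguments are sound; your handling of the distinguished vertex $t_0$ via the critical cycle is exactly the right fix for the one place where the shortest-path argument alone is silent.
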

\begin{proof}
A spatial distribution of delays $D$ can be deducted from a period of the ASAP execution of $G$.
$\forall p\in P$, $D(p)=\Sigma_{i=1}^{\mathsf{p}}Delay(p,j_0+i)$ where $j_0$ is the length of the initial part.

Either $D$ is the latest delay position or there exists at least a transition $t$ for which every places in the preset of $t$ has at least $n$ delays (with $n\ge 1$).
In the second case, $n$ delays can be removed to every place in the preset of $t$ and added to every place in the postset of $t$.
This transformation gives another (valid) spatial distribution of delays for which $t$ has at least one place in its preset without delay.

The iteration of this transformation reaches a fix point because no delay appends on the critical cycle. The fix point is the latest delay position. One should note the similarity of this argument to the liveness condition. 
\end{proof}

\begin{thm}[Latest delay position and ASAP execution]
\label{delayASAP}
Let $G$ be a strongly connected MG with an execution $Exec$. 
$Exec$ is based on the spatial distribution of delays $D$.
i) If $D$ is the latest delays position, then $Exec$ is ASAP. 
ii) Let $Exec'$ be another ASAP execution of $G$ from another initial marking $M_0'$. 
$Exec'$ is based on the spatial distribution of delays $D'$.
If $M_0$ and $M_0'$ are mutually reachable, then $D=D'$.
\end{thm}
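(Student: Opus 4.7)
\textbf{Part (i).} I would argue by contradiction. Suppose that in the periodic regime of $Exec$, say at some instant $i>j_0$, there is a transition $t \in F_{M_{i-1}} \setminus FT_i$, i.e.\ $t$ is firable but not fired. Firability forces $M_{i-1}(p) \ge 1$ for every $p \in {}^\bullet t$, while $t \notin FT_i$ gives $FT_i(p^\bullet) = FT_i(t) = 0$ for every such $p$. Hence $Delay(p,i) = M_{i-1}(p)-0 \ge 1$ on every place in ${}^\bullet t$. Choosing a period window that contains $i$, the sum in Definition~\ref{def_sdd} yields $D(p) \ge 1$ on every preset place of $t$, contradicting the assumption that $D$ is a latest delays position in the sense of Definition~\ref{def_latestdelaypos}.

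\textbf{Part (ii).} My plan is to assemble three ingredients. First, since $M_0$ and $M_0'$ are mutually reachable in a strongly connected MG, the classical result recalled in Section~\ref{ssection_clasresult} gives $M_0(c)=M_0'(c)$ for every cycle $c$. Second, by Proposition~\ref{prop_maxrate} both ASAP executions attain the common maximal rate $\mathsf{k}/\mathsf{p}$, so Theorem~\ref{th_delay_cycle} yields $\sum_{p \in c} D(p) = M_0(c)\,\mathsf{p}-L(c)\,\mathsf{k} = \sum_{p \in c} D'(p)$ for every cycle $c$. At this stage $D$ and $D'$ already coincide on every cycle sum. Third, to promote this cycle-wise agreement to pointwise equality, I would show that the two ASAP trajectories eventually synchronize onto a common cyclic sequence of markings, whence their periodic firing patterns, and therefore their spatial delay distributions, must coincide.

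The third step is the main obstacle. The direct route invokes the classical characterization of the ASAP steady state (Carlier--Chr\'etienne and Baccelli--Cohen--Quadrat) as being uniquely determined, up to a time shift, by the cycle-sum vector of the initial marking; since mutual reachability preserves this vector, both executions enter the same cyclic orbit in the reachability graph, and the full-period sum of delays is invariant under a shift of the window. A more self-contained alternative uses part~(i) together with Theorem~\ref{thm_ldp_ex}: the latest delays position $D^\star$ exists, by part~(i) any execution whose periodic regime realises $D^\star$ is ASAP, and by the determinism of ASAP from a fixed marking one then identifies both $D$ and $D'$ with $D^\star$. Either way, the crux is rigidity of the ASAP periodic pattern beyond the cycle-sum level; the cycle-sum equality obtained above is necessary but by itself not sufficient.
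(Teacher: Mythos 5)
Your part (i) is correct and is essentially the paper's own argument in contrapositive form: a firable-but-unfired transition forces a positive delay on \emph{every} place of its preset at that instant, hence $D(p)\ge 1$ on all of ${}^\bullet t$, contradicting Definition~\ref{def_latestdelaypos}. Your part (ii), however, stops exactly where the real work begins, and you say so yourself: the passage from equality of the cycle sums $\sum_{p\in c}D(p)=\sum_{p\in c}D'(p)$ to pointwise equality $D=D'$ is the entire content of the claim, and neither of your two routes closes it. Route (b) is circular: you would need the \emph{converse} of part (i), namely that the delay distribution of an ASAP execution is automatically the latest delays position, which is not what (i) or Theorem~\ref{thm_ldp_ex} establishes (the proof of Theorem~\ref{thm_ldp_ex} explicitly allows the ASAP-derived distribution to fail the LDP condition and repairs it by shifting delays forward); and ``determinism of ASAP from a fixed marking'' does not apply, since $M_0\neq M_0'$. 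Route (a) simply outsources the rigidity statement to Carlier--Chr\'etienne / Baccelli et al.\ without supplying the argument that two ASAP trajectories from mutually reachable markings enter the \emph{same} cyclic orbit; that is precisely the non-trivial step.

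The paper closes the gap with an elementary path argument that uses the LDP property itself (and hence implicitly treats both $D$ and $D'$ as latest delays positions). Suppose $D(p)\neq D'(p)$. Take $path_1$ from a critical cycle to ${}^\bullet p$ and $path_2$ from $p^\bullet$ back to a critical cycle, with (w.l.o.g.) $path_1$ carrying the same number of delays under $D$ and $D'$. The cycle $path_1.p.path_2$ closed through a critical section has equal delay sums under $D$ and $D'$ (by your second ingredient), so the delay counts on $path_2$ must \emph{differ}. Now use the defining property of the latest delays position to build, by backtracking from $p^\bullet$ and always choosing a zero-delay input place, a delay-free path $path_0$ from a critical cycle to $p^\bullet$; the cycle $path_0.path_2$ closed through a critical section then has different delay sums under $D$ and $D'$, contradicting the cycle-sum equality. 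This is the missing ingredient in your proposal: a mechanism for localizing the discrepancy onto a single cycle, which is exactly what the zero-delay backtracking path provides.
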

\begin{proof}
i) If $\forall t\in T$, $\exists p \in\{{}^\bullet t\}$ such that $D(p)=0$, as soon as $M(p)>0$, $t$ fires. This is a ASAP execution.\\
ii) If $M_0$ and $M_0'$ are mutually reachable, the number of tokens per cycle is the same in $M_0$ and $M_0'$ for every cycle of $G$ \cite{deselEsparza}. Consequently, the number of delays per cycle is the same in $D$ and $D'$.

Now let us assume there exists a place $p$ such that $D(p)\neq D'(p)$. Let $path_1$ and $path_2$ be two paths in the graphs.
$path_1$ goes from a transition of a critical cycle to ${}^\bullet p$ and $path_2$ goes from $p^\bullet$ to a transition of a critical cycle. We assume without lost of generality that the number of delays on $path_1$ is the same according to $D$ and $D'$.
$path_1$ followed by $p$ followed by $path_2$ followed by a section of a critical cycle forms a cycle for which the number of delays is the same according to $D$ and $D'$. Since $D(p)\neq D'(p)$, the number of delays on $path_2$ is different on $D$ and $D'$.

Since $D$ and $D'$ are the latest delays position, there exists a path $path_0$ from a transition of a critical cycle to $p^\bullet$ which do not contains any delay (the construction of this path can be done by backtracking from $p$: while reaching a transition, the input place without delay is selected, a critical cycle will ultimately be reached). But $path_0$ followed by $path_2$ followed by a section of the same critical cycle forms a cycle where the number of delays is different according to $D$ and $D'$. $M_0$ and $M_0'$ are not mutually reachable.
\end{proof}

As we have seen in Section \ref{sec_overview}, the proposed algorithm computes an ASAP execution. 
This ASAP execution is based on the latest delays position of $G$. 
In some sense, the proposed algorithm proves that there ever exists an ASAP execution based on the 
latest delay position.

\subsection{Latencies}
\label{ssec_latexp}
The preliminary step of the proposed algorithm is the expansion of the vertices with latency in plain vertices.

\begin{defn}[MG with communication/ computation latencies]\-\\
Let $G$ be an MG. A marked graph with latency $G'$ is a tuple $\tuple{G,L_{com},L_{cal}}$:
\begin{itemize}
\item The mapping $L_{com}$:$P\to \mathds{N}\backslash \{0\}$ gives the communication latencies of places.
\item The mapping $L_{cal}$:$T\to \mathds{N}$ gives the computation latency of transitions ($cal$ stands for calculation).
\end{itemize}
\end{defn}

A place with a communication latency of $n$ keeps every token at least $n$ instants. A transition with a computation latency of $m$ keeps every token exactly $m$ instants. According to Definition \ref{def_execmodel}, the latency of a transition in a plain MG is $0$ and the latency of a place is $1$. The tokens go through transitions instantaneously but stay at least one instant in a place. 
The transformation from an MG with latencies to an MG without latency has been introduced by Chander Ramchandani in \cite{Ramchandani74}.
This transformation preserves the semantics of a latency.


Figure \ref{figure_MGwithLatency}-a presents an MG with computation latencies on the top transition and communication latencies on the right-most place.
Figure \ref{figure_MGwithLatency}-b is the expansion of Figure \ref{figure_MGwithLatency}-a.
The top-most transition is replaced by two transitions with a place in between which represents the computation latency. 
The right-most place is replaced by three places interlaced by two transitions. Each of the three places represents a communication latency.

Liveness, closedness, (strongly) connection, throughput, execution rate, number of cycles, and number of tokens per cycle remain constant through the latency expansion process.

\subsection{$\mathds{N}$-equalization}
\label{subsec_eqn}
In an MG where a cycle $c$ is largly faster that the critical cycle,
any ASAP execution will lead to a situation where 
a place of $c$ will keep every token at least two instants.
In consequence, the behavior of this place is exactly the same as two places in sequence
with a dummy transition in-between. The $\mathds{N}$-equalization performs this transformation wherever it is required.
The MG in Figure \ref{figure_MGwithLatency}-c is the $\mathds{N}$-equalized version of the MG in Figure \ref{figure_MGwithLatency}-b.

The resulting $\mathds{N}$-equalized MG has the same behavior as the original one but
the throughput of $c$ has changed. It has been reduced to approach the 
critical one but cannot become less.
It may append that some non-critical cycles can become critical and the value of $\mathsf{k}$ and $\mathsf{p}$ can change but the ratio $\mathsf{k}/\mathsf{p}$ remains constant. The major expected change is that for every places in the resulting MG, the number of delays over a period becomes bounded by $\mathsf{k}$. More details about $\mathds{N}$-equalization is available in \cite{MEMOCODE06,EURASIP06}.

\begin{defn}[$\mathds{N}$-equalized MG]
\label{NQequalization}
An MG $G$ is said $\mathds{N}$-equalized if and only if
every transition belonging to a strongly connected component of $G$ belongs to a cycle $c$ such that: 
\begin{equation*}
M(c)/(L(c)+1) < throughput(G) \le throughput(c)
\end{equation*}
\end{defn}

\begin{lem}[Delay in a $\mathds{N}$-equalized MG]  
\label{lemma_nequa}
Let $G$ be a $\mathds{N}$-equalized MG. Let $p$ be a place of $G$. Let $D$ be a spatial repartition of delays.
$0\le D(p)<\mathsf{k}$ holds (where $\mathsf{k}$ is the periodicity of $G$).
\end{lem}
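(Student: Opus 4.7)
The lower bound $0 \le D(p)$ is immediate from Definition \ref{def_delay}: the value $Delay(p,i) = M_{i-1}(p) - FT_i(p^\bullet)$ counts tokens present in $p$ that are not consumed, so it is non-negative at every step, hence so is its sum $D(p)$ over a period.

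For the upper bound, my plan is to exhibit a cycle $c$ that contains $p$ and whose total delay over one period is strictly less than $\mathsf{k}$, and then use non-negativity of delays to conclude $D(p) < \mathsf{k}$. Concretely, I would first locate a suitable cycle using Definition \ref{NQequalization}: the transition $p^\bullet$ lies in the same SCC as $p$, so by $\mathds{N}$-equalization it belongs to a cycle $c$ with $M(c)/(L(c)+1) < throughput(G)$. I would then argue that without loss of generality this cycle may be taken to pass through $p$ itself (since $p^\bullet$ has $p$ as an input place and we are free to rewire the entry into $p^\bullet$ through $p$, using strong connectivity of the SCC to close back; this is precisely the content that makes $\mathds{N}$-equalization ``place-wise'' rather than only ``transition-wise'').

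Having such a cycle $c$ containing $p$, the rest is algebraic. The inequality $M(c)/(L(c)+1) < throughput(G)$ rewrites, using $throughput(G) = \mathsf{k}/\mathsf{p}$, as
\begin{equation*}
M(c)\,\mathsf{p} \;<\; (L(c)+1)\,\mathsf{k},
\qquad \text{i.e.} \qquad
M(c)\,\mathsf{p} - L(c)\,\mathsf{k} \;<\; \mathsf{k}.
\end{equation*}
Applying Theorem \ref{th_delay_cycle} to any execution based on $D$, the left-hand side is exactly $\sum_{p'\in c} D(p')$. Since every summand is non-negative and $p \in c$, we get $D(p) \le \sum_{p'\in c} D(p') < \mathsf{k}$, completing the upper bound.

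The main obstacle I expect is the step asserting that the cycle $c$ guaranteed by $\mathds{N}$-equalization at transition $p^\bullet$ can be chosen to pass through the particular input place $p$ rather than some other place of ${}^\bullet p^\bullet$. This is not a literal consequence of Definition \ref{NQequalization} as written, and would need either a short graph-theoretic argument (exchanging the entry arc into $p^\bullet$ with a path through $p$ provided by strong connectivity, while checking that the slack inequality $M/(L+1) < throughput$ is preserved) or a minor strengthening of the equalization condition to the place-level form ``every place in an SCC lies on such a cycle.'' Everything else in the proof is routine rearrangement of the bound supplied by Theorem \ref{th_delay_cycle}.
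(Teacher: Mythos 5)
Your argument is essentially the paper's own proof: the paper likewise takes, for the given place $p$, a cycle $c$ through $p$ satisfying the $\mathds{N}$-equalization inequality, rewrites $M(c)/(L(c)+1)<\mathsf{k}/\mathsf{p}\le M(c)/L(c)$ into $0\le M(c)\,\mathsf{p}-L(c)\,\mathsf{k}<\mathsf{k}$, identifies this quantity with $\sum_{p'\in c}D(p')$ via Theorem \ref{th_delay_cycle}, and concludes by non-negativity of the individual terms (``even if all the delays of the cycle are merged in one place''). The transition-wise versus place-wise issue you flag is genuine but is not resolved in the paper either --- its proof simply asserts that every place lies on such a cycle, i.e.\ it silently uses the place-level reading of Definition \ref{NQequalization} (the same reading used in the discussion of Figure \ref{figure_ntee}, where the condition is checked place by place) --- so the cleaner fix is to adopt that place-level formulation of the definition rather than your ``rewiring'' sketch, which as you note would still require verifying that the slack inequality survives the exchange of arcs.
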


\begin{proof}
For all places $p$ in $G$, there is a cycle $c$ such that 
$\Sigma_{p \in c} D(p)=M(c)*\mathsf{p}-L(c)*\mathsf{k}$.

Moreover, if $G$ is $\mathds{N}$-equalized,\\
 $M(c)/(L(c)+1) < throughput(G) \le throughput(c)$\\ 
 $\Leftrightarrow M(c)/(L(c)+1)< \mathsf{k}/\mathsf{p} \le M(c)/L(c)$.
The two inequations hold:
\begin{itemize}
\item $M(c)/(L(c)+1)< \mathsf{k}/\mathsf{p} \Leftrightarrow M(c)*\mathsf{p}<(L(c)+1)*\mathsf{k} \Leftrightarrow M(c)*\mathsf{p}-L(c)*\mathsf{k}<\mathsf{k}$.
\item $\mathsf{k}/\mathsf{p} \le M(c)/L(c) \Leftrightarrow M(c)*\mathsf{p}-\mathsf{k}*L(c)\ge 0$.
\end{itemize}
The two inequations can be merged in $0 \le  M(c)*\mathsf{p}-L(c)*\mathsf{k} < \mathsf{k}  \Leftrightarrow 0 \le \Sigma_{p \in c} D(p) < \mathsf{k}$.
Even if all the delays of the cycle are merged in one place, $D(p)<\mathsf{k}$.
\end{proof}

The major complexity of the $\mathds{N}$-equalization comes from the interleaving of cycles in the MG. The addition of an extra place on a path may increase the latency of many cycles and some of them can become slower that a critical cycle while some others still require extra places. Consequently, all the cycles have to be considered simultaneously to find the correct location of the additional places. In \cite{MEMOCODE06,EURASIP06}, integer linear programming is used to specify all the $\mathds{N}$-equalization constraints. A more elegant solution can be built based on (max,plus) algebra \cite{QuadratCohenBaccelli92} by considering the incidence matrix of the MG and its evolution over a period.

In Figure \ref{figure_MGwithLatency}, 
the $\mathds{N}$-equalization may appear trivial because many places belong to only one cycle.
The left cycle in Figure \ref{figure_MGwithLatency}-b is faster than the right cycle, so an
extra place can be added after the leftmost place.
The critical (right) cycle has a throughput of $4/7$. The left cycle has a throughput $2/3$. 
The inequation $2/(3+1)<4/7\le 2/3$ holds.

\begin{figure}[hbpt]
 \begin{center}
   {\includegraphics[scale=0.75]{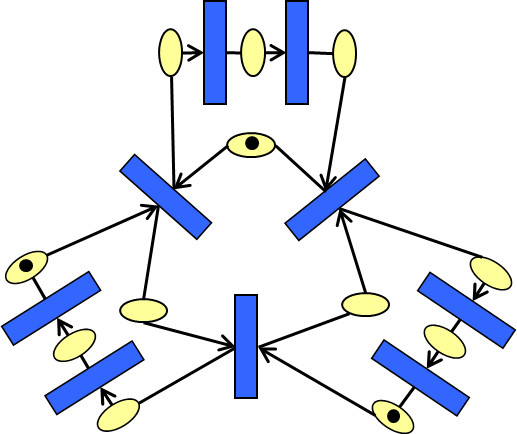}}
  \caption{This MG is already $\mathds{N}$-equalized.} 
  \label{figure_ntee}
  \end{center}
\end{figure}

Figure \ref{figure_ntee} presents a non-trivial example of $\mathds{N}$-equalization.
The outer cycle is critical with a throughput $2/9$. There is three cycles with a throughput
$1/4$. Since $1/(4+1) < 2/9\le 1/4$, the $\mathds{N}$-equalization condition of 
Definition \ref{NQequalization} hold. The inner cycle has a throughput $1/3$ so it
seems that an extra place could be added to equalize it ($1/(3+1)>2/9$)
but every place of this cycle also belongs to another cycle with a throughput $1/4$.
Consequently, the MG is already $\mathds{N}$-equalizated.

As we have seen in Section \ref{sec_overview}, the $\mathds{N}$-equalization of an MG is the preliminary step of the proposed algorithm.

\section{Balanced binary words}
\label{section_balanced_binary_word}

This section presents the basic definitions and well-known results on balanced binary words (\cite{ALCOW}). Up to our knowledge, Theorem \ref{theooperation}, that presents the relation between the operation of rotation and transposition, is original. The goal of this section is to present all these results in a way that eases the comprehension of the proposed algorithm.

\subsection{Finite and infinite binary words}
\label{subsection_bin_word}

As usual the set of binary values is noted $\mathds{B}= \{\sO, \sI\}$, 
$\mathds{B}^*$ the set of finite binary words, 
$\mathds{B}^n$ the set of binary words of length $n$,
$\mathds{B}^+$ the set of non-empty finite binary words, 
$\mathds{B}^{\omega}$ the set of infinite binary words, 
and  $\varepsilon$ the empty word.
We note  $\mathds{B}^{\infty}=\mathds{B}^*\cup \mathds{B}^{\omega}$,
the set of finite or infinite binary words.

For $u \in \mathds{B}^{\infty}$, we note $|u|$ the length of $u$ (with $|u|=\infty$ whenever $u\in \mathds{B}^{\omega}$).
Similarly we note $|u|_1$ and $|u|_0$ the number of occurrences of letters $\sI$ and $\sO$ in $u$ respectively.
Also, for $u \in \mathds{B}^+$ we note $slope(u)$ the ratio $|u|_1 / |u|$. 
$\mathds{B}_\mathsf{k}^\mathsf{p}=\{u\mid u\in \mathds{B}^\mathsf{p}$ and $|u|_1=\mathsf{k}\}$.
For $i\leq |u|$ we note $u(i)$ the $i^{th}$ letter of $u$.

The lexicographic ordering on words is defined as:
for $u, v \in \mathds{B}^{\infty}$, $u < v$ iff $\exists i\in \mathds{N}$, $\forall j<i$, $ u(j)=v(j)$ and 
either $u(i)=\sO$ and  $v(i)=\sI$ or $|u|=i-1$ and $|v|\ge i$. This order is total. For any finite subset $V$ of $\mathds{B}^\infty$, 
$inf(V)$ and $sup(V)$ are respectively its lowest and highest elements for this ordering.
Finally, for $u \in \mathds{B}^*$ and $v \in \mathds{B}^\infty$, $u$ is a factor of $v$ if  $\exists u_1 \in \mathds{B}^*$, 
$u_2 \in \mathds{B}^\infty$ such that $v=u_1.u.u_2$.


\begin{defn}[Ultimately $\mathsf{k}$-periodic binary word]
\label{def_ukpw}
An infinite binary word is called {\em ultimately $\mathsf{k}$-periodic} if it is of the form $u.v^\omega$, with $u\in\mathds{B}^*$ and 
$v \in \mathds{B}^+$ with $|v|_1=\mathsf{k}>0$.
\end{defn}
It is called simply {\em $\mathsf{k}$-periodic} if in addition $u=\varepsilon$.
It is called {\em ultimately periodic} if $\mathsf{k}=1$.
It is called only {\em periodic} if both conditions occur.
For an ultimately $\mathsf{k}$-periodic word, $u$ is called the {\em initial} part, $v$, the {\em steady} part, $\mathsf{k}=|v|_1$ is
the {\em periodicity}, and $\mathsf{p}=|v|$ is the {\em period}. By definition $slope(u.v^\omega)=slope(v)$. 
$\mathds{P}$ is the set of ultimately periodic infinite binary words and $\mathds{P}_\mathsf{k}^\mathsf{p}$ is the set of such word of periodicity $\mathsf{k}$ and period $\mathsf{p}$.

\begin{exmp}
$11.(0110101)^\omega$  is 4-periodic with period 7, and so is in $\mathds{P}_4^7$.
\end{exmp}

Because the ASAP execution of an MG is ultimately periodic, the proposed algorithm mainly focus on a single period of execution that aim to be indefinitely repeated. Thus, the following results concern finite binary words. In the proposed algorithm, for each transition $t$ of an MG, the appropriate words $v_t$ and $u_t$ are found and the ultimately $\mathsf{k}$-periodic word $u_t.(v_t)^\omega$ is built to represent the schedule of $t$.

\subsection{Rotation and transposition}
\label{subsection_rot_trans}

As we have seen in Section \ref{sec_overview}, the proposed algorithm computes the schedule of every transition of the MG.
To do so, the schedule of a transition is deducted from the schedule of one of its predecessors ($\{{}^\bullet p\mid p\in {}^\bullet t\}$) using the transposition and rotation. In Section \ref{subsection_relation}, we illustrate the link between the rotation and the effect of a latency on a schedule as well as the link between the transposition and the effect of a delay on a schedule.

\begin{defn}[Unitary forward rotation]
The unitary forward rotation is defined as $\rho$: $\mathds{B}^* \to \mathds{B}^*$,
$\rho(\varepsilon)=\varepsilon$, and
$\forall u \in \mathds{B}^*$, $\forall b \in \mathds{B}$, $\rho(u.b)=b.u$.
\end{defn}

\begin{defn}[Rotation]
\label{def_rotation}
Let $u \in \mathds{B}_\mathsf{k}^\mathsf{p}$. 
we note $\rho^n(u)$ the $n$ successive unitary forward rotation of $u$.
$\rho^{0}(u)=u$, $\rho^{1}(u)=\rho(u)$, $\rho^n(u)=\rho^{n-1} \circ \rho(u)$ and,\\
$\rho^{-n}(u)=v$ when $u=\rho^{n}(v)$. The parameter $n$ is called the spin of the rotation.
\end{defn}

\begin{exmp}
$\rho^3(1101010)= 0101101$, 
$\rho^{-3}(1101010)= 1010110$ and, 
$\rho^\mathsf{p}(u)=\rho^0(u)=u$
\end{exmp}

\begin{defn}[Orbit]
Let $u \in \mathds{B}^*$, the set of all rotations of $u$ is called the orbit of $u$ and is noted $O(u)$.
\end{defn}

\begin{exmp}
For $u=0110101$, $O(u)=\{u,\rho^1(u),...,\rho^6(u)\}=\{0110101,1011010,$ $0101101,1010110,0101011,1010101,1101010\}$
\end{exmp}

\begin{defn}[Transposition]
\label{transposition}
Let $u$, $v \in \mathds{B}^{\infty}$. $v$ is called the unitary forward transpose of $u$
(or simply transpose for short) and noted $v=\tau(u,\Delta)$, iff $\exists u_1 \in \mathds{B}^*$ and $\exists u_2\in \mathds{B}^{\infty}$,
$u=u_1.1.0.u_2$, $v=u_1.0.1.u_2$, and $\Delta=|u_1|+1$. $\Delta$ is called the location of the transposition.
By definition, if $u=0.u_1.1$, $\tau(u,|u|)=1.u_1.0$ where $u$ is finite.
\end{defn}

\begin{exmp}
$\tau(1010101,3)=1001101$, 
$\tau(1101010,3)$ is not defined, 
$\tau(011,3)=110$,
$\tau((10101)^{\omega},3)=10011.(10101)^{\omega}$ and,
$(\tau(10101,3))^{\omega}=(10011)^{\omega}$.
\end{exmp}

\subsection{Balanced binary words}
\label{subsection_defBBW}

The proposed algorithm computes an execution where all schedules are ultimately $\mathsf{k}$-periodic balanced binary words with a period $\mathsf{p}$.

\begin{defn}[Balanced binary word]
\label{balanced}
A finite binary word $u \in \mathds{B}^+$ is said {\em balanced} if $\forall v,t$, two factors of $u^\omega$ such that
$|v|=|t|$,  the following property holds: $-1\leq |v|_1-|t|_1\leq 1$.
\end{defn}

The set of finite balanced binary words with length $\mathsf{p}$ and containing $\mathsf{k}$ occurrences of $\sI$ is denoted by $\mathds{S}_\mathsf{k}^\mathsf{p}$. Also, $u \in \mathds{S}_\mathsf{k}^\mathsf{p}$ is said primitive when $\mathsf{k}$ and $\mathsf{p}$ are mutually prime. By extension an ultimately periodic word is called {\em balanced} if its steady part is. We have chosen the letter $\mathds{S}$
for {\em Smooth}.

In \cite{ALCOW}, the authors prove that i) in a balanced binary word $u$, the number of $\sI$ in every factor of $u^\omega$ with a length $l$ is either $\lfloor {l*|u|_1 /|u|}\rfloor$ or $\lceil {l*|u|_1 /|u|}\rceil$, ii) all the balanced binary words with the same slope are equivalent by rotation (let $u,v\in \mathds{S}_\mathsf{k}^\mathsf{p}$, $O(u)=O(v)=\mathds{S}_\mathsf{k}^\mathsf{p}$), iii) $inf(\mathds{S}_{\mathsf{k}}^{\mathsf{p}})=0.u.1$ and $sup(\mathds{S}_{\mathsf{k}}^{\mathsf{p}})=1.u.0$ ($u \in \mathds{B}^{\mathsf{p}-2}$), and lastly iv) whenever $\mathsf{k}$ and $\mathsf{p}$ are not mutually prime, every balanced binary word in $\mathds{S}_\mathsf{k}^\mathsf{p}$ (called in this case non-primitive) is the repetition of a smaller primitive balanced binary word: let $0< \mathsf{k} \leq \mathsf{p}$ and $GCD(\mathsf{k},\mathsf{p})=x$,
$\forall u \in \mathds{S}_\mathsf{k}^\mathsf{p}$, $\exists v \in \mathds{S}_{\mathsf{k}/x}^{\mathsf{p}/x}$ such that $u=v^x$.


When the proposed algorithm meets none-primitive balanced binary word, it considers the primitive balanced binary word imprinted into it. The execution is correct because when $u=v^x$, we have $u^\omega=v^{x^\omega}=v^\omega$.

\subsection{Transposition on balanced binary words}
\label{subsection_transpo}

Definition \ref{def_taufunction} defines a bijective function of transposition from $\mathds{S}_{\mathsf{k}}^{\mathsf{p}}$ to $\mathds{S}_{\mathsf{k}}^{\mathsf{p}}$. It requires some intermediate results.

\begin{lem}[Transposition in $\mathds{S}_\mathsf{k}^\mathsf{p}$]
\label{delayskp}
$\forall u\in \mathds{S}_\mathsf{k}^\mathsf{p}$ with $\mathsf{k}$ and $\mathsf{p}$ relatively prime, There exists a unique $\Delta$ such that $\tau(u,\Delta) \in \mathds{S}_\mathsf{k}^\mathsf{p}$.
\end{lem}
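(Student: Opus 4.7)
Since $\gcd(\mathsf{k},\mathsf{p})=1$, $u$ is primitive, so by property (ii) above its orbit $O(u)$ coincides with $\mathds{S}_\mathsf{k}^\mathsf{p}$ and has $\mathsf{p}$ distinct elements. My plan is to recast the question in terms of the cyclic \emph{gap sequence} of $u$ and then to induct on $\mathsf{p}$, exploiting the Euclidean-algorithm structure typical of Sturmian words.

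Concretely, let $s_1<s_2<\cdots<s_\mathsf{k}$ list the positions of the $\sI$'s in $u$ and let $d_j=s_{j+1}-s_j$ (indices cyclic modulo $\mathsf{p}$, so that $d_\mathsf{k}=\mathsf{p}+s_1-s_\mathsf{k}$). A standard consequence of property (i) for primitive balanced words of slope $\mathsf{k}/\mathsf{p}$ is that the $d_j$ take only the two consecutive values $q=\lfloor\mathsf{p}/\mathsf{k}\rfloor$ and $q+1$, with exactly $r=\mathsf{p}-q\mathsf{k}$ of them equal to $q+1$; re-encoding $q\mapsto\sO$, $q+1\mapsto\sI$ produces a word $g(u)\in\mathds{S}_r^\mathsf{k}$ which is itself primitive balanced, because $\gcd(r,\mathsf{k})=\gcd(\mathsf{p},\mathsf{k})=1$ and $r<\mathsf{k}$. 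A transposition $\tau(u,\Delta)$ (wrap-around case included) amounts to sliding the $j$-th $\sI$ of $u$ cyclically one step to the right, which turns the gap pair $(d_{j-1},d_j)$ into $(d_{j-1}+1,d_j-1)$. Keeping all gaps in $\{q,q+1\}$ therefore forces $(d_{j-1},d_j)=(q,q+1)$, i.e.\ $(g(u)(j-1),g(u)(j))=(\sO,\sI)$, which after the swap becomes $(\sI,\sO)$. So admissible transpositions on $u$ are in one-to-one correspondence with admissible \emph{inverse} transpositions on the shorter word $g(u)$.

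From this correspondence, the induction step is direct: by the inductive hypothesis applied to $g(u)\in\mathds{S}_r^\mathsf{k}$ (or rather to its $\tau^{-1}$-symmetric version, obtained by exchanging the letters $\sO$ and $\sI$), there is exactly one index $j$ for which the swap keeps $g(u)$ balanced, and that index pins down a unique $\Delta$ with $\tau(u,\Delta)\in\mathds{S}_\mathsf{k}^\mathsf{p}$. The two base cases to be verified separately are $\mathsf{k}=1$ (single gap of length $\mathsf{p}$, direct enumeration, the wrap-around clause of Definition~\ref{transposition} being triggered precisely when $u$ ends with the $\sI$) and the dual $\mathsf{p}-\mathsf{k}=1$, handled identically after exchanging the roles of $\sO$ and $\sI$.

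I expect the main obstacle to be the careful bookkeeping around the wrap-around clause of Definition~\ref{transposition}: one must verify that a wrap-around transposition in $u$ is correctly identified, via the gap encoding, with either an internal or a wrap-around transposition in $g(u)$, and that the direction reversal from $\tau$ to $\tau^{-1}$ does not disrupt the induction. Verifying rigorously the two-values property of the gap sequence, together with the balancedness of $g(u)$ itself, is also a required but routine supplement to property (i) already quoted in the text.
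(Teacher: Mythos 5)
Your plan is sound, but it takes a genuinely different route from the paper. The paper's own proof is a two-line extremality argument built on the quoted facts (ii) and (iii) of Section~\ref{subsection_defBBW}: since $inf(\mathds{S}_\mathsf{k}^\mathsf{p})=0.u.1$, any internal transposition replaces a factor $\sI\sO$ by $\sO\sI$ and hence produces a lexicographically strictly smaller word, which lies below the minimum of $\mathds{S}_\mathsf{k}^\mathsf{p}$ and therefore cannot be balanced; the only surviving location is the wrap-around on the last bit, which sends $inf(\mathds{S}_\mathsf{k}^\mathsf{p})$ to $sup(\mathds{S}_\mathsf{k}^\mathsf{p})=1.u.0$, and the claim is then transported to every element of $\mathds{S}_\mathsf{k}^\mathsf{p}=O(inf(\mathds{S}_\mathsf{k}^\mathsf{p}))$ by rotation. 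You instead descend along the Euclidean algorithm through the gap (derived) word $g(u)\in\mathds{S}_r^\mathsf{k}$. Both arguments work, and both sweep the same dust under the same rug: the paper's phrase ``consistent modulo rotation'' hides exactly the wrap-around bookkeeping you flag as your main obstacle. What your route costs extra is the derivation lemma in \emph{both} directions --- you need not only that $g(u)$ is balanced, but also the converse, that a word whose gaps lie in $\{q,q+1\}$ is balanced as soon as its derived word is, because ``admissible'' means the transposed word itself lands in $\mathds{S}_\mathsf{k}^\mathsf{p}$; this converse is standard Sturmian/Christoffel theory but is not among the facts quoted in the paper, whereas the extremality argument uses only the quoted ones. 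What your route buys is a constructive localisation of $\Delta$ (recursively, through the continued-fraction expansion of $\mathsf{k}/\mathsf{p}$) and independence from orbit transitivity, at the price of hand-checked base cases $\mathsf{k}=1$ and $\mathsf{p}-\mathsf{k}=1$ and a heavier supporting apparatus.
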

\begin{proof}
If the transposition is applied to any $\sI$ of $inf(\mathds{S}_\mathsf{k}^\mathsf{p})$, the transpose is a lower word which is consequently not balanced except for the last bit of $inf(\mathds{S}_\mathsf{k}^\mathsf{p})$, in this case, the transpose is $sup(\mathds{S}_\mathsf{k}^\mathsf{p})$. This result is consistent modulo rotation.
\end{proof}

If $\mathsf{k}$ and $\mathsf{p}$ are not relatively prime, $GCD(\mathsf{k},\mathsf{p})=x$. $\forall u\in \mathds{S}_\mathsf{k}^\mathsf{p}$, $u=v^x$. We define $\Delta=\Delta'$ such that $\Delta'$ is the unique location where $\tau(v,\Delta') \in \mathds{S}_\mathsf{k/x}^\mathsf{p/x}$. 

Lemma \ref{delayskp} shows that $\Delta$ is the last position of $inf(\mathds{S}_\mathsf{k}^\mathsf{p})$. Starting from this location, $\Delta$ can be found in every word of $\mathds{S}_\mathsf{k}^\mathsf{p}$ .

\begin{cor}
In $\rho^n(inf(\mathds{S}_\mathsf{k}^\mathsf{p}))$, $\Delta=\mathsf{p}+n\equiv n \mod \mathsf{p}$.
\end{cor}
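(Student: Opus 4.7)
The plan is to show that rotation and the (unique) balancing transposition commute, in the sense that moving the word forward by one position shifts the transposition location forward by one position as well; then to iterate.

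First I would establish the following commutation relation: for any $u \in \mathds{S}_\mathsf{k}^\mathsf{p}$ with $\mathsf{k}$ and $\mathsf{p}$ coprime, if $\Delta$ is the unique location provided by Lemma~\ref{delayskp} for $u$, then the unique location $\Delta'$ for $\rho(u)$ satisfies $\Delta' \equiv \Delta + 1 \pmod{\mathsf{p}}$ (with representatives in $\{1,\dots,\mathsf{p}\}$). The argument is purely combinatorial: write $u = u_1 \cdot 1\cdot 0 \cdot u_2$ with $|u_1| = \Delta - 1$ in the generic case $\Delta < \mathsf{p}$. If $u_2$ is non-empty with last letter $b$, write $u_2 = u_2' \cdot b$. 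Then $\rho(u) = b \cdot u_1 \cdot 1 \cdot 0 \cdot u_2'$, so the factor $10$ sits at position $\Delta + 1$, giving $\tau(\rho(u),\Delta+1) = b \cdot u_1 \cdot 0 \cdot 1 \cdot u_2' = \rho(\tau(u,\Delta))$. Since the set $\mathds{S}_\mathsf{k}^\mathsf{p}$ is closed under rotation (all balanced words of the same slope form a single orbit, as recalled in Section~\ref{subsection_defBBW}), $\rho(\tau(u,\Delta)) \in \mathds{S}_\mathsf{k}^\mathsf{p}$, and by the uniqueness statement of Lemma~\ref{delayskp} this forces $\Delta' = \Delta + 1$.

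Second, I would handle the wrap-around case $\Delta = \mathsf{p}$ separately, since this is the step where the proof might trip up. When $\Delta = \mathsf{p}$, by Definition~\ref{transposition} we have $u = 0 \cdot u_1 \cdot 1$ and $\tau(u,\mathsf{p}) = 1 \cdot u_1 \cdot 0$. Rotating gives $\rho(u) = 1 \cdot 0 \cdot u_1$, in which the factor $10$ occupies positions $(1,2)$, so the transposition location for $\rho(u)$ is $1 \equiv \mathsf{p} + 1 \pmod{\mathsf{p}}$. Hence the shift-by-one rule extends across the wrap-around, which is precisely why the final statement is written as $\mathsf{p} + n \equiv n \pmod{\mathsf{p}}$.

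Third, I would conclude by induction on $n$. The base case $n = 0$ is the observation that $\inf(\mathds{S}_\mathsf{k}^\mathsf{p}) = 0 \cdot w \cdot 1$ (from Section~\ref{subsection_defBBW}), whose balancing transposition is at the last position, hence $\Delta = \mathsf{p}$ as established in the proof of Lemma~\ref{delayskp}. The inductive step applies the commutation relation above to pass from $\rho^{n-1}(\inf(\mathds{S}_\mathsf{k}^\mathsf{p}))$ to $\rho^{n}(\inf(\mathds{S}_\mathsf{k}^\mathsf{p}))$, adding $1$ to $\Delta$ modulo $\mathsf{p}$. Unrolling, the location for $\rho^n(\inf(\mathds{S}_\mathsf{k}^\mathsf{p}))$ is $\mathsf{p} + n \equiv n \pmod{\mathsf{p}}$, as claimed.

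The only real subtlety is the bookkeeping around the wrap-around position (the two definitional cases for $\tau$). Once one checks that a rotation sends the generic case $\Delta < \mathsf{p}$ to $\Delta + 1 < \mathsf{p}$ or to $\mathsf{p}$ (when $\Delta = \mathsf{p}-1$), and sends the wrap-around case $\Delta = \mathsf{p}$ to the new wrap-around-free case $\Delta = 1$, the induction goes through routinely; the non-primitive case reduces to the primitive one using $u = v^x$ as in the discussion immediately preceding the corollary.
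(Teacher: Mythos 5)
Your proof is correct and follows essentially the same route as the paper, which states the corollary without a separate argument and relies on the closing remark of Lemma~\ref{delayskp} that the result is ``consistent modulo rotation.'' Your commutation relation $\tau(\rho(u),\Delta+1)=\rho(\tau(u,\Delta))$, together with the explicit wrap-around check and the induction from $\Delta=\mathsf{p}$ at $inf(\mathds{S}_\mathsf{k}^\mathsf{p})$, is precisely the rigorous version of that remark.
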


We define the transposition function as the transposition applied on the bit $\Delta$ of a balanced binary word.

\begin{defn}[The transposition function on balanced binary words]\-\\
\label{def_taufunction}
We define the transposition function applied on balanced binary words as:
$\tau^n$: $\mathds{S}_\mathsf{k}^\mathsf{p} \to \mathds{S}_\mathsf{k}^\mathsf{p}$.
$\tau^0(u)=u$, $\tau(u)=\tau^1(u)=\tau(u,\Delta)$ where $\Delta$ is the same as in Lemma \ref{delayskp},  $\tau^n=\tau^{n-1} \circ \tau$, and $\tau^{-n}(u)=v$ if and only if $\tau^{n}(v)=u$.
If $\mathsf{k}$ and $\mathsf{p}$ are not relatively prime, $GCD(\mathsf{k},\mathsf{p})=x$. $\forall u\in \mathds{S}_\mathsf{k}^\mathsf{p}$, $u=v^x$. $\tau^n(u)=(\tau^n(v))^x$.
\end{defn}

\begin{exmp}
$\tau^1(1101010)=1011010$, $\tau^2(1010101)=0101101$, $\tau^\mathsf{p}(w)=w$, and $\tau(110110)=101101$.
\end{exmp}

\begin{lem}
The function $\tau^n$ is bijective.
\end{lem}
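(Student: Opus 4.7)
The plan is to prove bijectivity by exhibiting an inverse of $\tau$ on the finite set $\mathds{S}_\mathsf{k}^\mathsf{p}$. Since $\mathds{S}_\mathsf{k}^\mathsf{p}\subseteq \mathds{B}^\mathsf{p}$ is finite (and in fact a single rotation orbit, as recalled from \cite{ALCOW}), it suffices to prove that $\tau$ is injective.

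First I would reduce to the primitive case. When $\gcd(\mathsf{k},\mathsf{p}) = x > 1$, the map $v \mapsto v^x$ is a bijection $\mathds{S}_{\mathsf{k}/x}^{\mathsf{p}/x} \to \mathds{S}_\mathsf{k}^\mathsf{p}$ by the non-primitive structure result, and the recursive clause $\tau^n(v^x) = (\tau^n(v))^x$ in Definition \ref{def_taufunction} makes $\tau$ commute with this bijection. So bijectivity of $\tau$ on $\mathds{S}_\mathsf{k}^\mathsf{p}$ reduces to bijectivity of $\tau$ on $\mathds{S}_{\mathsf{k}/x}^{\mathsf{p}/x}$, which is primitive. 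I would therefore assume for the rest of the argument that $\mathsf{k}$ and $\mathsf{p}$ are relatively prime.

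Next I would establish a symmetric analogue of Lemma \ref{delayskp}: for every $w \in \mathds{S}_\mathsf{k}^\mathsf{p}$ there is a unique location $\Delta'$ at which the inverse transposition (swapping an $\sO\sI$ back to $\sI\sO$, including the cyclic wrap-around swap of the trailing $\sO$ with the leading $\sI$) produces another element of $\mathds{S}_\mathsf{k}^\mathsf{p}$. The argument runs parallel to the proof of Lemma \ref{delayskp}, but pivots on $sup(\mathds{S}_\mathsf{k}^\mathsf{p}) = \sI.u.\sO$ instead of $inf$: any interior inverse transposition yields a word strictly greater than $sup$ in the lex ordering, hence unbalanced; only the cyclic wrap-around produces a balanced word, namely $inf(\mathds{S}_\mathsf{k}^\mathsf{p}) = \sO.u.\sI$. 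The uniqueness then transfers to every other element of $\mathds{S}_\mathsf{k}^\mathsf{p}$ by rotation-invariance, exactly as in the corollary following Lemma \ref{delayskp}.

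This uniqueness defines a map $\sigma: \mathds{S}_\mathsf{k}^\mathsf{p} \to \mathds{S}_\mathsf{k}^\mathsf{p}$ satisfying $\sigma(\tau(u)) = u$ for all $u$, i.e. a left inverse of $\tau$. Injectivity of $\tau$ follows, and combined with finiteness of $\mathds{S}_\mathsf{k}^\mathsf{p}$ this yields bijectivity of $\tau = \tau^1$. Bijectivity of $\tau^n$ for arbitrary $n \in \mathds{Z}$ is then immediate from Definition \ref{def_taufunction} by composition. The main obstacle is carrying out the mirror of Lemma \ref{delayskp} cleanly, in particular handling the cyclic wrap-around case so that the symmetry is genuine; once that is in place, the rest is bookkeeping.
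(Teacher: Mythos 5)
Your proof is correct, but it follows a different route from the paper's. The paper's argument is a one-line surjectivity statement: it observes that $\tau^1(\rho^n(inf(\mathds{S}_\mathsf{k}^\mathsf{p})))=\rho^n(sup(\mathds{S}_\mathsf{k}^\mathsf{p}))$, so that $\tau$ carries the enumeration $\{\rho^n(inf)\}_n$ of $\mathds{S}_\mathsf{k}^\mathsf{p}$ onto the enumeration $\{\rho^n(sup)\}_n$, giving the one-to-one correspondence directly (implicitly using that, in the primitive case, $n\mapsto\rho^n(u)$ bijects $\mathds{Z}/\mathsf{p}\mathds{Z}$ onto the orbit). You instead prove injectivity by constructing an explicit left inverse $\sigma$, which requires a new ``mirror'' of Lemma \ref{delayskp}: uniqueness of the location where an inverse transposition ($\sO\sI\to\sI\sO$, with the cyclic wrap-around) preserves balancedness, pivoting on $sup$ rather than $inf$; finiteness of $\mathds{S}_\mathsf{k}^\mathsf{p}$ then upgrades injectivity to bijectivity. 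That mirror lemma is sound (an interior swap $\sO\sI\to\sI\sO$ strictly increases the word lexicographically, so applied to $sup$ it leaves $\mathds{S}_\mathsf{k}^\mathsf{p}$, and the wrap-around yields $inf$; uniqueness transfers along the orbit exactly as in the paper's corollary). What your approach buys is extra care at two points the paper glosses over: you treat the non-primitive case explicitly via the conjugation $v\mapsto v^x$ commuting with $\tau$, and you do not rely on the unproved identity $\tau(\rho^n(inf))=\rho^n(sup)$; the cost is the additional mirror lemma, whereas the paper's argument, once that identity is granted, is immediate.
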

\begin{proof}
Since $\tau^1(\rho^n(inf(\mathds{S}_\mathsf{k}^\mathsf{p})))=\rho^n(sup(\mathds{S}_\mathsf{k}^\mathsf{p}))$, there is a one to one correspondence
between the elements and the images through the $\tau$ function.
\end{proof}

\subsection{Equivalence between rotation and transposition on balanced binary words}
\label{majorresult}

Theorem \ref{theooperation} presents our original result on balanced binary word. It states that for any given balanced binary word $u$, the transpose of $u$ is equivalent to the rotation of $u$ with a spin $-\alpha$. Let us first define $\alpha$.

\begin{defn}[The alpha coefficient]
\label{defalpha}
Let $\mathsf{k}$, $\mathsf{p}$ be two relatively prime integers, $0<\mathsf{k}<\mathsf{p}$.  
$\alpha$ is the inverse of $-\mathsf{k} \mod \mathsf{p}$. So we have $-\mathsf{k}*\alpha\equiv 1 \mod \mathsf{p}$ and $\alpha$ relatively prime with $\mathsf{p}$.
\end{defn}

\begin{thm}
\label{theooperation}
$\forall u \in \mathds{S}_\mathsf{k}^\mathsf{p}$,  $\rho^{-\alpha}(\tau(u)) = u$.
\end{thm}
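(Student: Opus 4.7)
The plan is to reduce to $\gcd(\mathsf{k},\mathsf{p})=1$, prove that $\tau$ commutes with $\rho$ so that $\tau=\rho^c$ for a unique constant on the single orbit, and identify $c$ by evaluating at the canonical pair $\tau(inf)=sup$ via the mechanical-word representation. For the reduction, property (iv) of balanced words writes every $u\in\mathds{S}_\mathsf{k}^\mathsf{p}$ with $x=\gcd(\mathsf{k},\mathsf{p})>1$ as $u=v^x$ for a primitive $v\in\mathds{S}_{\mathsf{k}/x}^{\mathsf{p}/x}$; Definition \ref{def_taufunction} already enforces $\tau(v^x)=\tau(v)^x$, the analogue for $\rho$ is elementary, and Definition \ref{defalpha} takes $\alpha$ to be intrinsic to the primitive ratio. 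So the general case follows from the primitive one, and I henceforth assume $\gcd(\mathsf{k},\mathsf{p})=1$.

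Next, $\mathds{S}_\mathsf{k}^\mathsf{p}=O(inf)$ is a single $\rho$-orbit of size exactly $\mathsf{p}$ by property (ii) and coprimality, while Lemma \ref{delayskp} makes $\tau$ a bijection on the same set. I will show $\tau\circ\rho=\rho\circ\tau$: by the Corollary immediately following Lemma \ref{delayskp}, the unique transposition location inside $\rho^n(inf)$ is at the absolute cyclic slot $n\bmod\mathsf{p}$, so the ``$10$'' block being swapped by $\tau$ rides in lock-step with the cyclic shift performed by $\rho$; swapping then rotating, or rotating then swapping, both act on the same absolute slot and produce the same word. Commutation on a cyclic $\rho$-orbit of size $\mathsf{p}$ forces $\tau=\rho^c$ for a unique $c\in\mathds{Z}/\mathsf{p}\mathds{Z}$.

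It remains to identify $c$. The proof of Lemma \ref{delayskp} gives $\tau(inf)=sup$, so $c$ is the unique exponent with $\rho^c(inf)=sup$. I compute it through the Sturmian/mechanical representations $inf(i)=\lfloor(i{+}1)\mathsf{k}/\mathsf{p}\rfloor-\lfloor i\mathsf{k}/\mathsf{p}\rfloor$ and $sup(i)=\lceil(i{+}1)\mathsf{k}/\mathsf{p}\rceil-\lceil i\mathsf{k}/\mathsf{p}\rceil$ (0-indexed and cyclic), combined with the identity $\lceil j\mathsf{k}/\mathsf{p}\rceil=\lfloor(j\mathsf{k}-1)/\mathsf{p}\rfloor+1$. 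Introducing an integer $a$ with $\mathsf{k}a=1+m\mathsf{p}$, the substitution $j\mathsf{k}-1=\mathsf{k}(j-a)+m\mathsf{p}$ rewrites each ceiling as a floor with argument shifted by $-a$; the additive constant $m+1$ telescopes in the difference, producing $sup(i)=inf((i-a)\bmod\mathsf{p})$, i.e.\ $sup=\rho^a(inf)$ and hence $c=a$. Matching with Definition \ref{defalpha}, where $-\mathsf{k}\alpha\equiv 1\pmod\mathsf{p}$, converts the relation $\tau=\rho^c$ into the theorem's form $\rho^{-\alpha}(\tau(u))=u$.

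The main obstacle is the ceiling-to-floor telescoping in the last step, which is where the arithmetic of $\mathsf{k}$, $\mathsf{p}$, and $\alpha$ actually enters. The commutation argument, though conceptually transparent, also requires careful bookkeeping to confirm that the absolute position of the swapped ``$10$'' block really matches under both orderings, and that the sign conventions for $\rho$ (cyclic right-shift) and $\alpha$ (inverse of $-\mathsf{k}$) line up.
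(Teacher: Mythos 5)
Your proof is correct, and it takes a genuinely different route from the paper's. The paper argues directly and computationally: it represents $u$ by the mechanical-word formula $u(i)=\lfloor i\mathsf{k}/\mathsf{p}\rfloor-\lfloor(i-1)\mathsf{k}/\mathsf{p}\rfloor$, writes $\rho^\alpha(u)(i)=u(i-\alpha)$, substitutes $-\mathsf{k}\alpha\equiv 1 \pmod{\mathsf{p}}$ to turn the shift into a ``$+1$ inside the floor,'' and checks bit by bit that $u$ and $\rho^\alpha(u)$ agree everywhere except at one adjacent pair where $\sO\sI$ becomes $\sI\sO$ --- i.e.\ it verifies the transposition relation pointwise for (implicitly) one canonical representative of the orbit, leaving the extension to all of $\mathds{S}_\mathsf{k}^\mathsf{p}$ and the non-primitive case to rotation-equivalence. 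You instead establish the structural fact first (the transposition location rides in lock-step with $\rho$ by the Corollary after Lemma \ref{delayskp}, so $\tau$ commutes with $\rho$ and must equal $\rho^c$ on the single free orbit of size $\mathsf{p}$), and then need only identify $c$ at the single pair $\tau(inf)=sup$ via the same floor/ceiling arithmetic, with the reduction to $\gcd(\mathsf{k},\mathsf{p})=1$ made explicit. What your approach buys is that the bitwise verification collapses to one canonical instance, the ``why does this hold for every $u$ in the orbit'' question is answered structurally rather than left implicit, and the non-primitive case is actually treated; the paper's approach is shorter and avoids the commutation bookkeeping (including the wrap-around case $\Delta=\mathsf{p}$), which is the one place in your argument that still needs to be written out carefully. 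One shared caveat: both your computation and the paper's establish $\tau=\rho^{-\alpha}$ (equivalently $\rho^{\alpha}(\tau(u))=u$; check $\tau(1101010)=1011010=\rho^{-5}(1101010)$ with $\alpha=5$), so the displayed formula $\rho^{-\alpha}(\tau(u))=u$ in the statement carries a sign slip that neither proof literally closes --- this is a defect of the statement, not of your argument.
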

\begin{proof}

We are going to prove that $u=u_1.0.1.u_2$ and $\rho^\alpha(u)=u_1.1.0.u_2$ ($u_1,u_2 \in \mathds{B}^*$). This means that $u$ is the transpose of $\rho^\alpha(u)$. So we compare $u$ and $\rho^\alpha(u)$ bit-wise for $i \in [\![1,\mathsf{p}]\!]$.  
$\rho^\alpha(u)(i)=u(i-\alpha)=\lfloor {(i-\alpha)*\mathsf{k} / \mathsf{p} }\rfloor - \lfloor{(i-1-\alpha)*\mathsf{k} / \mathsf{p} } \rfloor$.\\
$\alpha$ in $u(i-\alpha)$ is replaced by its value and the equation is simplified in:\\
$u(i-\alpha)=\lfloor {\frac{i*\mathsf{k}+1}{\mathsf{p}}}\rfloor - \lfloor{\frac{(i-1)*\mathsf{k}+1}{\mathsf{p}}} \rfloor$. Otherwise,
$u(i)=\lfloor \frac{i*\mathsf{k}}{\mathsf{p}}\rfloor - \lfloor \frac{(i-1)*\mathsf{k}}{\mathsf{p}}\rfloor$.
\-\\
For $i*\mathsf{k} \ne \mathsf{k}-1$ and $i*\mathsf{k} \ne \mathsf{p}-1$ modulo $\mathsf{p}$, $u(i-\alpha)=u(i)$ and\\
for $i*\mathsf{k}=\mathsf{p}-1$ modulo $\mathsf{p}$, $u(i-\alpha)=1$, $u(i)=0$, moreover,\\
$(i+1)*\mathsf{k}=\mathsf{p}-1+\mathsf{k}=\mathsf{k}+1$ modulo $\mathsf{p}$, and $u(i+1-\alpha)=0$, $u(i+1)=1$

\end{proof}

The proposed algorithm computes the schedules of the transitions from the schedules of its parent transitions. These schedules are equivalent by rotation because they are all balanced. Thanks to Theorem \ref{theooperation}, the rotation is used instead of transposition in the schedule computation formulas. This simplification lightens the formulas and allows correctness checking of the proposed algorithm.

\subsection{From word to schedule}
\label{subsection_relation}
The unitary forward rotation represents the effect of a latency on a transition schedule while the unitary forward transposition represents the effect of a delay. Figure \ref{PDTR} focuses on two transitions of a $4$-periodic MG with a period $7$. The schedules of $A$ and $B$ are binary words with length $7$ containing $4$ bits with the value $1$. In Figure \ref{PDTR}-a, the schedule of $B$ is the unitary rotation of the schedule of $A$ because no delay is affected to the place in-between. The arrows illustrate this rotation ($B(i+1)=A(i)$, $\forall i \mod 7$). In Figure \ref{PDTR}-b, two delays are affected to the place in-between. $B$ does not compute all the tokens generated by $A$ as soon as they are available any more. Two of them are delayed. The schedule of $B$ is the double transposition of the rotation of the schedule of $A$. The first arrow in diagonal illustrates the rotation, the two next, the transpositions. In Figure \ref{PDTR}-c, thanks to Theorem \ref{theooperation}, the succession of operations presented in Figure \ref{PDTR}-b is replaced by the equivalent rotation of value: $1-2*\alpha$ where $1$ is the original rotation, $-2*\alpha$ represents the two transpositions. For $(\mathsf{k},\mathsf{p})=(4,7)$, we have $\alpha=5$ (Definition \ref{defalpha}). So the spin of the rotation is $5$ ($1-2*\alpha \equiv 1-2*5 \equiv -9\equiv 5 \mod 7$).

\begin{figure}[hbpt]
 \begin{center}
   {\includegraphics[scale=0.42]{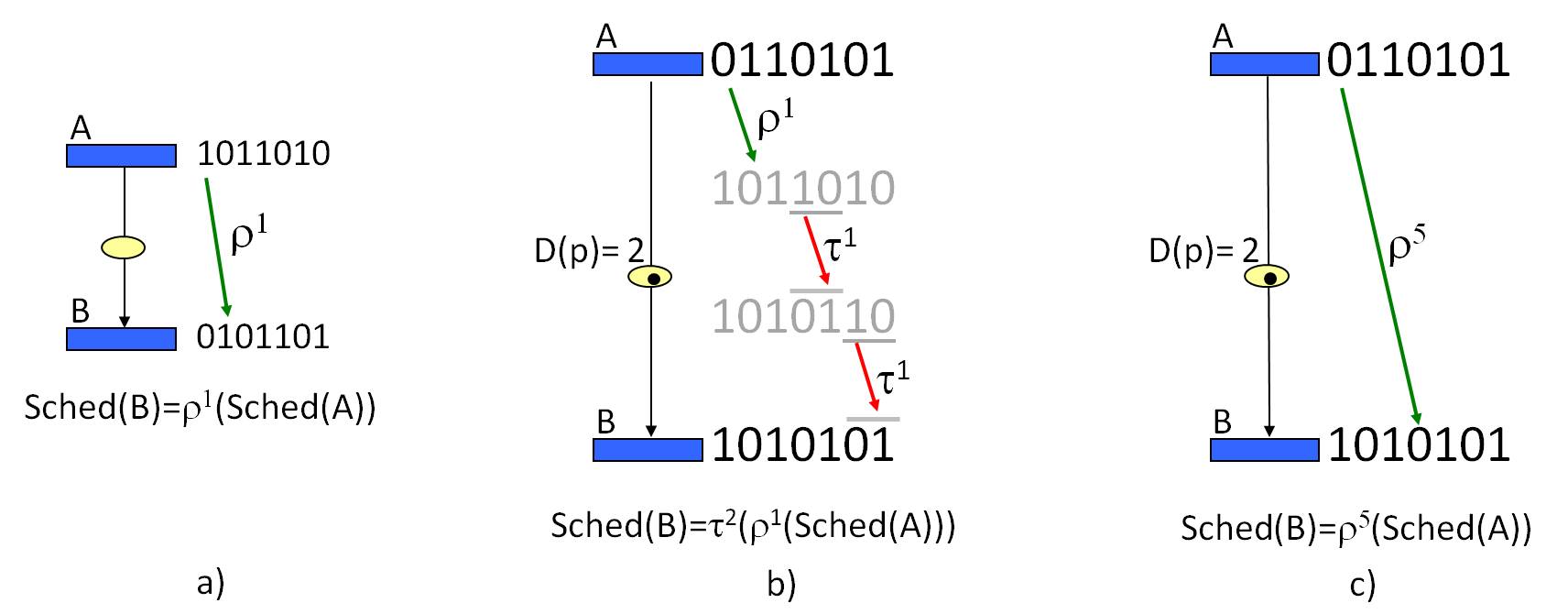}}
  \caption{a) The schedule of $B$ is the unitary rotation of the schedule of $A$. b) The schedule of $B$ is the double transposition of the rotation of the schedule of $A$ because the place in between is a 2-delays place. c) Thanks to Theorem \ref{theooperation}, the schedule of $B$ in b) is the rotation of spin $5$ of the schedule of $A$.} 
  \label{PDTR}
  \end{center}
\end{figure}

\section{Balanced scheduling of MG}
\label{section_bal_stat_sched}
This section details the proposed algorithm that computes an execution which is characterized by the following properties: i) the execution rate is maximal, ii) place sizes are minimal, and iii) after a guided initialization, the execution is ASAP. 

{\bf Input:} the proposed algorithm, presented in Algorithm 1, takes as input a live and strongly connected MG with a throughput inferior or equals to $1$. Section \ref{ssec_simpyconnected} discusses the application of the proposed algorithm on a simply connected MG. 

{\bf Output:} Algorithm 1 returns the computed execution along with the size of the places required for this execution. 

The following notation are used in Algorithm 1:
\begin{itemize}
\item $G$ is the MG in input and $M_0$ is its initial marking.
\item $D$ is the latest delays position (Definition \ref{def_latestdelaypos}).
\item $Exec_{initial}$ is the initial guided execution of $G$ from its initial marking to $M_{periodic}$.
\item $M_{periodic}$ is the marking of $G$ from which $Exec_{periodic}$ starts. 
\item $Exec_{periodic}$ is an balanced ASAP execution of $G$ from the marking $M_{periodic}$. 
\item $Sched(t)$ is the schedule of the transition $t$ in $Exec_{periodic}$.
\item The execution $Exec=Exec_{initial}.Exec_{periodic}$ is the output of the proposed algorithm.
\item $C_{Exec}$ gives place sizes according to $Exec$ (Definition \ref{def_capa}).
\end{itemize}

We consider that the preliminary step of the proposed algorithm is the $\mathds{N}$-equalization of the MG followed by the expansion of its latencies.
$\mathds{N}$-equalization is discussed in Section \ref{subsec_eqn} and expansion of latencies is discussed in Section \ref{ssec_latexp}.

\begin{algorithm}[htb]
\label{algo_main}
\caption{The proposed algorithm}
\begin{algorithmic}
\STATE \textbf{Input :} $G$ with its initial marking $M_0$.\\
\STATE \textbf{Output :} The execution $Exec$ and the place sizes $C_{Exec}$.\\
\STATE 1. $(\mathsf{k},\mathsf{p})\gets compute\_k\_p(G)$
\STATE 2. $D\gets compute\_D(G)$
\STATE 3. $Exec_{periodic}\gets compute\_Exec_{periodic}(G,D,\mathsf{k},\mathsf{p})$
\STATE 4. $M_{periodic}\gets compute\_M_{periodic}(G,Exec_{periodic},\mathsf{p})$
\STATE 5. $Exec_{initial}\gets compute\_Exec_{initial}(G,M_0,M_{periodic})$
\STATE 6. $Exec\gets Exec_{initial}.Exec_{periodic}$
\STATE 7. $C_{Exec}\gets compute\_C_{Exec}(G,D,\mathsf{k},\mathsf{p})$.
\RETURN $(Exec,C_{Exec})$
\end{algorithmic}
\end{algorithm}


\subsection{Algorithm details}

\subsubsection{Step 1: compute $\mathsf{k}$ and $\mathsf{p}$}
The formula is given in \cite{QuadratCohenBaccelli92}. $\mathsf{k}=GCD(M_0(c))$ and $\mathsf{p}=GCD(L(c))$, for all cycle $c$ of the CSCCs. Step 1 requires the enumeration of all the elementary cycles. This enumeration has an exponential complexity with respect to the number of transitions. It binds the overall complexity of the proposed algorithm.

\subsubsection{Step 2: compute the latest delays position $D$}
\label{sssection_delay}
$D$ has to be the latest delays position (Definition \ref{def_latestdelaypos}) in order to build the ASAP execution $Exec_{periodic}$. 
Theorem \ref{thm_ldp_ex} shows that the latest delays position can be deduced from any ASAP execution of $G$. Thus, Step 2 computes $D$ from the ASAP execution of $G$. Step 2 has a polynomial complexity according to the number of transitions. Algorithm 2 details Step 2.

\begin{algorithm}[htb]
\label{algo_step_2}
\caption{$compute\_D$}
\begin{algorithmic}
\STATE \textbf{Input :} $G$.
\STATE \textbf{Output :} $D$.
\STATE Run the ASAP execution of $G$.
\FORALL{$p\in P$}
\STATE $D(p)=\Sigma_{i=1}^{\mathsf{p}}Delay(p,j_0+i)$ where $j_0$ is the length of the initial part.
\ENDFOR
\WHILE{$D$ is not the latest delay position}
\FORALL{$t\in T$}
\STATE $forwarded\_delay=min(D(p)\mid \forall p\in{}^\bullet t)$
\FORALL{$p\in {}^\bullet t$}
\STATE $D(p)-=forwarded\_delay$
\ENDFOR
\FORALL{$p\in t^\bullet$}
\STATE $D(p)+=forwarded\_delay$
\ENDFOR
\ENDFOR
\ENDWHILE 
\RETURN $D$
\end{algorithmic}
\end{algorithm}

Figure \ref{figure_delay} presents $D$ on the running example. The right-most cycle, $c_1$, is critical, it does not contain any delay. The left-most cycle, $c_2$, is not. The difference of firing over a period is $|c_1|*|c_2|_1-|c_1|_1*|c_2|=7*2-4*3=2$. The places of $c_2$ that do not belong to $c_1$ should share $2$ delays. The left-most and top-most place contains all these delays because in the latest delays position, the delays have to occur as late as possible.

\begin{figure}[hbpt]
 \begin{center}
   {\includegraphics[scale=0.6]{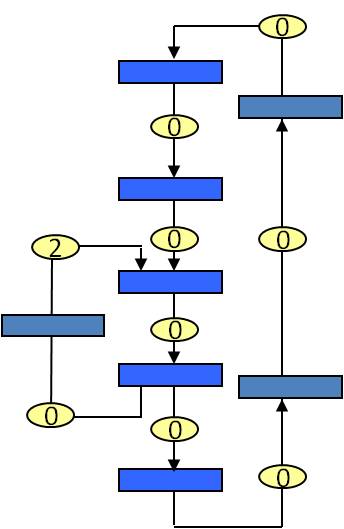}}
  \caption{$D$: The amount of delay is written within the place.} 
  \label{figure_delay}
  \end{center}
\end{figure}

\subsubsection{Steps 3: compute $Exec_{periodic}$}
\label{sssection_sched}
Step 3 affects a schedule to every transition with respect to $D$. Algorithm 3 details Step 3. It has a linear complexity according to the number of transitions.

\begin{algorithm}[htb]
\label{algo_step_3}
\caption{$compute\_Exec_{periodic}$}
\begin{algorithmic}
\STATE \textbf{Input :} $G$, $D$, $\mathsf{k}$, and $\mathsf{p}$.
\STATE \textbf{Output :} $Exec_{periodic}$.
\STATE Let $t\in T$, $Sched(t)\gets get\_a\_word\_in(\mathds{S}_\mathsf{k/r}^\mathsf{p/r})$ \COMMENT{with $r=GCD(\mathsf{k},\mathsf{p})$.}
\STATE $current\_transition\gets t$
\WHILE{$\exists t'\in T$ such that $Sched(t')$ is not defined}
\FORALL{$t'\in \{(current\_transition^\bullet)^\bullet\}$}
\STATE $Sched(t')\gets\rho^{1-D({}^\bullet t')*\alpha}(Sched(current\_transition))$
\ENDFOR
\STATE $current\_transition\gets t'$
\ENDWHILE 
\RETURN $Exec_{periodic}$
\end{algorithmic}
\end{algorithm}

In Figure \ref{figure_schedule}, Step 3 generates a balanced binary word $1101010\in \mathds{S}_4^7$ because the MG is $4$ periodic with a 
period $7$. Step 3 affects this word to a transition and it computes the schedule of the other transitions using the rotation.
The schedule of the 2-inputs transition ($1010101$) can be found from its right predecessor $\rho^1(0101011)$ or from its left predecessor $\rho^5(0110101)$.
The spin of this last rotation is $5\equiv 1-2*\alpha\mod \mathsf{p}$. The place in-between the transitions contains $2$ delays. Since $\alpha=5$, $1-2*\alpha=1-2*5=-9\equiv 5 \mod 7$ .

\begin{figure}[hbpt]
 \begin{center}
   {\includegraphics[scale=0.6]{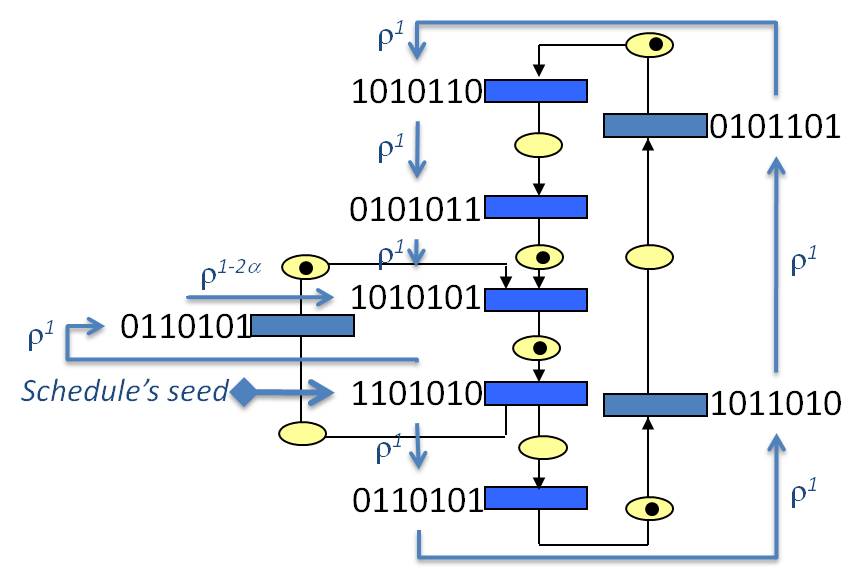}}
  \caption{From the schedule's seed, Step 3 generates all other schedules through rotation. The schedule of the 2-inputs
  				transition can be found from both its predecessor.} 
  \label{figure_schedule}
  \end{center}
\end{figure}

The consistency of this method is guaranteed because the number of delay for each cycle is conformed to Theorem \ref{th_delay_cycle}. The lemma \ref{lem_schedaffectation} formalizes this result.

\begin{lem}[Creation of $Exec_{periodic}$]
\label{lem_schedaffectation}
Step 3 is consistent.
\end{lem}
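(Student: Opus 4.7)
The plan is to reduce the consistency statement to the following observation: consistency holds iff, for every directed cycle $c$ of $G$, the sum of the rotation spins incurred when propagating the schedule once around $c$ is congruent to $0$ modulo $\mathsf{p}$. Indeed, in a strongly connected MG, if $t$ has two predecessor transitions $t_1,t_2$ via places $p_1,p_2\in {}^\bullet t$, both already scheduled from the seed, then the two propagation paths (seed $\leadsto t_1 \to p_1 \to t$) and (seed $\leadsto t_2 \to p_2 \to t$) can be closed into directed cycles using a return path from $t$ to the seed (which exists by strong connectedness). Agreement of the two computed schedules for $t$ is then equivalent to the ``cycle closure'' condition, and in particular the same argument shows that iterating a closed walk starting at the seed returns the original schedule. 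So the whole proof collapses to the cycle-closure computation.

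First I would write, for a cycle $c$, the total spin accumulated by one full traversal, using the formula in Step~3 that each place $p$ contributes spin $1 - D(p)\alpha$:
\begin{equation*}
\mathrm{spin}(c) \;=\; \sum_{p\in c}\bigl(1 - D(p)\alpha\bigr) \;=\; L(c) \;-\; \alpha\sum_{p\in c} D(p).
\end{equation*}
Next I would substitute the content of Theorem~\ref{th_delay_cycle} (which applies because $D$ is a valid spatial distribution of delays by Theorem~\ref{thm_ldp_ex}), giving $\sum_{p\in c} D(p)=M(c)\mathsf{p}-L(c)\mathsf{k}$. Therefore
\begin{equation*}
\mathrm{spin}(c) \;=\; L(c) - \alpha\bigl(M(c)\mathsf{p} - L(c)\mathsf{k}\bigr) \;=\; L(c)\bigl(1+\alpha\mathsf{k}\bigr) - \alpha M(c)\mathsf{p}.
\end{equation*}
Reducing modulo $\mathsf{p}$, the second term vanishes, and by Definition~\ref{defalpha} we have $\alpha\mathsf{k}\equiv -1 \bmod \mathsf{p}$, so $1+\alpha\mathsf{k}\equiv 0 \bmod \mathsf{p}$, and hence $\mathrm{spin}(c)\equiv 0 \bmod \mathsf{p}$. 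Since $\rho^{\mathsf{p}}$ acts as the identity on $\mathds{S}_\mathsf{k}^\mathsf{p}$, the schedule returned along any cycle equals the original one, which proves the cycle-closure condition. A minor point I would address: when $\mathsf{k}$ and $\mathsf{p}$ are not coprime, the seed is picked in $\mathds{S}_{\mathsf{k}/r}^{\mathsf{p}/r}$ (with $r=\gcd(\mathsf{k},\mathsf{p})$), and the same computation works with $\mathsf{k}/r,\mathsf{p}/r,\alpha'$ in place of $\mathsf{k},\mathsf{p},\alpha$; the key relation $\alpha'(\mathsf{k}/r)\equiv -1 \bmod (\mathsf{p}/r)$ still holds by definition, and Theorem~\ref{th_delay_cycle} divides through cleanly since $D$ is unchanged.

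The main obstacle is not the algebraic identity itself (it is essentially bookkeeping) but making rigorous the ``two-paths-differ-by-a-cycle'' step: the propagation in Algorithm~3 walks the graph in the underlying undirected sense (because it steps from a transition to the transitions sharing a successor place, namely $(current\_transition^\bullet)^\bullet$), whereas the spin formula was derived for directed cycles. I would handle this by showing that any undirected discrepancy can be rephrased as a directed cycle: if the two incoming paths reach $t$ through $p_1$ and $p_2$, I pick a return path in $G$ (which exists since $G$ is strongly connected) from $t$ back to the seed; closing with each incoming path yields two directed cycles, each of spin $\equiv 0 \bmod \mathsf{p}$ by the computation above, and subtracting shows the two incoming spin contributions agree modulo $\mathsf{p}$. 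This gives well-definedness of $Sched(t)$, i.e.\ consistency of Step~3.
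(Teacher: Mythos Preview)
Your argument is correct and follows the same route as the paper: compute the total rotation spin accumulated around a cycle $c$ as $L(c)-\alpha\bigl(M(c)\,\mathsf{p}-L(c)\,\mathsf{k}\bigr)$ and reduce modulo $\mathsf{p}$ using $\alpha\mathsf{k}\equiv -1$ to obtain $0$. Your explicit two-path closure argument and your treatment of the non-coprime case are in fact more careful than the paper's own proof, which simply checks that propagating once around a cycle returns the seed word.

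One small correction that does not affect your proof: $(current\_transition^\bullet)^\bullet$ is the set of \emph{forward} successor transitions (the output places of $current\_transition$, then the unique output transition of each), not ``transitions sharing a successor place''. Algorithm~3 therefore walks purely forward along the arcs of $G$, so the ``undirected traversal'' obstacle you raise does not actually arise; your directed-cycle closure argument already handles exactly the situation that does occur, namely a transition reached via two distinct forward paths from the seed.
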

\begin{proof}
Let $u\in \mathds{S}_\mathsf{k}^\mathsf{p}$ be a balanced binary word. The number of delays occurring on a cycle $c$ during a period of execution is $n=M_0(c)*\mathsf{p}-L(c)*\mathsf{k}$. The latency on this same cycle is $L(c)$. 

If we impose the schedule of a transition $t$ on $c$ to $Sched(t)=u$ and we propagate this schedule to the successors according to Step 3, then $t$ will be ultimatly reached again.
The updated schedule of the $t$ will be $\rho^{L(c)-n*\alpha} u$. We know from Definition \ref{defalpha} that $\alpha*\mathsf{k}\equiv -1\mod \mathsf{p}$ so if we focus on the quantity $L(c)-n*\alpha$:\\
$L(c)-n*\alpha= L(c)-\alpha*(M_0(c)*\mathsf{p}-L(c)*\mathsf{k})    \equiv    L(c)-\alpha*M_0(c)*\mathsf{p}+\alpha*L(c)*\mathsf{k}$\\
$\equiv L(c)-\alpha*M_0(c)*\mathsf{p}-L(c)\mod \mathsf{p}    \equiv    -\alpha*M_0(c)*\mathsf{p}\mod \mathsf{p}\equiv 0\mod \mathsf{p}$, it is equivalent to $0$ modulo $\mathsf{p}$.

Consequently, the schedule of $t$ remains the same, the method is consistent.
\end{proof}

\subsubsection{Step 4: compute $M_{periodic}$}
\label{sssection_marking}
Step 4 deduces $M_{periodic}$ from $Exec_{periodic}$. 
$M_{periodic}$ is not only the marking from which $Exec_{periodic}$ runs but also the marking generated by
$Exec_{periodic}$ after a period of execution. 
Consequently, the last step of a period reaches $M_{periodic}$.
The last bit of $Sched({}^\bullet p)$ represents the activity of ${}^\bullet p$ at the last instant of the period. If it has been active, it has produced a token in $p$. 
Algorithm 4 details Step 4. It has a linear complexity according to the number of places.

\begin{algorithm}[htb]
\label{algo_step_4}
\caption{$compute\_M_{periodic}$}
\begin{algorithmic}
\STATE \textbf{Input :} $G$, $Exec_{periodic}$, and $\mathsf{p}$.
\STATE \textbf{Output :} $M_{periodic}$.
\STATE $p\in P$, $Sched({}^\bullet p)=u^\omega$ and $Sched(p^\bullet)=v^\omega$
\FORALL{$p\in P$}
\STATE $M_{periodic}(p)\gets u(\mathsf{p})+[\rho(u)<v]$\\ \COMMENT{$[\rho(u)<v]=1$ if $\rho(u)<v$ and $0$ otherwise.}
\ENDFOR
\RETURN $M_{periodic}$
\end{algorithmic}
\end{algorithm}

$[\rho(u)<v]=1$ means that one token is being delayed in the place at the current instant. 
$[\rho(u)<v]$ is always equal to $0$ when $D(p)=0$ because $v=\rho(u)$. 
When $D(p)>0$, $v$ is the transpose of $\rho(u)$. In the usual case, $\rho(u)>v$ because transposition shifts $\sI$s to the right. But when the transposition occurs on the last bit of the word, the transpose gets a bit on its first position and becomes higher than the original word. Thus, if a transposition occurs on the last bit, it means that a token is currently delayed in the place. Lemma \ref{lemma_delaydiff} formalizes this intuition. 

\begin{lem}[Presence of tokens in delayed places]
\label{lemma_delaydiff}
Let $p$ be a place of $G$ such that $D(p)=n>0$.
Let $u=Sched({}^\bullet p)$ and $v=Sched(p^\bullet)$.
If $\rho(u)<v$, $p$ is delaying a token in the marking $M_{periodic}$.
\end{lem}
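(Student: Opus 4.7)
The informal phrase ``$p$ is delaying a token in the marking $M_{periodic}$'' must first be cashed out as a concrete inequality. Given that Algorithm~4 sets $M_{periodic}(p)=u(\mathsf{p})+[\rho(u)<v]$, the natural reading is $M_{periodic}(p)\ge u(\mathsf{p})+1$: on top of the token possibly produced by ${}^\bullet p$ at the very last step of the previous period (the $u(\mathsf{p})$ contribution), at least one more token must already be sitting in $p$ at time $0$. The plan is to derive this numerical inequality from the hypothesis $\rho(u)<v$.

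Rather than reasoning directly about the $\tau^n$ structure of $v$ relative to $\rho(u)$, I would use the firability constraints that $Exec_{periodic}$ must satisfy. Setting $P(i)=|u(1..i)|_1$ and $C(i)=|v(1..i)|_1$, the marking of $p$ just before step $i$ of a period is $M_{periodic}(p)+P(i-1)-C(i-1)$. Firability of $p^\bullet$ at every step $i$ with $v(i)=1$ requires this quantity to be at least $1$, so
\begin{equation*}
M_{periodic}(p)\ \ge\ C(i)-P(i-1)\qquad\text{for every }i\text{ with }v(i)=1.
\end{equation*}
It therefore suffices to exhibit a single index $i$ at which the right-hand side attains $u(\mathsf{p})+1$.

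The lex inequality $\rho(u)<v$ hands me such an index almost for free. Let $i$ be the smallest position at which $\rho(u)$ and $v$ disagree; lex comparison forces $\rho(u)(i)=0$ and $v(i)=1$. Since $\rho(u)(1)=u(\mathsf{p})$ and $\rho(u)(j)=u(j-1)$ for $j\ge 2$, the agreement of $v$ with $\rho(u)$ on positions $1,\ldots,i-1$ turns $\sum_{j<i}v(j)$ into $u(\mathsf{p})+P(i-2)$, so that $C(i)=u(\mathsf{p})+P(i-2)+1$. Similarly, $u(i-1)=\rho(u)(i)=0$ gives $P(i-1)=P(i-2)$. Subtracting, $C(i)-P(i-1)=u(\mathsf{p})+1$, which combined with the displayed firability bound yields $M_{periodic}(p)\ge u(\mathsf{p})+1$, as required.

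The step I expect to require the most care is not the main computation but the bookkeeping around the corner cases. One must verify that the disagreement index $i$ actually exists inside $\{1,\ldots,\mathsf{p}-1\}$ and not only at position $\mathsf{p}$; this is ruled out by $|\rho(u)|_1=|v|_1=\mathsf{k}$, since two words with the same $\sI$-count cannot first differ at their very last bit. One must also treat $i=1$ separately, where the prefix agreement is empty and the formulas collapse to $C(1)-P(0)=1=u(\mathsf{p})+1$ via $u(\mathsf{p})=\rho(u)(1)=0$. Once these small checks are dispatched the substitution in the sum defining $C(i)$ is routine.
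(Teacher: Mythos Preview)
Your argument is correct and takes a genuinely different route from the paper. The paper exploits the structural identity $v=\tau^n(\rho(u))$ coming from Step~3 together with Theorem~\ref{theooperation}: since each transposition strictly lowers a word except when it acts on the last bit (where it wraps around and raises it), the hypothesis $\rho(u)<v$ forces at least one of the $n$ transpositions to hit position $\mathsf{p}$, which is then read semantically as a firing of $p^\bullet$ postponed across the period boundary. You bypass the transposition machinery entirely and argue directly from firability: locating the first lexicographic disagreement between $\rho(u)$ and $v$ and reading off $C(i)-P(i-1)=u(\mathsf{p})+1$ is clean and uses nothing about balancedness or about $D(p)$ beyond $|u|_1=|v|_1$. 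What you gain is an elementary, self-contained bound that does not rely on Section~\ref{section_balanced_binary_word}; what the paper's route buys is a tighter link to the $\rho/\tau$ calculus that drives the rest of the algorithm, and a more literal identification of \emph{which} token is being delayed (the one whose firing was transposed across position $\mathsf{p}$). One small caveat worth making explicit in your write-up: you are deriving a \emph{necessary} condition on any marking from which $Exec_{periodic}$ is firable, whereas at this point in the paper the validity of $Exec_{periodic}$ from $M_{periodic}$ has not yet been established (that is Theorem~\ref{lem_tokengame}); this is harmless because the lemma's role is precisely to motivate the ``$+1$'' in Algorithm~4, but stating it as ``any marking supporting $Exec_{periodic}$ must satisfy $M(p)\ge u(\mathsf{p})+1$'' would make the logical status clearer.
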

\begin{proof}
$v=\rho^{1-n*\alpha}(u)=\tau^n(\rho(u))$. By definition, the transpose of a word is lower than the original word except when the last bit is transposed. In this last case, the transpose is higher that the original word. If, $v>\rho(u)$ (but $v=\tau^n(\rho(u))$), at least one of the transpositions occurs on the last bit. The interpretation of this statement is that the firing of $p^\bullet$ was supposed to occur at the last instant of the period but has been delayed to the next one. The token related to this execution is currently in $p$.
\end{proof}

Figure \ref{figure_marking} illustrates Step 4. The last bit of the schedule of a transition determines whether a token is present in its output place(s). The place with delays contains a regular token because the schedule of the predecessor finishes by $\sI$ but it does not contain an extra token because $1010101<\rho(0110101)=1011010$.

\begin{figure}[hbpt]
 \begin{center}
   {\includegraphics[scale=0.6]{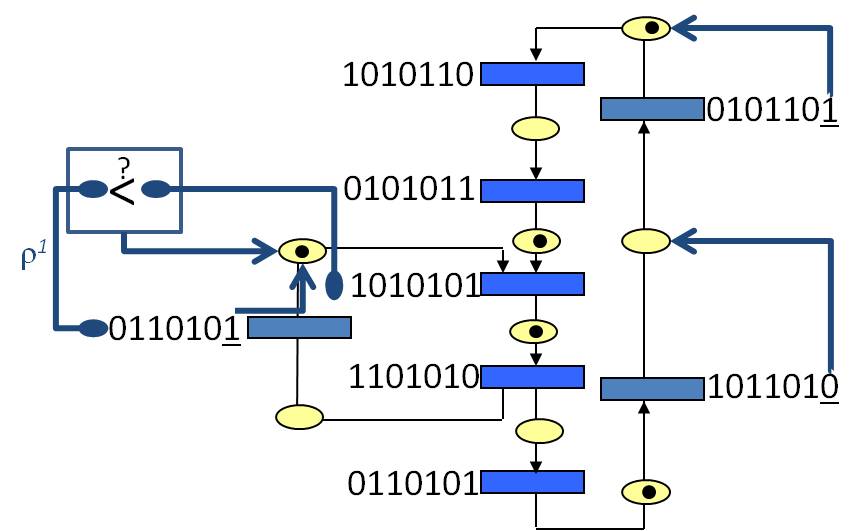}}
  \caption{The step 4 generates $M_{periodic}$ from $Exec_{periodic}$. The presence of an additional token in the delayed place is
  found using the function $[v>\rho(u)]$.} 
  \label{figure_marking}
  \end{center}
\end{figure}

The correctness of Step 4 is presented in Section \ref{sec_corr_prop}. First, Lemma \ref{lem_reachmark} proves that the marking $M_{periodic}$ is reachable from $M_0$. Then, Theorem \ref{lem_tokengame} shows that the ASAP execution of $G$ from $M_{periodic}$ is $Exec_{periodic}$.

\subsubsection{Step 5: compute $Exec_{initial}$}
\label{sssection_init}
Algorithm 5 computes $Exec_{initial}$ based on integer linear programming solving. 
The optimization criterion is the minimization of the number of firing because one cannot express linearly the minimization of the number of steps required to run $Exec_{initial}$. The mapping $F_{init}$ associates to each transition the number of firing required to reach $M_{periodic}$. The function $build\_execution$ builds $Exec_{initial}$ by simulating an ASAP execution of $G$ where each transition $t$ cannot be fired more than $F_{init}(t)$. The complexity of Step 5 depends upon the algorithm used to solve the linear system of inequation. Lemma \ref{lem_init_correct} shows the correctness of Step 5.

\begin{algorithm}[htb]
\label{algo_step_5}
\caption{$compute\_Exec_{initial}$}
\begin{algorithmic}
\STATE \textbf{Input :} $G$, $M_0$, and $M_{periodic}$.
\STATE \textbf{Output :} $Exec_{initial}$.
\STATE $Cst=\O$\COMMENT{$Cst$ is the set of linear constraints}
\FORALL{$t\in T$}
\STATE $Cst+=\{F_{init}(t)\ge 0\}$
\FORALL{$p\in t^\bullet$}
\STATE $Cst+=\{F_{init}({}^\bullet p)=F_{init}(p^\bullet)+M_{periodic}(p)-M_0(p)\}$
\ENDFOR
\ENDFOR
\STATE $F_{init}\gets lp\_solve(Cst,Min(\Sigma_{\forall t\in T} F_{init}(t)))$
\STATE $Exec_{initial}\gets build\_execution(F_{init})$
\RETURN $Exec_{initial}$
\end{algorithmic}
\end{algorithm}

In Figure \ref{figure_initialization2}, $M_0$ is on the left. The $2$-bits-length schedules attached to each 
transition is $Exec_{initial}$ leading to $M_{periodic}$ on the right. 

\begin{lem}[Correctness of Step 5]
\label{lem_init_correct}
Algorithm 5 computes a valid execution $Exec_{initial}$ reaching $M_{periodic}$.
\end{lem}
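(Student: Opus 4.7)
The plan is to split the claim into three independent sub-goals: (a) the linear system passed to \texttt{lp\_solve} is feasible, so $F_{init}$ is well defined; (b) the sequence produced by \texttt{build\_execution} is a legal ASAP firing sequence that eventually consumes all the quotas $F_{init}(t)$; (c) the marking reached after consuming these quotas is exactly $M_{periodic}$.

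For (a), I would lean on Lemma~\ref{lem_reachmark} (to be proved in the next subsection), which asserts that $M_{periodic}$ is reachable from $M_0$. By the classical characterization of reachability in live strongly connected marked graphs recalled in Section~\ref{ssection_clasresult}, any firing sequence realizing $M_0 \to M_{periodic}$ has a Parikh image $F^\ast : T \to \mathds{N}$ satisfying
\begin{equation*}
F^\ast({}^\bullet p) - F^\ast(p^\bullet) = M_{periodic}(p) - M_0(p), \qquad F^\ast(t) \ge 0,
\end{equation*}
for every $p\in P$ and $t\in T$. Thus $F^\ast$ is a feasible point of $Cst$, and $lp\_solve$ returns a non-negative integer solution $F_{init}$ minimising the total firing count.

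For (c), once (b) is established, the place-equation constraints in $Cst$ immediately give $M_0(p)+F_{init}({}^\bullet p)-F_{init}(p^\bullet)=M_{periodic}(p)$ for every place $p$, which is exactly the marking obtained after firing each $t$ the prescribed $F_{init}(t)$ times. So the only real content is (b): the greedy ASAP strategy capped by $F_{init}$ must not deadlock before exhausting the quotas. I would prove this by an induction on the remaining workload $\sum_{t}(F_{init}(t)-n_t)$, where $n_t$ is the number of times $t$ has already been fired. The induction step uses the standard diamond/confluence property of marked graphs: from any reachable intermediate marking, any firing sequence that still realises the remaining firing vector $F_{init}-n$ starts with some firable transition $t'$ whose quota is not exhausted; by the diamond property this $t'$ can be swapped to the front of the sequence, hence it is firable in the current marking, hence \texttt{build\_execution} selects at least one such $t'$ and makes strict progress. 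Since the residual vector remains non-negative with valid token equation, the induction hypothesis applies and the loop terminates after $\sum_t F_{init}(t)$ firings.

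The main obstacle is precisely this non-deadlock argument for (b): it is tempting but wrong to think that any ASAP ordering trivially realises an arbitrary Parikh vector, because firing an over-eager transition could in principle starve another transition whose quota is still positive. The diamond-property argument above is what rules this out, and it relies crucially on the fact that the MG is conflict-free (Definition~\ref{def_mg}), so the order of independent firings is immaterial. With this step in hand, (a) and (c) are essentially bookkeeping, and the lemma follows.
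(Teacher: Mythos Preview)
Your proposal is correct and, in fact, considerably more thorough than the paper's own proof. The paper dispatches the lemma in a single line: it names $M_1$ the marking after $Exec_{initial}$ and computes, for every place $p$,
\[
M_1(p)=M_0(p)-F_{init}(p^\bullet)+F_{init}({}^\bullet p)=M_0(p)-F_{init}(p^\bullet)+F_{init}(p^\bullet)+M_{periodic}(p)-M_0(p)=M_{periodic}(p),
\]
which is exactly your step~(c). The paper silently takes your steps~(a) and~(b) for granted: feasibility of the linear system is implicitly delegated to Lemma~\ref{lem_reachmark} (as you correctly identify), and the fact that \texttt{build\_execution} cannot deadlock before exhausting the quotas is not argued at all. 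Your diamond/persistence argument for~(b) fills precisely the gap that a careful reader would notice in the paper's version, and it rests on the right structural fact, namely that marked graphs are conflict-free so firings commute. What the paper's minimalist proof buys is brevity; what your decomposition buys is an actual justification that the returned $Exec_{initial}$ is a \emph{valid} execution and not merely a Parikh vector satisfying the marking equation.
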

\begin{proof}
Let us call $M_1$ the marking at the end of $Exec_{initial}$. $\forall p\in P$, $M_1(p)=M_0(p)-F_{init}(p^\bullet)+F_{init}({}^\bullet p)=M_0(p)-F_{init}(p^\bullet)+F_{init}(p^\bullet)+M_{periodic}(p)-M_0(p)=M_{periodic}(p)$.
\end{proof}

According to \cite{deselEsparza}, the maximum number of firings between two markings ($M_0$ and $M_{periodic}$ in our case) is
in $O(n^3)$ where $n$ is the number of transitions in the MG. We assume that the length of $Exec_{initial}$ is convenient because:
i) the bound $O(n^3)$ is given in terms of number of firings. $Exec_{initial}$ allows parallel firing of transitions.
ii) the periodic execution $Exec_{periodic}$ covers a set of $\mathsf{p}$ markings. The initial part can reach any of these marking. 
So the problem is equivalent to: reaching the closest marking of $Exec_{periodic}$ instead of only $M_{periodic}$.
iii) the cases where the upper bound is reached are extreme cases where all tokens have to shift to another place
far from the initial one or because the shift of one token implies the shift of all others. In $M_{periodic}$, the tokens are ``spread equally" in the MG. $M_{periodic}$ might be the easiest reachable marking.

\subsubsection{Step 6: compute $Exec$}
\label{sssection_conclusion}
$Exec$ is composed of $Exec_{initial}$ followed by $Exec_{periodic}$. After the guided initialization, the execution is ASAP and repetitive. 
In Figure \ref{figure_final}, the MG is in its initial marking. The execution, $Exec$, is represented by the ultimately $\mathsf{k}$-periodic schedules attached to each transition.

\subsubsection{Step 7: compute $C_{Exec}$}
If a place does not contain delay, every token reaching the place leaves it at the next instant. As long as a place contains at most one token in $M_0$, its size is $1$. Lemma \ref{lem_size_of_place} demonstrates that if a place contains delays, tokens are never delayed more that one consecutive instant because the MG is $\mathds{N}$-equalized and the schedules are balanced. In consequence a place cannot accumulate more than two tokens. 

\begin{lem}[Delayed place size is bounded by $2$]
\label{lem_size_of_place}
According to $Exec$, place size where delays occur is bounded by 2.
\end{lem}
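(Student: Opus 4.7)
My plan is to track the marking of the delayed place $p$ at each instant of $Exec_{periodic}$ using the prefix $\sI$-counts of the schedules $u = Sched({}^\bullet p)$ and $v = Sched(p^\bullet)$, which are balanced words in $\mathds{S}_{\mathsf{k}}^{\mathsf{p}}$ by Step~3 of Algorithm~1. Let $U(i)$ and $V(i)$ denote the number of $\sI$s among $u(1),\ldots,u(i)$ and $v(1),\ldots,v(i)$ respectively. During the periodic phase the marking evolves as $M_i(p) = M_{periodic}(p) + U(i) - V(i)$, so it suffices to bound the excursion $U(i) - V(i)$ and combine this with the value of $M_{periodic}(p)$ produced by Step~4.

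First I would apply the balanced property: every prefix of length $i$ of a word in $\mathds{S}_{\mathsf{k}}^{\mathsf{p}}$ contains either $\lfloor i\mathsf{k}/\mathsf{p} \rfloor$ or $\lceil i\mathsf{k}/\mathsf{p} \rceil$ ones, so $|U(i) - V(i)| \le 1$ and therefore $M_i(p) \le M_{periodic}(p) + 1$. Since Step~4 gives $M_{periodic}(p) = u(\mathsf{p}) + [\rho(u) < v] \in \{0,1,2\}$, the desired bound $M_i(p) \le 2$ is already secured whenever $M_{periodic}(p) \le 1$.

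The hard case is $M_{periodic}(p) = 2$, where the naive estimate would permit an overshoot to $3$. Under this hypothesis both $u(\mathsf{p}) = 1$ and $\rho(u) < v$ hold, and I would sharpen the estimate to $V(i) \ge U(i)$ for every $i$. Let $R(i)$ denote the number of $\sI$s in the first $i$ letters of $\rho(u)$; since $\rho(u)(1) = u(\mathsf{p}) = 1$ and $\rho(u)(j) = u(j-1)$ for $j \ge 2$, one has $R(i) = 1 + U(i-1)$. Let $i^*$ be the first position where $v$ and $\rho(u)$ differ; note $i^* \ge 2$ since $\rho(u)(1) = 1$. For $i < i^*$, $V(i) = R(i) = 1 + U(i-1) \ge U(i)$ because $U(i) \le U(i-1) + 1$. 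At $i = i^*$, the values $v(i^*) = 1$ and $\rho(u)(i^*) = 0$ give $V(i^*) = R(i^*) + 1 \ge U(i^*) + 1$. For $i > i^*$, balancedness applied to the suffix factors of length $i - i^*$ forces $\bigl|[V(i) - V(i^*)] - [U(i) - U(i^*)]\bigr| \le 1$, so the head start of $1$ accumulated at $i^*$ absorbs any later fluctuation. Hence $U(i) - V(i) \le 0$ and $M_i(p) \le M_{periodic}(p) = 2$.

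The main obstacle is precisely this refinement: the lexicographic relation $\rho(u) < v$ alone does not imply pointwise prefix-sum domination, but combined with the boundary bit $u(\mathsf{p}) = 1$ it supplies the head start needed to carry $V(i) \ge U(i)$ across the whole period. The $\mathds{N}$-equalization hypothesis enters through Lemma~\ref{lemma_nequa}, which bounds $D(p) < \mathsf{k}$ and keeps the rotation exponent of Step~3 inside the orbit $\mathds{S}_{\mathsf{k}}^{\mathsf{p}}$, so the balanced machinery applies; intuitively this is what guarantees that a delayed token never stays in $p$ for more than one extra consecutive instant. The transient $Exec_{initial}$, being a finite guided execution between two bounded markings computed by the firing-minimising Algorithm~5, contributes only intermediate markings dominated by the steady-state bound, and $C_{Exec}(p) \le 2$ follows.
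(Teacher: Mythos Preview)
Your argument for the periodic phase is correct and more explicit than the paper's. The paper argues informally at the token level: $\mathds{N}$-equalization gives $D(p)<\mathsf{k}$ (Lemma~\ref{lemma_nequa}), balancedness ensures a token is delayed at most one consecutive instant, and $\mathsf{k}$-periodicity means there are at most $\mathsf{k}$ tokens to delay per period; from this it concludes no token stays more than two instants, hence no accumulation beyond two. Your route is instead a direct prefix-count computation: you write $M_i(p)=M_{periodic}(p)+U(i)-V(i)$, use the floor/ceil property of balanced factors to obtain $|U(i)-V(i)|\le 1$, and in the only dangerous case $M_{periodic}(p)=2$ you exploit the lexicographic relation $\rho(u)<v$ together with $u(\mathsf{p})=1$ to sharpen this to $U(i)\le V(i)$. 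This gives a genuinely quantitative bound without leaning on the informal ``delayed once in a row'' claim, and it makes transparent where each ingredient of Step~4's formula is used. One small imprecision: $\mathds{N}$-equalization is not what ``keeps the rotation exponent inside $\mathds{S}_\mathsf{k}^\mathsf{p}$'' (rotations of balanced words are always balanced); the hypothesis is used upstream, to validate Step~4 and Theorem~\ref{lem_tokengame}, which you rely on implicitly when asserting the marking recursion.

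Your closing sentence on $Exec_{initial}$ is not justified: neither Algorithm~5 nor the firing-minimisation criterion forces the intermediate markings of the guided initialisation to stay below the periodic bound. The paper's own proof does not address this either, so you are no worse off, but the bound is only rigorously established for the periodic part in both arguments.
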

\begin{proof}
First, $G$ is $\mathds{N}$-equalized, so the number of delay per place is bounded by $\mathsf{k}$.
Secondly, since the execution is balanced, a token can be delayed only once in a row.
Lastly, since the execution is $\mathsf{k}$-periodic, there is (at most) $\mathsf{k}$ different tokens to delay.
These conditions guarantee that a token cannot stay more than 2 instants in the place. Consequently, no accumulation of more than 2 tokens can occur.
\end{proof}

Even for delayed places, a size of two is required only if a token is delayed while another reaches the place. Theorem \ref{lem_size_1} shows that a delayed place has a size of one when $D(p)<\mathsf{p}-\mathsf{k}$ because delays occur first on the $\sI$ which are followed by a $\sO$. In Figure \ref{figure_delay}, all the places have a size of $1$. In the delayed place $p$, $D(p)=2<7-4$. 

\begin{thm}[Exact delayed place size]
\label{lem_size_1}
Let $p$ be a place, 
\begin{equation*}
C_{Exec}(p)=1 \Leftrightarrow D(p)\le \mathsf{p}-\mathsf{k}
\end{equation*}
\end{thm}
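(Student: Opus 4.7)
The plan uses two ingredients: (a) the total marking over one period equals $D(p) + \mathsf{k}$, and (b) the balanced schedules force $M_{i-1}(p)$ to take at most two consecutive integer values over a period. Combining these with Lemma \ref{lem_size_of_place} gives the equivalence by a simple counting argument.

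First, I would derive (a) directly from Definition \ref{def_delay}. Letting $v = Sched(p^\bullet)$, the delay at instant $i$ is $Delay(p,i) = M_{i-1}(p) - v(i)$, so summing over one steady period yields
\begin{equation*}
\sum_{i=1}^{\mathsf{p}} M_{i-1}(p) \;=\; D(p) + \sum_{i=1}^{\mathsf{p}} v(i) \;=\; D(p) + \mathsf{k}.
\end{equation*}

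Next I would prove (b). Let $u = Sched({}^\bullet p)$ and $v = Sched(p^\bullet)$; by Step 3 of Algorithm 1, $v = \rho^{s}(u)$ with $s = 1 - D(p)\,\alpha \bmod \mathsf{p}$. Setting $U(i) = \sum_{j=1}^{i} u(j)$ and extending by $U(i+\mathsf{p}) = U(i)+\mathsf{k}$, one has $\sum_{j=1}^{i} v(j) = U(i-s) - U(-s)$, so
\begin{equation*}
M_i(p) - M_{periodic}(p) \;=\; \bigl(U(i) - U(i-s)\bigr) + U(-s).
\end{equation*}
The bracketed term counts the $\sI$'s in a length-$s$ factor of $u^\omega$. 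Because $u$ is balanced (property recalled in Section \ref{subsection_defBBW}), this count equals either $\lfloor s\mathsf{k}/\mathsf{p} \rfloor$ or $\lceil s\mathsf{k}/\mathsf{p} \rceil$; hence $M_i(p)$ assumes at most two consecutive integer values as $i$ ranges over one period.

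Finally, I would combine (a), (b), and Lemma \ref{lem_size_of_place} ($C_{Exec}(p) \le 2$). Since liveness and $\mathsf{k}\ge 1$ force $C_{Exec}(p)\ge 1$, only two regimes are possible. If $C_{Exec}(p) = 1$, then $M_{i-1}(p) \in \{0,1\}$ throughout, so $D(p) + \mathsf{k} = \sum_i M_{i-1}(p) \le \mathsf{p}$, giving $D(p) \le \mathsf{p} - \mathsf{k}$. Conversely, if $C_{Exec}(p) = 2$, then (b) forces $M_{i-1}(p) \in \{1,2\}$ with $2$ attained at least once, whence $\sum_i M_{i-1}(p) \ge \mathsf{p} + 1$ and $D(p) \ge \mathsf{p} - \mathsf{k} + 1$. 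The two regimes are complementary, yielding the equivalence.

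The main obstacle is step (b): everything hinges on the ``two consecutive values'' fact, whose proof rests on the property that any length-$s$ window of a balanced word contains a near-constant number of $\sI$'s. Without this structural lemma, only the easy direction $C_{Exec}(p) = 1 \Rightarrow D(p) \le \mathsf{p} - \mathsf{k}$ is accessible; it is this lemma that upgrades the inequality into an equivalence.
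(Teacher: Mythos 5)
Your proof is correct, but it takes a genuinely different route from the paper's. The paper first asserts a monotonicity property (place size is non-decreasing in $D(p)$, justified by the rotation-equivalence of delayed-token positions) and then reduces the theorem to two explicit boundary computations: for $D(p)=\mathsf{p}-\mathsf{k}$ it uses $-\mathsf{k}\alpha\equiv 1 \bmod \mathsf{p}$ to show $v=\rho^{1-(\mathsf{p}-\mathsf{k})\alpha}(u)=u$, so producer and consumer schedules coincide and no overlap occurs; for $D(p)=\mathsf{p}-\mathsf{k}+1$ it gets $v=\tau(u)$, and the transposed bit exhibits the one token that arrives before its predecessor leaves. Your argument instead is a global counting one: summing $Delay(p,i)=M_{i-1}(p)-v(i)$ over a period gives $\sum_i M_{i-1}(p)=D(p)+\mathsf{k}$, and the balancedness of $u$ forces $M_i(p)-M_{periodic}(p)$ to equal, up to an additive constant, the number of $\sI$'s in a length-$s$ factor of $u^\omega$, hence to take at most two consecutive values; combined with Lemma~\ref{lem_size_of_place} this partitions the possibilities into the regime $M_i(p)\in\{0,1\}$ (sum $\le\mathsf{p}$, i.e.\ $D(p)\le\mathsf{p}-\mathsf{k}$) and the regime where $2$ is attained (sum $\ge\mathsf{p}+1$, i.e.\ $D(p)\ge\mathsf{p}-\mathsf{k}+1$). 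What your approach buys: it sidesteps the monotonicity step, which the paper states rather than proves in detail, it needs only that $v$ is some rotation of $u$ (not the precise $\alpha$-arithmetic of Theorem~\ref{theooperation}), and it makes transparent why $\mathsf{p}-\mathsf{k}$ is the threshold --- it is exactly where the average marking $(D(p)+\mathsf{k})/\mathsf{p}$ crosses $1$. What the paper's approach buys is constructive detail: it identifies, via the transposition, precisely which token causes the second slot to be needed. Two small points you share with the paper and should make explicit: the claim concerns the periodic part of $Exec$ only (the marking during $Exec_{initial}$ is not bounded by this argument), and $C_{Exec}(p)\ge 1$ uses that every transition fires $\mathsf{k}\ge 1$ times per period in a live strongly connected MG.
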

\begin{proof}
First, if a place $p$ with $D(p)=n$ has a size one, every other place $p'$ with $D(p')\le n$ also has a size one.
If a place $p$ with $D(p)=m$ has a size two, every other place $p'$ with $D(p')\ge m$ also has a size two.
This property is guaranteed by the Lemma \ref{delayskp}. In two different delayed places, the delayed tokens
are the same modulo rotation. So the problem of calibrating the size of a place only depends upon the amount of delays 
in that place and not at all about the location of these delays. 

Let $u=Sched({}^\bullet p)$ and $v=Sched(p^\bullet)$. A place $p$ requires a size two when a token is used after the
next one has reached the place.
Formally, there exists $n$ such that $[v]_n>[u]_n$ (where $[u]_n$ is the position of the $n^{th}$ $\sI$ in $u$). 
$v$ says when the current token is used, $u$ says when a new token reaches $p$.

Let us assume that $D(p)= \mathsf{p}-\mathsf{k}$.
We have $v=\rho^{1-(p-k)*\alpha} u = \rho^{1-p*\alpha+k*\alpha}u=u$ so $[u]_n>[u]_n$ never holds.

Let us assume that $D(p)= \mathsf{p}-\mathsf{k}+1$.
We have $v=\rho^{1-(p-k+1)*\alpha} u = \rho^{1-p*\alpha+k*\alpha-\alpha}u=\tau(u)$ so $[\tau(u)]_n>[u]_n$ holds when $n$ 
is the index of the delayed token.
\end{proof}

The following theorem proves that the proposed algorithm computes an execution which has a minimal size of the places as claimed earlier.

\begin{thm}[Minimal size of the places]
$C_{Exec}$ gives the minimal size of places.
\end{thm}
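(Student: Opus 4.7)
The plan is to reduce the claim to a comparison under a fixed spatial delay distribution. Since $G$ is strongly connected and live, every marking in $\mathds{M}$ is mutually reachable with $M_0$. By Theorem \ref{delayASAP}-ii, every execution $Exec' \in \mathds{E}$ is therefore based on the same spatial distribution of delays $D$, namely the latest delays position of $G$. In particular, for each place $p$ the per-period total $\sum_{i=1}^{\mathsf{p}} Delay(p, j_0+i)$ coincides with $D(p)$, independently of the choice of $Exec'$. This turns the problem into bounding $C_{Exec'}(p)$ in terms of $D(p)$ only.

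I would then split according to Theorem \ref{lem_size_1}. When $D(p) \leq \mathsf{p}-\mathsf{k}$, we have $C_{Exec}(p) = 1$, which is trivially minimal since any place through which tokens actually flow (as happens in a live MG) must have size at least one. When $D(p) > \mathsf{p}-\mathsf{k}$, Theorem \ref{lem_size_1} together with Lemma \ref{lem_size_of_place} force $C_{Exec}(p) = 2$, so it remains to lower-bound $C_{Exec'}(p)$ by $2$. I would use an averaging argument over one steady period of $Exec'$:
\begin{equation*}
\sum_{i=1}^{\mathsf{p}} M_{i-1}(p) \;=\; \sum_{i=1}^{\mathsf{p}} \bigl( Delay(p, j_0+i) + FT_i(p^\bullet) \bigr) \;=\; D(p) + \mathsf{k} \;>\; \mathsf{p},
\end{equation*}
using Definition \ref{def_delay} and the fact that $p^\bullet$ fires exactly $\mathsf{k}$ times per period. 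Since the marking is a non-negative integer whose average strictly exceeds $1$, its maximum over the period must be at least $2$, giving $C_{Exec'}(p) \geq 2 = C_{Exec}(p)$.

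The main obstacle I expect lies outside the steady regime: $C_{Exec}$ ranges over the whole execution, but $Exec_{initial}$ is a guided (non-ASAP) prefix, hence not itself in $\mathds{E}$. I would have to argue that the guided initialization produced by Algorithm $5$ never inflates a place beyond what the periodic part already requires. The natural way is to observe that $Exec_{initial}$ only fires transitions that still owe firings on the way to $M_{periodic}$, so each place's occupancy during the transient is dominated by its occupancy in $Exec_{periodic}$. Combined with the case analysis above, this yields $C_{Exec}(p) \leq C_{Exec'}(p)$ for every place $p$ and every $Exec' \in \mathds{E}$.
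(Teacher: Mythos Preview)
Your approach matches the paper's: both split on $D(p)\le\mathsf{p}-\mathsf{k}$ versus $D(p)>\mathsf{p}-\mathsf{k}$, declare size~$1$ trivially minimal in the first case, and argue that any ASAP execution forces size at least~$2$ in the second. The paper phrases the second case as an informal pigeonhole (``at most $\mathsf{p}-\mathsf{k}$ tokens within a period can be delayed while no token follows, so at least one delayed token is followed by another''), whereas your averaging identity $\sum_{i=1}^{\mathsf{p}} M_{i-1}(p)=D(p)+\mathsf{k}>\mathsf{p}$ is a cleaner and more transparent packaging of the same count. You are also more explicit than the paper in two respects: you invoke Theorem~\ref{delayASAP}-ii to justify that $D$ is common to every $Exec'\in\mathds{E}$ (the paper leaves this implicit), and you flag the guided prefix $Exec_{initial}$ as a loose end. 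The paper's proof simply ignores the transient and treats $C_{Exec}$ as determined by the periodic part alone, which is indeed what Step~7 computes; so on that last point you are being more scrupulous than the original rather than missing something it covers.
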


\begin{proof}
When $D(p)\le \mathsf{p}-\mathsf{k}$, $C_{Exec}(p)=1$ so it is minimal.

Let us now assume an ASAP execution $Exec'$ from the marking $M'$ reachable from $M_0$.
Let assume a place $p'$ such that $D(p')= \mathsf{p}-\mathsf{k}+1$. 
At most $\mathsf{p}-\mathsf{k}$ tokens within a period can be delayed while no token follows.
It remains at least $1$ token that has to be delayed but that is followed by another token.
In this last configuration, $p$ contains two tokens and thus the size of $p$ is at least 2.
Consequently, $C_{Exec}(p)$ is also minimal when $D(p')> \mathsf{p}-\mathsf{k}$.
\end{proof}

\subsection{Correctness of the step 4}
\label{sec_corr_prop}

Let us first prove the reachability of $M_{periodic}$ from $M_0$ then we prove that the ASAP execution from $M_{periodic}$ is $Exec_{periodic}$. 

\subsubsection{Reachability of $M_{periodic}$ from $M_0$}

\begin{lem}[Reachability of $M_{periodic}$ from $M_0$]
\label{lem_reachmark}
$M_{periodic}$, as computed in the step 4, is reachable from $M_0$.
\end{lem}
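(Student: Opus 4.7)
The plan is to invoke the classical result recalled in Section \ref{ssection_clasresult}: two live markings of a strongly connected MG are mutually reachable iff they assign the same total token count to every cycle. Since $G$ is live and strongly connected, it suffices to establish $M_{periodic}(c) = M_0(c)$ for every cycle $c$ of $G$, which I would verify via a formal delay-counting argument.

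I would fix a cycle $c$ and extend $Exec_{periodic}$ formally from $M_{periodic}$, setting $M_j(p) = M_{j-1}(p) + Sched({}^\bullet p)(j) - Sched(p^\bullet)(j)$ for $j=1,\dots,\mathsf{p}$ and ignoring non-negativity. Each transition of $c$ appears once as a predecessor and once as a successor of a place of $c$, so the per-step cycle sum telescopes to zero and $\sum_{p \in c} M_j(p) = M_{periodic}(c)$ for every $j$. Combined with $Delay(p,j) = M_{j-1}(p) - Sched(p^\bullet)(j)$ from Definition \ref{def_delay} and $\sum_{j=1}^{\mathsf{p}} Sched(t)(j) = \mathsf{k}$, this yields
\begin{equation*}
\sum_{p \in c}\sum_{j=1}^{\mathsf{p}} Delay(p,j) \;=\; \mathsf{p}\,M_{periodic}(c) - L(c)\,\mathsf{k}.
\end{equation*}

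The technical heart of the argument will be a per-place identity $\sum_{j=1}^{\mathsf{p}} Delay(p,j) = D(p)$, which I would prove as a short lemma on balanced words. Setting $u = Sched({}^\bullet p)$ and $v = Sched(p^\bullet)$, Step 3 together with Theorem \ref{theooperation} gives $v = \rho^{1 - D(p)\alpha}(u)$. Writing $v(j) = u(j - s \bmod \mathsf{p})$ with $s \equiv 1 - D(p)\alpha \bmod \mathsf{p}$ and expanding $\sum_j Delay(p,j)$ via the formula $M_{periodic}(p) = u(\mathsf{p}) + [\rho(u) < v]$ of Step 4, a direct computation using the congruence $\alpha\mathsf{k} \equiv -1 \bmod \mathsf{p}$ would collapse the sum to exactly $D(p)$.

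Summing the per-place identity over $p \in c$ and invoking Theorem \ref{th_delay_cycle} applied to any ASAP execution of $G$ from $M_0$ (whose delay distribution is taken into $D$ by the latest-position redistribution of Algorithm 2, which preserves cycle sums) would give
\begin{equation*}
\sum_{p \in c}\sum_{j=1}^{\mathsf{p}} Delay(p,j) \;=\; \sum_{p \in c} D(p) \;=\; M_0(c)\,\mathsf{p} - L(c)\,\mathsf{k}.
\end{equation*}
Equating the two expressions for the total formal delay on $c$ would yield $M_{periodic}(c) = M_0(c)$, and the Desel--Esparza criterion would conclude. The hardest step is the per-place delay identity: the cycle-level accounting is routine, but verifying that the formal delays in each place sum to exactly $D(p)$ demands careful manipulation of rotations of balanced words and of the specific encoding of $M_{periodic}(p)$.
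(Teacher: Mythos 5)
Your outer shell is the same as the paper's (reduce to $M_{periodic}(c)=M_0(c)$ for every cycle via the Desel--Esparza mutual-reachability criterion), but your inner argument is genuinely different. The paper computes $M_{periodic}(c)$ \emph{explicitly}: Lemma \ref{lem_suff} counts the $\sI$s in suffixes of balanced words, Lemma \ref{lem_reach_simple} shows the cycle count is $\lceil L(c)\mathsf{k}/\mathsf{p}\rceil$ when all delays sit in one place and identifies this with $M_0(c)$ via the $\mathds{N}$-equalization inequalities, and Lemma \ref{lem_reach_general} reduces the general delay distribution to that case by a shifting argument. You instead double-count the formal delays of $Exec_{periodic}$ over one period: the cycle-level telescoping gives $\mathsf{p}M_{periodic}(c)-L(c)\mathsf{k}$, while $D$ (being a spatial distribution of delays in the sense of Definition \ref{def_sdd}, a property Algorithm 2's redistribution indeed preserves) gives $M_0(c)\mathsf{p}-L(c)\mathsf{k}$. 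This is an attractive reorganization: it never needs the explicit value $\lceil L(c)\mathsf{k}/\mathsf{p}\rceil$, it absorbs the paper's delay-shifting Lemma \ref{lem_reach_general} into a trivial summation over the cycle, and it uses the $\mathds{N}$-equalization only implicitly (through the bound $D(p)<\mathsf{k}$ that makes the encoding $M_{periodic}(p)=u(\mathsf{p})+[\rho(u)<v]$ meaningful) rather than as an arithmetic input.

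The catch is that the entire burden now rests on the per-place identity $\sum_{j=1}^{\mathsf{p}}Delay(p,j)=D(p)$, which you assert ``would collapse'' under a direct computation but do not carry out. This identity is not routine: it amounts to proving that $Exec_{periodic}$ is \emph{based on} $D$ in the sense of Definition \ref{def_sdd} at the granularity of individual places, and unwinding it leads to sums of the form $\mathsf{p}\,M_{periodic}(p)-\mathsf{k}+\sum_{j}\bigl(|prefix(u,j)|_1-|prefix(v,j)|_1\bigr)$ with $v=\rho^{1-D(p)\alpha}(u)$, i.e.\ exactly the kind of floor-function and rotation bookkeeping that the paper's Lemma \ref{lem_suff} and the case analysis of Lemma \ref{lem_reach_simple} perform. (It checks out for $D(p)=0$, where $v=\rho(u)$ forces every term to vanish, but the general case with the $[\rho(u)<v]$ correction is where all the subtlety of the paper's Case 1/Case 2 distinction reappears.) So the proposal is a sound and arguably cleaner architecture, but as written it defers rather than discharges the technical core; to be a complete proof you must actually establish that per-place lemma.
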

\begin{proof}
According to \cite{deselEsparza}, both markings are mutually reachable if and only if for each cycle of the MG, the two markings have the same number of tokens. Now, let us prove that $M_{periodic}$ and $M_0$ respect this condition.

First, Lemma \ref{lem_reach_simple} considers that all the delays of a cycle are assembled in the same place and proves that the condition holds. Lemma \ref{lem_reach_simple} requires the Lemma \ref{lem_suff}. Then Lemma \ref{lem_reach_general} generalizes Lemma \ref{lem_reach_simple} to any allocation of delays in a cycle.
\end{proof}

If all the delays are assembled in the same place $p$, $M_{periodic}(c)$ is equals to the number of $\sI$s in the suffix of length $L(c)$ of $Sched(p^\bullet)$ because the schedules are, in such a case, elementary rotations of the previous ones and the bit of index $\mathsf{p}$ says whether a token is there in the output place. We have seen in Section \ref{subsection_defBBW} that the number of $\sI$s in a factor of a balanced binary word of length $L(c)$ is either $\lfloor {L(c)*|u|_1 /|u|}\rfloor$ or $\lceil {L(c)*|u|_1 /|u|}\rceil$. Lemma \ref{lem_suff} proves that if the suffix of length $L(c)$ has $\lfloor {L(c)*|u|_1 /|u|}\rfloor$ $\sI$s, $p$ is currently delaying a token. Otherwise, $p$ is not. Consequently, the number of tokens in $c$ is always $\lceil {L(c)*|u|_1 /|u|}\rceil$. Lemma \ref{lem_reach_simple} concludes that if the MG is equalized, $M_0(c)=\lceil {L(c)*|u|_1 /|u|}\rceil$ also.

\begin{lem}[Suffixes and lexicographic order in $\mathds{S}_\mathsf{k}^\mathsf{p}$]
\label{lem_suff}
Let $u\in \mathds{S}_\mathsf{k}^\mathsf{p}$ and $j, l\in \mathds{N}$ such that $0<j\le l$ and $\mathsf{k}>j*\mathsf{p}-\mathsf{k}*l\ge 0$. We note $n=j*\mathsf{p}-\mathsf{k}*l$.

There exists $n$ balanced binary words $v\in O(u)$ such that $|suffix(v,l)|_1=\lfloor l*\mathsf{k}/\mathsf{p}\rfloor$ ($suffix(v,l)$ is the suffix of $v$ of length $l$). Moreover, these $n$ words are the highest according to the lexicographic order.
\end{lem}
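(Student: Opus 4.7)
My plan is to split the proof into two logically independent parts: a counting argument pinning down the cardinality $n$, then a structural argument identifying those $n$ rotations as the lex-maximal ones. For the counting, I would sum $|\mathrm{suffix}(v,l)|_1$ over the $\mathsf{p}$ rotations of $u$ (in the primitive case, the $\mathsf{p}$ distinct elements of $O(u)$). Each of the $\mathsf{k}$ occurrences of $\sI$ in $u$ lies in exactly $l$ circular factors of length $l$ of $u^\omega$, so the total equals $l\mathsf{k}$. Section \ref{subsection_defBBW} recalls that every factor of length $l$ of a balanced word has either $\lfloor l\mathsf{k}/\mathsf{p}\rfloor$ or $\lceil l\mathsf{k}/\mathsf{p}\rceil$ ones, so if $x$ rotational suffixes realize the floor value, the linear equation $x\lfloor l\mathsf{k}/\mathsf{p}\rfloor + (\mathsf{p}-x)\lceil l\mathsf{k}/\mathsf{p}\rceil = l\mathsf{k}$ yields $x = \mathsf{p}\lceil l\mathsf{k}/\mathsf{p}\rceil - l\mathsf{k}$, which matches $n = j\mathsf{p}-\mathsf{k}l$ since the hypothesis $0\le j\mathsf{p}-\mathsf{k}l<\mathsf{k}$ forces $j = \lceil l\mathsf{k}/\mathsf{p}\rceil$.

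The subtle part, and the main obstacle, is to identify those $x$ rotations as the lex-maximal ones. Here I would exploit Theorem \ref{theooperation}, which realises the transposition $\tau$ on $\mathds{S}_\mathsf{k}^\mathsf{p}$ as an iterate of $\rho$. By the definition of $\tau$, together with the characterisations $\inf(\mathds{S}_\mathsf{k}^\mathsf{p}) = 0\cdot w\cdot 1$ and $\sup(\mathds{S}_\mathsf{k}^\mathsf{p}) = 1\cdot w\cdot 0$, $\tau$ strictly decreases any balanced word in lex order, except at $\inf$, where the boundary clause of Definition \ref{transposition} sends it to $\sup$. In the primitive case $\tau$ has order $\mathsf{p}$ and hence acts as a single $\mathsf{p}$-cycle on $\mathds{S}_\mathsf{k}^\mathsf{p}$; combining with the monotonicity just observed forces $\tau$ to send each $v_r$ to its immediate lex-predecessor (with $v_0\mapsto v_{\mathsf{p}-1}$) in the lex-ordered enumeration $v_0 < v_1 < \cdots < v_{\mathsf{p}-1}$ of the orbit. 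Iterating and using $\alpha^{-1}\equiv -\mathsf{k}\bmod \mathsf{p}$ yields the explicit bijection $\rho^s(v_0) = v_{-s\mathsf{k}\bmod \mathsf{p}}$ between rotation shifts and lex ranks.

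To finish, I would express $\inf(\mathds{S}_\mathsf{k}^\mathsf{p})$ in mechanical-word form, $v_0(j) = \lfloor j\mathsf{k}/\mathsf{p}\rfloor - \lfloor (j-1)\mathsf{k}/\mathsf{p}\rfloor$, and telescope the suffix sum to obtain
\begin{equation*}
|\mathrm{suffix}(\rho^s(v_0), l)|_1 = \lfloor (\mathsf{p}-s)\mathsf{k}/\mathsf{p}\rfloor - \lfloor (\mathsf{p}-s-l)\mathsf{k}/\mathsf{p}\rfloor.
\end{equation*}
Writing $l\mathsf{k} = q\mathsf{p} + m$ with $0 < m < \mathsf{p}$, a brief case check on the floor difference shows this quantity equals $q = \lfloor l\mathsf{k}/\mathsf{p}\rfloor$ iff $s\mathsf{k}\bmod \mathsf{p} \le \mathsf{p}-m$, which via the bijection of the previous paragraph is equivalent to the lex rank of $\rho^s(v_0)$ being at least $m$; these are exactly the top $\mathsf{p}-m = n$ rotations, as claimed. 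The non-primitive case $\gcd(\mathsf{k},\mathsf{p})>1$ reduces to the primitive one by the factorisation $u = v^x$ recalled in Section \ref{subsection_defBBW}, after rescaling the parameters $(\mathsf{k},\mathsf{p},l,n)$ accordingly.
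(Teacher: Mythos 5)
Your proof is correct and reaches the stated conclusion, but its second half follows a genuinely different route from the paper. For the counting, both arguments are the same in spirit; your version, which double-counts occurrences of $\sI$ over all $\mathsf{p}$ circular windows of length $l$, is actually the more robust one, since the paper slices $u^l$ into $\mathsf{p}$ blocks of length $l$ and asserts that these blocks are pairwise distinct and exhaust the suffixes of $O(u)$ --- which only holds when $\gcd(l,\mathsf{p})=1$. For the lexicographic part the paper argues in two lines: a floor-valued suffix forces a ceiling-valued complementary prefix of length $\mathsf{p}-l$, and among balanced words of the same slope, more occurrences of $\sI$ in a fixed-length prefix implies lexicographically greater (a fact which itself needs balancedness to justify). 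You instead construct the explicit bijection between rotation shifts and lexicographic ranks via Theorem \ref{theooperation} and the mechanical-word formula for $inf(\mathds{S}_\mathsf{k}^\mathsf{p})$. This is heavier machinery, but it buys an exact closed form for which rotations realize the floor value, which is precisely the correspondence that Lemma \ref{lem_reach_simple} later consumes when it compares $u$ with $\rho^{D(p)*\alpha}(u)$; the paper has to re-derive it there. Two small repairs to your sketch: the relation of Theorem \ref{theooperation} must be read as $\tau=\rho^{-\alpha}$ (the form actually established in its proof and consistent with the paper's worked examples; the displayed statement has the sign inverted), which is what your bijection $\rho^s(v_0)=v_{-s\mathsf{k}\bmod\mathsf{p}}$ requires; and your final criterion should be $(\mathsf{p}-s)\mathsf{k}\bmod\mathsf{p}\ge m$ rather than $s\mathsf{k}\bmod\mathsf{p}\le\mathsf{p}-m$ --- the two differ exactly at $s=0$, where your version wrongly admits $inf(\mathds{S}_\mathsf{k}^\mathsf{p})$ and would count $n+1$ words instead of $n$.
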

\begin{proof}
Consider the word $u^l$. By definition $slope(u^l)=slope(u)=\mathsf{k}/\mathsf{p}$. $u^l$ can be sliced in $\mathsf{p}$ factors of length $l$. Each factor is different from the others and matches with a suffix of length $l$ of $v\in O(u)$. If the number of factors containing $\lfloor l*\mathsf{k}/\mathsf{p}\rfloor$ $\sI$s is different from $n$, $slope(u^l)$ cannot be $\mathsf{k}/\mathsf{p}$.

Moreover, if $|suffix(v,l)|_1=\lfloor l*\mathsf{k}/\mathsf{p}\rfloor$, $|prefix(v,\mathsf{p}-l)|_1=\mathsf{k}-\lfloor l*\mathsf{k}/\mathsf{p}\rfloor$. So if 
$|suffix(v,l)|_1=\lceil l*\mathsf{k}/\mathsf{p}\rceil$, $|prefix(v,\mathsf{p}-l)|_1=\mathsf{k}-\lceil l*\mathsf{k}/\mathsf{p}\rceil$. A word with more $\sI$s in its prefix is higher than another with less $\sI$s according to the lexicographic order.
\end{proof}

\begin{lem}[Reachability of $M_{periodic}$ from $M_0$ in the simple case]
\label{lem_reach_simple}
Let $c$ be a cycle of $G$ such that all the delays occurring in $c$ are assembled in the place $p$.
We have $M_{periodic}(c)=M_0(c)$.
\end{lem}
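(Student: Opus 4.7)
The plan is to compute $M_{periodic}(c)$ place by place and then identify two dichotomies on the orbit $O(u)$ that turn out to match perfectly. First I would observe that, since every place of $c$ other than $p$ has $D=0$, Step~3 relates consecutive transition schedules along $c$ by plain unitary rotation. Setting $u = Sched(p^\bullet)$ and listing the transitions of $c$ in cyclic order starting at $p^\bullet$, the $i$-th transition carries the schedule $\rho^i(u)$ for $0 \le i < L(c)$.

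Next I would apply the Step~4 formula place by place. For any $q \in c \setminus \{p\}$, the successor schedule is exactly $\rho$ of the predecessor, so the indicator term vanishes and $M_{periodic}(q) = Sched({}^\bullet q)(\mathsf{p})$; for $p$ itself, the predecessor is $\rho^{L(c)-1}(u)$ and the successor is $u$, giving $M_{periodic}(p) = \rho^{L(c)-1}(u)(\mathsf{p}) + [\rho^{L(c)}(u) < u]$. Collecting the last bits of $u, \rho(u), \ldots, \rho^{L(c)-1}(u)$ as the suffix of $u$ of length $L(c)$ yields
\[
M_{periodic}(c) \;=\; \bigl|\mathrm{suffix}(u,L(c))\bigr|_1 \;+\; \bigl[\rho^{L(c)}(u) < u\bigr].
\]

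Now I would analyze the two summands separately using the delay count $n = D(p) = M_0(c)\mathsf{p} - L(c)\mathsf{k}$, which satisfies $0 \le n < \mathsf{k}$ by Lemma~\ref{lemma_nequa}. Lemma~\ref{lem_suff} applied with $l=L(c)$, $j=M_0(c)$ partitions $O(u)$: the $n$ lex-highest elements have $|\mathrm{suffix}|_1 = \lfloor L(c)\mathsf{k}/\mathsf{p}\rfloor = M_0(c)-1$, and the remaining $\mathsf{p}-n$ have $|\mathrm{suffix}|_1 = M_0(c)$. For the indicator, the consistency calculation already carried out inside Lemma~\ref{lem_schedaffectation} gives $L(c) \equiv n\alpha \pmod{\mathsf{p}}$, so $\rho^{L(c)}(u) = \rho^{n\alpha}(u)$; by Theorem~\ref{theooperation}, $\rho^\alpha$ acts on $O(u)$ as the unit lex-successor, whence $\rho^{n\alpha}$ advances $u$ by $n$ positions in the lex order cyclically, producing a lex-lower word iff that advance wraps, iff $u$ lies in the top $n$ of $O(u)$.

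The two classifications therefore coincide, and a two-line case analysis finishes the job: if $u$ is in the top $n$, the sum is $(M_0(c)-1) + 1 = M_0(c)$; otherwise it is $M_0(c) + 0 = M_0(c)$. The delicate point — and the one to verify with care — is precisely that the top-$n$ partition delivered by Lemma~\ref{lem_suff} is the same as the wrap-set of $\rho^{n\alpha}$; that alignment is forced by the defining congruence $-\mathsf{k}\alpha \equiv 1 \pmod{\mathsf{p}}$ combined with the cycle consistency $L(c) \equiv n\alpha \pmod{\mathsf{p}}$, which is where essentially all of the arithmetic content of the lemma lives.
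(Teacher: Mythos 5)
Your proof is correct and follows essentially the same route as the paper's: reduce $M_{periodic}(c)$ to $|\mathrm{suffix}(u,L(c))|_1$ plus an indicator, classify $O(u)$ by suffix count via Lemma~\ref{lem_suff}, evaluate the indicator via the fact that $\rho^\alpha$ steps through $O(u)$ in lexicographic order, and close with the $\mathds{N}$-equalization bound $0\le D(p)<\mathsf{k}$. You are merely more explicit than the paper on two points it leaves implicit — the place-by-place derivation of the formula from Step~4 and the congruence $L(c)\equiv D(p)\alpha \pmod{\mathsf{p}}$ imported from Lemma~\ref{lem_schedaffectation} — which is a welcome clarification rather than a different argument.
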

\begin{proof}
Let us call $u$ the schedule of $p^\bullet$.
The number of token in $c$ is $M_{periodic}(c)=\Sigma_{i=0}^{L(c)-1} u(\mathsf{p}-i) + [u>\rho^{D(p)*\alpha}u]$.

$\Sigma_{i=0}^{L(c)-1} u(\mathsf{p}-i)=|suffix(u,L(c)|_1$. Since $u$ is balanced, $\lfloor L(c)*\mathsf{k}/\mathsf{p}\rfloor\le |suffix(u,L(c)|_1 \le \lceil L(c)*\mathsf{k}/\mathsf{p}\rceil$.

{\bf Case 1}: if $|suffix(u,L(c))|_1=\lfloor L(c)*\mathsf{k}/\mathsf{p}\rfloor$, $u$ is one of the $D(p)$ highest word of $O(u)$ (Lemma \ref{lem_suff}). Consequently, $u>\rho^{D(p)*\alpha}u$ because a rotation of $\alpha$ increases the value of the word according to the lexicographic order but if the highest is reached, another rotation of $\alpha$ gives the lowest. So $[u>\rho^{D(p)*\alpha}u]=1$ and $M_{periodic}(c)=\lfloor L(c)*\mathsf{k}/\mathsf{p}\rfloor+1=\lceil L(c)*\mathsf{k}/\mathsf{p}\rceil$ (In the case $D(p)\ne 0$, $\mathsf{p}$ does not divide $\mathsf{k}*L(c)$).

{\bf Case 2}: if $|suffix(u,L(c)|_1=\lceil L(c)*\mathsf{k}/\mathsf{p}\rceil$, $u$ is not one of the $D(p)$ highest word of $O(u)$ (Lemma \ref{lem_suff}). Consequently, $[u>\rho^{D(p)*\alpha}u]=0$, and $M_{periodic}(c)=\lceil L(c)*\mathsf{k}/\mathsf{p}\rceil$ also.

{\bf Conclusion}: since $G$ is $\mathds{N}$-equalized, $M_0(c)/L(c)\ge \mathsf{k}/\mathsf{p} >M_0(c)/(L(c)+1)$. So $(\mathsf{k}*L(c)+\mathsf{k})/\mathsf{p} > M_0(c) \ge \mathsf{k}*l/\mathsf{p}$. By definition of the $\mathds{N}$-equalization, the solution always exists and is unique: $\lceil L(c)*\mathsf{k}/\mathsf{p}\rceil$.
\end{proof}

In Lemma \ref{lem_reach_simple}, a delay can occurs only in one place but in Lemma \ref{lem_reach_general}, every place can contain delays and they might be delaying a token in $M_{periodic}$. In this Lemma, we give the formula to compute $M_{periodic}$ from a place $p_0$ that we are going to consider as the first place of the cycle, then we prove that if a delay is shifted to the last place of the cycle, the number of tokens in the cycle will be the same. Thanks to this result, we can shift all the delays into the last place and conclude that the number of tokens found in Lemma \ref{lem_reach_simple} is applicable to the general case. The inertia of the shift operation on the number of tokens is proven by considering the last places of the cycle such that the first and the last of this sequence of places contain delays but none of the other in-between does. In such a case, the effect of the shift operation on the formula to compute $M_{periodic}$ can be analyzed locally.

\begin{lem}[Reachability of $M_{periodic}$ from $M_0$ in the general case]
\label{lem_reach_general}
For all cycle $c$, $M_{periodic}(c)=M_0(c)$.
\end{lem}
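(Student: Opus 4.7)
The plan is to reduce the general case to the single-place configuration already settled by Lemma \ref{lem_reach_simple}. I would introduce an elementary \emph{delay-shift} operation on the cycle $c$: pick two places $p_1,p_2 \in c$ such that $D(p_1) \ge 1$ and every place strictly between $p_1$ and $p_2$ along $c$ has zero delay, and transfer one unit of delay from $p_1$ to $p_2$. Since any admissible spatial distribution of delays on $c$ can be transformed, via a finite sequence of such shifts, into the distribution that concentrates all of $c$'s delays at a single place, it suffices to prove that $M_{periodic}(c)$ is invariant under one delay-shift. Once that invariance is in hand, Lemma \ref{lem_reach_simple} delivers $M_{periodic}(c) = \lceil L(c)\mathsf{k}/\mathsf{p}\rceil = M_0(c)$.

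For the formula, first write
\[
M_{periodic}(c) \;=\; \sum_{p\in c}\bigl(\,u_p(\mathsf{p}) + [\rho(u_p) < v_p]\,\bigr),
\]
with $u_p = Sched({}^\bullet p)$ and $v_p = Sched(p^\bullet)$. Using Step 3, every such $u_p,v_p$ is a rotation of a common seed word $w \in \mathds{S}_\mathsf{k}^\mathsf{p}$ whose spin is determined by the cumulative latency ($+1$ per traversed place) and cumulative delay ($-\alpha$ per unit of delay, by Theorem~\ref{theooperation}) along the path from the seed transition. I would then separate the contributions of $c$ into two sums: the ``last-bit'' sum $\sum_{p\in c} u_p(\mathsf{p})$, which counts the tokens deposited at the last instant, and the ``indicator'' sum $\sum_{p\in c} [\rho(u_p) < v_p]$, which counts the places currently delaying a token (Lemma~\ref{lemma_delaydiff}).

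Next, the core step is to check invariance under a single delay-shift from $p_1$ to $p_2$. Schedules of transitions strictly before $p_1$ or strictly after $p_2$ are unchanged, because their cumulative delay counts along $c$ are unaffected. Only the rotation spins of the transitions \emph{strictly between} $p_1$ and $p_2$ are modified, each by an additional $-\alpha$. Using Lemma~\ref{lem_suff} together with Theorem~\ref{theooperation}, this additional rotation shifts exactly one occurrence of $\sI$ by one position inside the windowed portion of $c$, which changes the last-bit sum of the intermediate transitions by at most $\pm 1$ and simultaneously flips exactly one of the indicators $[\rho(u_{p_1})<v_{p_1}]$, $[\rho(u_{p_2})<v_{p_2}]$ in the opposite direction. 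I would verify that in each configuration the two effects cancel, so $M_{periodic}(c)$ is preserved.

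Iterating, one collapses all delays of $c$ into a single place and applies Lemma~\ref{lem_reach_simple}, which closes the argument. The main obstacle I anticipate is the combinatorial bookkeeping in the invariance step: the indicators $[\rho(u)<v]$ are brittle and interact subtly with the seed word via Lemma~\ref{lem_suff}, so one must track carefully the threshold cases where a delay count crosses $0$ or $\mathsf{p}-\mathsf{k}$, and where the relevant transpositions reach the last bit of the schedule (as in Lemma~\ref{lemma_delaydiff}). Aside from that, the argument is a systematic telescoping along $c$.
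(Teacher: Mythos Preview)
Your proposal is correct and follows essentially the same route as the paper: define an elementary delay-shift along $c$ between two places separated only by zero-delay places, prove that $M_{periodic}(c)$ is invariant under one such shift by a case analysis showing that any change in the last-bit sum is exactly compensated by an opposite change in one of the indicator terms, and iterate until all delays sit in a single place so that Lemma~\ref{lem_reach_simple} applies. The paper carries out the same invariance argument with the shift always targeting the last place $p_{L(c)-1}$ of the cycle, and its three cases (A)/(A'), (B)/(B'), (C)/(C') are precisely the compensation you describe.
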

\begin{proof}
Let $c$ be a cycle of $G$. The places of $c$ are $\{p_0,p_1,...,p_{L(c)-1}\}$.
We note $u$ the schedule of the transition ${}^\bullet p_0$.

$M_{periodic}(c)=\Sigma_{i=0}^{L(c)-1} \Big( u(p-(i-(D(p_0)+...+D(p_{i}))*\alpha)) +$\\ $[\rho^{i+1-(D(p_0)+...+D(p_{i+1}))*\alpha}u>\rho^{i+1-(D(p_0)+...+D(p_{i}))*\alpha}u] \Big)$.

Let $i_0$ be such that $\forall i\in ]i_0,L(c)-1]$, $D(p_{i})=0$ and let us focus on the few last terms of this sum such that $i_0<i\le L(c)-1$ (In the worst case, $i_0=L(c)-2$ and only the last term of the sum is there). The following equality is going to be proved for these terms only:\\
$[\rho^{i_0-(D(p_0)+...+D(p_{i_0}))*\alpha}u>\rho^{i_0-(D(p_0)+...+D(p_{i_0-1}))*\alpha}u]$ (A)\\
$+\Sigma_{i=i_0}^{L(c)-1} u(p-(i-(D(p_0)+...+D(p_{i}))*\alpha))$ (B)\\
$+[u>\rho^{D(p_{L(c)-1})*\alpha}u]$ (C)\\
$=$\\
$[\rho^{i_0-(D(p_0)+...+D(p_{i_0}-1))*\alpha}u>\rho^{i_0-(D(p_0)+...+D(p_{i_0-1})-1)*\alpha}u]$ (A')\\
$+\Sigma_{i=i_0}^{L(c)-1} u(p-(i-(D(p_0)+...+D(p_{i})-1)*\alpha))$ (B')\\
$+[u>\rho^{(D(p_{L(c)-1})+1)*\alpha}u]$ (C').

There is only three cases to study to prove this property:
\begin{itemize}
	\item When (A) is equals to $1$  but (A') is equals to $0$, then the first term of (B) is equals to $0$ and the first term of (B') is equals to $1$.
				If the first place delays a token (A)=1 but not any more after the shift (A')=0, the token has been computed instead of being delayed and then 				 it appears in the next place (B')=1. All the other term of the sum are the same.
	\item When (C) is equals to $0$ but (C') is equals to $1$, the last term of (B) is equals to $1$ and the last term of (B') is equals to $0$.
				If the last place does not delay any token (C)=0 but does after the shift (C')=1, this token was in the last but one place (B)=1 and is now
				in the last one (B')=0. All the other term of the sum are the same.
	\item In every other possible cases, (A) equals (A'), (B) equals (B'), (C) equals (C').
\end{itemize}

Thanks to this property, we know that the number of tokens in $c$ is the same wherever are the delays in the cycle. So the result found in lemma \ref{lem_reach_simple} is applicable to the general case.
\end{proof}

\subsubsection{Validity of $Exec_{periodic}$ from $M_{periodic}$}

\begin{lem}[A step of execution from $M_{periodic}$]
\label{lem_one_step}
Let $M_1$ be the marking resulting from a step of ASAP execution from $M_{periodic}$,
$M_1'$ is the marking resulting from a step of $Exec_{periodic}$ from $M_{periodic}$.

Then, $M_1=M_1'$
\end{lem}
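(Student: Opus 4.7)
The plan is to reduce $M_1 = M_1'$ to showing that the same set of transitions is fired in both steps, because by Definition \ref{def_execmodel} the marking update is a deterministic function of the current marking and the firing set. Let $FT' = \{t \in T : Sched(t)(1) = 1\}$ be the firing set of the first step of $Exec_{periodic}$, and let $F_{M_{periodic}}$ be the set of transitions firable at $M_{periodic}$. I will prove $FT' = F_{M_{periodic}}$. Throughout, for each place $p$, write $u = Sched({}^\bullet p)$ and $v = Sched(p^\bullet)$; by Step 3 we have $v = \rho^{1 - D(p) \cdot \alpha}(u)$, and by Step 4 we have $M_{periodic}(p) = u(\mathsf{p}) + [\rho(u) < v]$.

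First I would prove $F_{M_{periodic}} \subseteq FT'$ by contrapositive. Suppose $Sched(t)(1) = 0$. Because $D$ is the latest delays position (Definition \ref{def_latestdelaypos}), there exists $p \in {}^\bullet t$ with $D(p) = 0$. For such a place $v = \rho(u)$, hence $v(1) = u(\mathsf{p})$ and $[\rho(u) < v] = 0$; so $M_{periodic}(p) = u(\mathsf{p}) = v(1) = Sched(t)(1) = 0$, and $t$ is not firable at $M_{periodic}$.

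For the reverse inclusion $FT' \subseteq F_{M_{periodic}}$, fix $t$ with $Sched(t)(1) = 1$ and an arbitrary $p \in {}^\bullet t$; I must show $M_{periodic}(p) \geq 1$. Here $v(1) = Sched(t)(1) = 1$. Split on the value of $u(\mathsf{p})$: if $u(\mathsf{p}) = 1$, the formula immediately yields $M_{periodic}(p) \geq 1$. Otherwise $u(\mathsf{p}) = 0$, so $\rho(u)$ begins with $0$ whereas $v$ begins with $1$; lexicographically $\rho(u) < v$, whence $[\rho(u) < v] = 1$ and again $M_{periodic}(p) \geq 1$.

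The main subtlety lies in the delayed-place case of the second inclusion: one must justify that whenever the consumer of a place is scheduled at instant $1$, either the producer already fired at the last instant of the preceding period (contributing $u(\mathsf{p})=1$) or exactly one token was carried over as a delay (detected by $[\rho(u) < v] = 1$, in line with Lemma \ref{lemma_delaydiff}). Conversely, the latest-delays-position hypothesis is indispensable for $F_{M_{periodic}} \subseteq FT'$, since it guarantees that a transition whose schedule is silent at instant $1$ can always be blamed on a delay-free input place, on which the tokenness is read off directly from the producer's last bit.
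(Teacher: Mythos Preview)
Your proposal is correct and follows essentially the same approach as the paper: reduce $M_1=M_1'$ to equality of the firing sets, then verify the two inclusions by reading off token presence from the formula $M_{periodic}(p)=u(\mathsf{p})+[\rho(u)<v]$. Your version is in fact more careful than the paper's terse argument, since you make explicit the role of the latest-delays-position hypothesis in the inclusion $F_{M_{periodic}}\subseteq FT'$ (picking a delay-free input place), whereas the paper's chain of equivalences is written as if it held per input place.
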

\begin{proof}
In an ASAP execution, a transition $t$ executes if and only if all the incoming places contains a token.
In $M_{periodic}$, the place ${}^\bullet t$ contains a token if and only if $Sched({}^\bullet {}^\bullet t)(\mathsf{p})=1$ or $[\rho^1(Sched({}^\bullet {}^\bullet t))<Sched(t)]$.
In the first step of $Exec_{periodic}$, a transition $t$ executes if and only if $Sched(t)(1)=1 \Leftrightarrow \rho^{-1}(Sched(t))(\mathsf{p})=1 \Leftrightarrow Sched({}^\bullet{}^\bullet t)(\mathsf{p})=1$ or that $[\rho^1(Sched({}^\bullet{}^\bullet t))<Sched(t)]$. The condition of execution are the same. If the same transitions are fired according to an ASAP execution or $Exec_{periodic}$, then the resulting markings are the same.
\end{proof}

\begin{thm}[Validity of $Exec_{periodic}$]
\label{lem_tokengame}
The ASAP execution of $G$ from the marking $M_{periodic}$ is $Exec_{periodic}$.
\end{thm}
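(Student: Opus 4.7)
The plan is to promote Lemma \ref{lem_one_step} from a one-step statement to an all-steps statement by induction on the step index $i$, using the rotational structure of the balanced schedules as the inductive invariant.

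First, I would fix the invariant explicitly. At step $i\ge 0$, let $M_i$ denote the marking reached after $i$ ASAP steps from $M_{periodic}$, and let $Sched_i(t)=\rho^{-i}(Sched(t))$ denote the tail of each transition's schedule starting at position $i+1$. The invariant to establish is: (i) the family $(Sched_i(t))_{t\in T}$ is again a valid output of Step 3 for $G$ with the same delay distribution $D$ (that is, the rotational offsets between adjacent transitions prescribed by Algorithm 3 are preserved), and (ii) $M_i$ equals the marking $compute\_M_{periodic}$ of Step 4 applied to $(Sched_i(t))_{t\in T}$. Once this invariant is in place, Lemma \ref{lem_one_step} applied at stage $i$ immediately yields $M_{i+1}=M_i'$ and shows that the $(i{+}1)$-th step of the ASAP execution coincides with the $(i{+}1)$-th step of $Exec_{periodic}$.

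The base case $i=0$ is immediate: $Sched_0(t)=Sched(t)$, and $M_0=M_{periodic}$ by construction, which is exactly Lemma \ref{lem_one_step}. For the inductive step, (i) follows because simultaneously applying $\rho^{-1}$ to every transition's schedule preserves the offsets $\rho^{1-D({}^\bullet t')\alpha}$ generated by Step 3, and a rotation of a word in $\mathds{S}_\mathsf{k}^\mathsf{p}$ stays in $\mathds{S}_\mathsf{k}^\mathsf{p}$. For (ii), let $u=Sched_{i-1}({}^\bullet p)$, $v=Sched_{i-1}(p^\bullet)$, $u'=\rho^{-1}(u)$, $v'=\rho^{-1}(v)$. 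Using the token-flow equation $M_i(p)=M_{i-1}(p)+u(1)-v(1)$ together with the inductive formula $M_{i-1}(p)=u(\mathsf{p})+[\rho(u)<v]$, the goal reduces to the identity
\begin{equation*}
u(\mathsf{p})+[\rho(u)<v]+u(1)-v(1) \;=\; u'(\mathsf{p})+[\rho(u')<v'].
\end{equation*}
Because $u'(\mathsf{p})=u(\mathsf{p}-1)$ and the relative rotational distance between $u$ and $v$ is the same as between $u'$ and $v'$, this identity is a bookkeeping fact about balanced words under simultaneous rotation, handled by a small case analysis on the four combinations of $u(1),v(1)\in\{\sO,\sI\}$.

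The main obstacle is precisely this last identity: verifying that the Boolean indicator $[\rho(u)<v]$ that tracks a delayed token behaves correctly under one simultaneous rotation, especially at the boundary where the transposition in Lemma \ref{lemma_delaydiff} hits the last bit. I would dispatch it by invoking Lemma \ref{delayskp} and Theorem \ref{theooperation} to rewrite $v=\tau^{D(p)}(\rho(u))$, after which the change of $[\rho(u')<v']$ when shifting the window by one position can be computed bit-by-bit. With that identity in hand, induction closes, and we conclude that the ASAP execution from $M_{periodic}$ fires exactly the transitions prescribed by $Exec_{periodic}$ at every step, hence equals $Exec_{periodic}$.
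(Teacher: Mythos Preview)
Your approach is essentially the same as the paper's. The paper argues that Step~3 and Step~4 (and hence Lemmas~\ref{lem_schedaffectation}, \ref{lem_reachmark}, \ref{lem_one_step}) hold for \emph{any} seed word in $\mathds{S}_\mathsf{k}^\mathsf{p}$; since all such words are rotations of one another, re-seeding with $\rho^{-i}(u)$ produces exactly your family $(Sched_i(t))$ and the corresponding Step~4 marking, so Lemma~\ref{lem_one_step} can be applied at every stage---which is precisely your induction. The only difference is presentational: you unpack explicitly the bookkeeping identity $u(\mathsf{p})+[\rho(u)<v]+u(1)-v(1)=u'(\mathsf{p})+[\rho(u')<v']$ that the paper leaves implicit in the sentence ``Step~4 gives all the successive markings of $Exec_{periodic}$ when Step~3 is initiated with, successively, all the words of $\mathds{S}_\mathsf{k}^\mathsf{p}$.''
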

\begin{proof}
Step 3 is based on the affectation of a schedule by a random balanced binary word from $\mathds{S}_\mathsf{k}^\mathsf{p}$. 
The lemmas \ref{lem_schedaffectation}, \ref{lem_reachmark} and Lemma \ref{lem_one_step} also hold for any other balanced binary word from $\mathds{S}_\mathsf{k}^\mathsf{p}$. Since all the words of $\mathds{S}_\mathsf{k}^\mathsf{p}$ are equivalent by rotation, Step 4 gives all the successive markings of $Exec_{periodic}$ when the Step 3 is initiated with, successively, all the words of $\mathds{S}_\mathsf{k}^\mathsf{p}$. For each of these marking, Lemma \ref{lem_one_step} proves that the next marking is reachable through ASAP execution. Consequently, from $M_{periodic}$, and after $\mathsf{p}$ steps of execution, $Exec_{periodic}$ reaches $M_{periodic}$.
\end{proof}

\subsection{Extension to the simply connected case}
\label{ssec_simpyconnected}
As we have seen in proposition \ref{pro_boundednotasap}, one cannot guaranty that an ASAP and bounded
execution exists for a given simply connected MG. Since a System-on-Chip cannot be designed with unbounded memories, the extension of the proposed algorithm to simply connected case preserves the bounded property at the expense of the ASAP property. The maximum execution rate is still preserved but the minimality of the size of places is altered.

A simply connected MG can be transformed into a strongly connected one by adding feedback paths. Thus, the proposed algorithm can be applied. To do so, we add to the MG some feedback paths which bind all the components together. The functional behavior of the system will be preserved but its scheduling will be over-constrained by the added feedback paths i.e. adding different feedback paths imply a different execution computed by the proposed algorithm. These feedback paths act as synchronization barriers.

There is different algorithmic solution to realize the transformation; however, the added feedback paths should not create a cycle with a throughput inferior to the critical one in the original MG. Otherwise, the maximal execution rate will not be achieved. It is easy to prove that the marking and the latency of the added feedback paths can always be adjusted so that the created cycles have a non-critical throughput.

The minimality of the size of the places is guaranteed for the original SCCs, but the size of the places on the original DAC depends upon the added feedback paths. One may find another set of feedback paths such that the size of places on the original DAC is less. We have not yet studied this optimization.

\paragraph{Open MG}
If a simply connected MG is open, one can consider that the system has global input(s) and output(s). In order to schedule the MG, it is transformed in a strongly connected one.
Consequently, the MG becomes closed.
The run of the proposed algorithm shall return a schedule for every source and sink. The schedule of 
a sink says when the system produces an output token and the schedule of a source says when the system
consumes an input token. Thus, the concerned input token has to be present when required. In \cite{millo08}, we state that the execution rates of the feeder and eater have to be the same in order to calibrate the capacity of the ``interconnection" place with a finite value and thus ensure on-demand token availability. In \cite{CMPP08}, the authors study thoroughly the sizing of buffer between clocked systems.

\paragraph{The AES example}
Figure \ref{figure_aes} presents an implementation of the AES encryption standard.
The MG has been represented using K-Passa (K-Periodic Asap Static Schedule Analyser) \cite{kpassa}.
K-Passa implements the proposed algorithm but also the $\mathds{N}$-equalization.
The circles represent the transitions of the system. The arrows represent the sequences ($arc \to place \to arc$) in-between two transitions. The two left most transitions called {\em key} and {\em word} are sources (the local loop has been added for simulation purpose). The central transition called {\em output word} is a sink. The schedule attached to each transition is the one computed by the proposed algorithm. The guided initialization has a length 1, then the behavior is $1$-periodic with a period $6$. Every place has a size one. The only place where one delay occurs is the one between {\em word} and {\em mux} (where a small square appears), however a size one is enough.

As one can see, the AES example is a simply connected graph. In order to run the proposed algorithm, two paths from the sink to each of the sources have been added to the system.

\begin{figure}[tbh]
 \begin{center}
   {\includegraphics[scale=0.57]{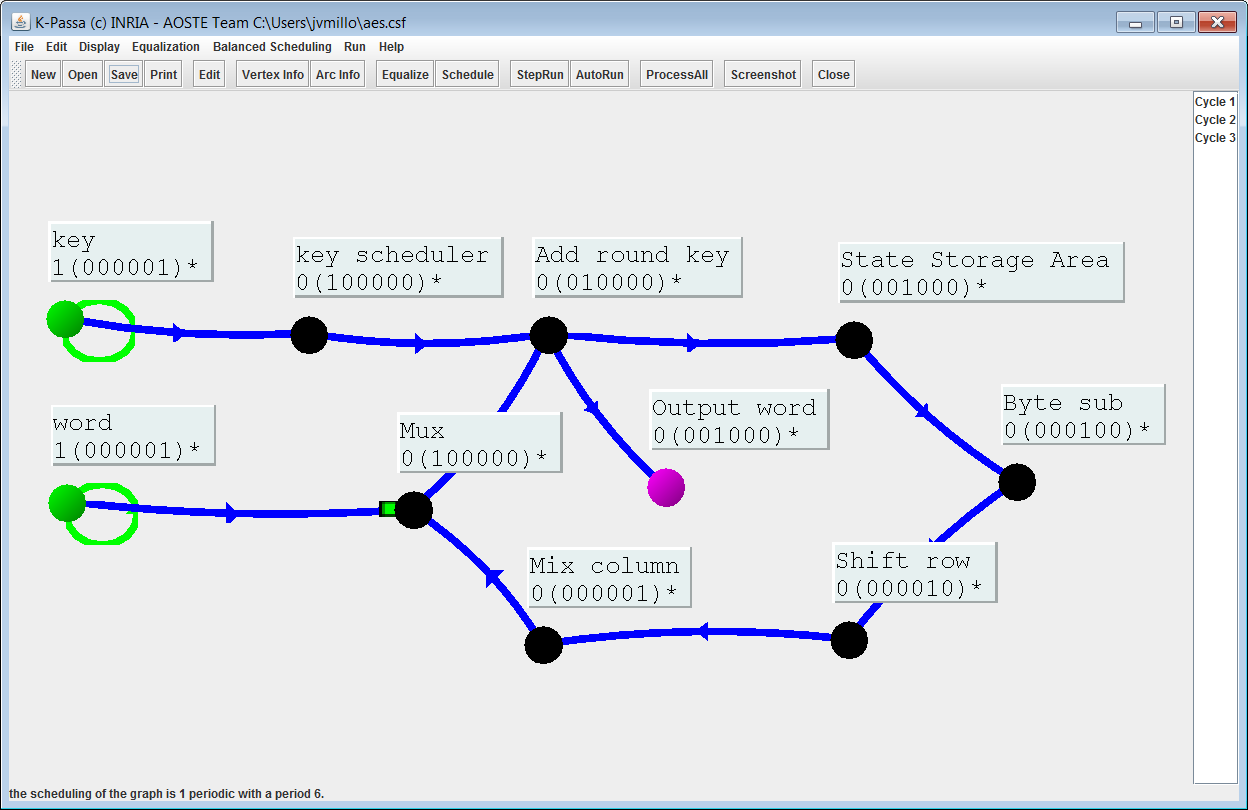}}
  \caption{The MG presents an implementation of the AES encryption standard.} 
  \label{figure_aes}
  \end{center}
\end{figure}

\section{Results and discussion}
\label{section_conclusion}
This paper proposes an algorithm to statically schedule any live and strongly connected MG with a throughput inferior or equals to one. The proposed algorithm computes the balanced ASAP execution where the execution rate is maximal and place sizes are minimal. Moreover, a transformation has been proposed to change a simply connected MG in a strongly connected MG such that the proposed algorithm can be applied.

In the domain to the System-on-Chip design, the proposed algorithm is used to schedule applications which are subject to the problem of long wire latency.
If we compare our approach to the latency insensitive design, this last is not as strict as our approach about the constraint on availability of data on global inputs. It is a purely dynamic solution but the cost for this dynamicity is the duplication of every data path in the circuit and the replacement of every simple register by a two-sized-register to manage the dynamic communication and computation protocol.
This difference makes our approach better for pure data flow system.

\section*{Acknowledgment}
This work has been supported by CIMPACA/SYS2RTL. The authors would like to thanks Benoit Ferrero for his help with the proofs, Anthony Coadou for his constructive remarks, and the anonymous reviewers for their suggestions who have led us in the right direction.

\bibliographystyle{plain}
\bibliography{RR-7891}

\end{document}